\documentclass[sigconf, nonacm]{acmart}

\newcommand\vldbdoi{XX.XX/XXX.XX}

\newcommand\vldbavailabilityurl{https://github.com/katekangKK/Auto-Relate.git}
\newcommand\vldbpagestyle{plain}

\usepackage{amsmath}
\usepackage{graphicx}
\usepackage{textcomp}
\usepackage{xcolor}
\usepackage{amsthm}
\usepackage{bbding}
\usepackage{xspace}
\usepackage{enumitem}
\usepackage{xcolor}
\usepackage{multirow}
\usepackage{tabularx}
\usepackage{hyperref}
\usepackage{dsfont}
\usepackage[ruled,vlined,linesnumbered]{algorithm2e}
\usepackage{diagbox}
\usepackage{subcaption}
\begin{document}

\title{Auto-Relate: A Unified Approach to Discovering Reliable Functional Relationships Leveraging Statistical Tests}

\newcommand{\kw}[1]{{\textsf{#1}}\xspace}
\newcommand{\eat}[1]{}
\newcommand{\tbf}{\kw{\textcolor{red}{XX}}}

\newcommand{\yeye}[1]{\textcolor{red}{#1}}

\newcommand{\indicator}{\ensuremath{\mathds{1}}\xspace}

\newcommand{\code}[1]{\texttt{\small #1}}
\newcommand{\codeq}[1]{{\tt {\small ``#1''}}}

\newcommand{\eop}{\hspace*{\fill}\mbox{$\Box$}}     

\newcounter{example}
\renewcommand{\theexample}{\arabic{example}}
\renewenvironment{example}{
        \vspace{1.2ex}
        \refstepcounter{example}
        {\noindent\bf Example \theexample:}}{\eop\vspace{1.2ex}}

\newcounter{definition}
\renewcommand{\thedefinition}{\arabic{definition}}
\renewenvironment{definition}[1][]{
    \vspace{1.5ex}
    \refstepcounter{definition}
    {\noindent\bf Definition {\bf \thedefinition}%
    \ifx&#1&\else\ (#1)\fi:}}

\renewenvironment{theorem}[1][]{\begin{em}
        \refstepcounter{theorem}
        {\vspace{1ex} \noindent\bf Theorem \thetheorem%
        \ifx&#1&\else\ (#1)\fi:}}{
        \end{em}\eop\vspace{1ex}}

\renewenvironment{lemma}[1][]{\begin{em}
        \refstepcounter{theorem}
        {\vspace{1.5ex}\noindent\bf Lemma \thelemma%
        \ifx&#1&\else\ (#1)\fi:}}{
        \end{em}\eop\vspace{1.5ex}}
        
\renewenvironment{proof}{
        \vspace{1ex}
        {\noindent\bf Proof:}}{\eop\vspace{1ex}}
        
\newcommand{\autorelate}{\kw{Auto}-\kw{Relate}}
\newcommand{\FR}{\kw{FR}}
\newcommand{\FRs}{\kw{FRs}}

\newcommand{\AR}{\kw{AR}}
\newcommand{\ARs}{\kw{ARs}}
\newcommand{\ST}{\kw{ST}}
\newcommand{\STs}{\kw{STs}}
\newcommand{\FD}{\kw{FD}}
\newcommand{\FDs}{\kw{FDs}}

\newcommand{\AFD}{\kw{AFD}}
\newcommand{\AFDs}{\kw{AFDs}}
\newcommand{\CFD}{\kw{CFD}}
\newcommand{\CFDs}{\kw{CFDs}}
\newcommand{\MD}{\kw{MD}}
\newcommand{\MDs}{\kw{MDs}}
\newcommand{\REE}{\kw{REE}}
\newcommand{\REEs}{\kw{REEs}}
\newcommand{\DC}{\kw{DC}}
\newcommand{\DCs}{\kw{DCs}}

\newcommand{\prauc}{\kw{PR}-\kw{AUC}}
\newcommand{\realAR}{\kw{Real}-\kw{AR}}
\newcommand{\realST}{\kw{Real}-\kw{ST}}
\newcommand{\realFD}{\kw{Real}-\kw{FD}}
\newcommand{\rwd}{\kw{RWD}}

\newcommand{\warn}[1]{{\color{red}{#1}}}
\newcommand{\revise}[1]{{\color{blue}{#1}}}

\newcommand{\szu}{$^{1}$}
\newcommand{\msr}{$^{2}$}
\newcommand{\zhbit}{$^3$}
\newcommand{\sics}{$^{4}$}

\author{Ziyan Han\szu, Yeye He\msr$^*$, Shuyuan Kang\zhbit, Min Xie\sics, Weiwei Cui\msr, Song Ge\msr, Haidong Zhang\msr, Dongmei Zhang\msr, Surajit Chaudhuri\msr, Rui Mao\szu$^*$, Jianbin Qin\szu$^*$}


\affiliation{%
  \institution{
    \szu Shenzhen University \hspace{3.8ex}
    \msr Microsoft Research \hspace{3.8ex}
    \zhbit Beijing Institute of Technology, Zhuhai \\
    \sics Shenzhen Institute of Computing Sciences
  }
}

\email{
{hanzy, mao, qinjianbin}@szu.edu.cn, 
{yeyehe, weiweicu, songge, haizhang, dongmeiz, surajitc}@microsoft.com,
}
\email{
shuyuankang@bitzh.edu.cn,
xiemin@sics.ac.cn
}


\eat{
\settopmatter{authorsperrow=4}

\author{Ziyan Han}
\affiliation{\institution{Shenzhen University}}
\email{hanzy@szu.edu.cn}

\author{Yeye He}
\affiliation{\institution{Microsoft Research}}
\email{yeyehe@microsoft.com}

\author{Shuyuan Kang}
\affiliation{\institution{Beijing Institute of Technology, Zhuhai}}
\email{shuyuankang@bitzh.edu.cn}

\author{Min Xie}
\affiliation{\institution{Shenzhen Institute of Computing Sciences}}
\email{xiemin@sics.ac.cn}

\author{Weiwei Cui}
\affiliation{\institution{Microsoft Research}}
\email{weiweicu@microsoft.com}

\author{Song Ge}
\affiliation{\institution{Microsoft Research}}
\email{songge@microsoft.com}

\author{Haidong Zhang}
\affiliation{\institution{Microsoft Research}}
\email{haizhang@microsoft.com}

\author{Dongmei Zhang}
\affiliation{\institution{Microsoft Research}}
\email{dongmeiz@microsoft.com}

\author{Surajit Chaudhuri}
\affiliation{\institution{Microsoft Research}}
\email{surajitc@microsoft.com}

\author{Rui Mao$^*$}
\affiliation{\institution{Shenzhen University}}
\email{mao@szu.edu.cn}

\author{Jianbin Qin$^*$}
\affiliation{\institution{Shenzhen University}}
\email{qinjianbin@szu.edu.cn}
}

\begin{abstract}
Tables in spreadsheets, computational notebooks, and databases often contain rich inter-column relationships.
Yet these relationships are typically implicit and are often lost when tables are exported to standard formats.
Recovering them can benefit downstream tasks, including table understanding, data quality improvement, and provenance analysis. 
However, simply mining relationships that hold on an observed table is insufficient, as many are spurious due to coincidence, redundancy, or limited data diversity.

In this paper, we introduce \emph{functional relationships} (\FRs) as a unified notion for inter-column relationships in tables, subsuming arithmetic relationships, string transformations, and functional dependencies. 
We characterize \FR reliability through four complementary criteria: accuracy, atomicity, stability, and integrity. 
Guided by these criteria, we propose \autorelate, a mine-then-verify framework that first generates accurate candidate \FRs and then verifies the remaining reliability criteria through a Minimality Test, a Perturbation Test, and an Independence Test, respectively.
To further improve efficiency, we develop three optimization strategies, including 
a group-by lower bound for early rejection, a closed-form speed-up for arithmetic \FRs, and a binomial bound for statistically guided early termination.
We construct a large-scale benchmark suite from 58,679 real-world spreadsheets and relational tables, containing 6,414 ground-truth \FRs spanning all three \FR types. 
Extensive experiments against 18 baselines show that \autorelate consistently achieves the best performance, with an average \prauc of 0.87, 
59\% higher than the best competing baseline across all settings.\looseness=-1
\end{abstract}

\maketitle

\begingroup
\renewcommand\thefootnote{\fnsymbol{footnote}}
\footnotetext{* Corresponding authors}
\endgroup

\pagestyle{\vldbpagestyle}

\vspace{-1em}
\ifdefempty{\vldbavailabilityurl}{}{
\vspace{.3cm}
\begingroup\small\noindent\raggedright\textbf{PVLDB Artifact Availability:}\\
The source code and data have been made available at \url{\vldbavailabilityurl}.
\endgroup
}

\section{Introduction}
\label{sec:intro}

Tables are a fundamental data representation across spreadsheets (e.g., Excel or Google Sheets), computational notebooks (e.g., Jupyter), and relational databases.
In these environments, users routinely create derived columns from existing ones by writing spreadsheet formulas, data-transformation programs, or SQL expressions~\cite{gulwani2011automating,harris2011spreadsheet,he2018transform,yang2021auto}.
As a result, tables often contain rich inter-column relationships.
For example, numeric columns are often related by arithmetic formulas, textual columns are commonly related by string transformations, and categorical columns may exhibit deterministic dependency patterns.
Although ubiquitous in real-world tables, such relationships are often implicit rather than explicitly represented, and are typically not preserved when tables are exported to standard formats such as CSV or Parquet.
\looseness=-1

\eat{
Figures~\ref{fig:ex-AR} and~\ref{fig:ex-ST} illustrate two typical examples.
In Figure~\ref{fig:ex-AR}, relationships such as $\text{\code{F}} \times \text{\code{G}} = \text{\code{H}}$ and $\text{\code{H}} + \text{\code{K}} + \text{\code{N}} = \text{\code{O}}$ reveal how multiple columns are composed to produce derived numeric values.
In Figure~\ref{fig:ex-ST}, relationships such as concatenating columns \text{\code{B}}, \text{\code{C}}, and \text{\code{D}} into \text{\code{E}} capture string-level transformations.
These examples show that many table columns are not isolated attributes, but are tied together by deterministic or near-deterministic mappings.
}

\begin{figure*}
\centering
\includegraphics[width=\linewidth]{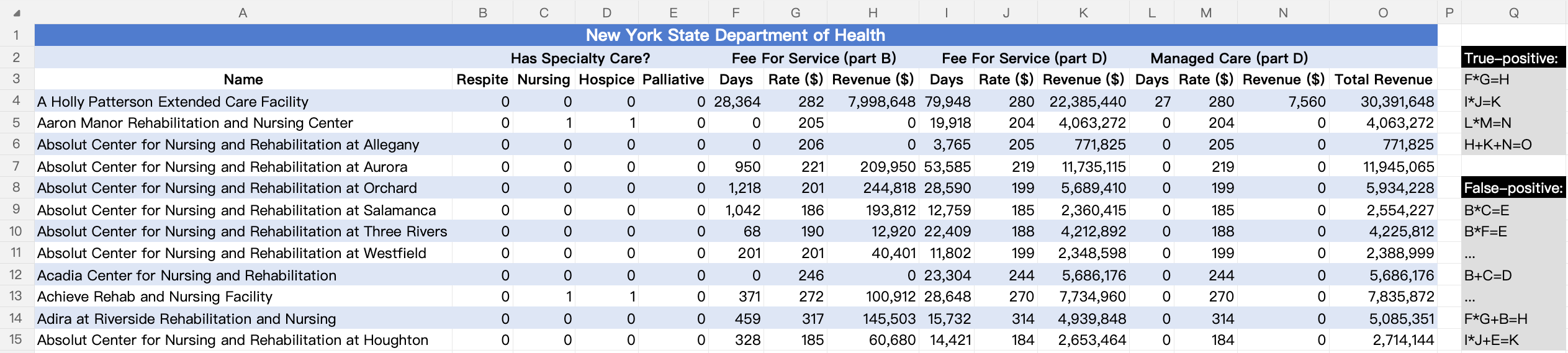}
\vspace{-0.5cm}
\caption{An example spreadsheet table with arithmetic relationships (\ARs).}
\label{fig:ex-AR}
\vspace{-0.1cm}
\end{figure*}

\begin{figure*}
\centering
\includegraphics[width=\linewidth]{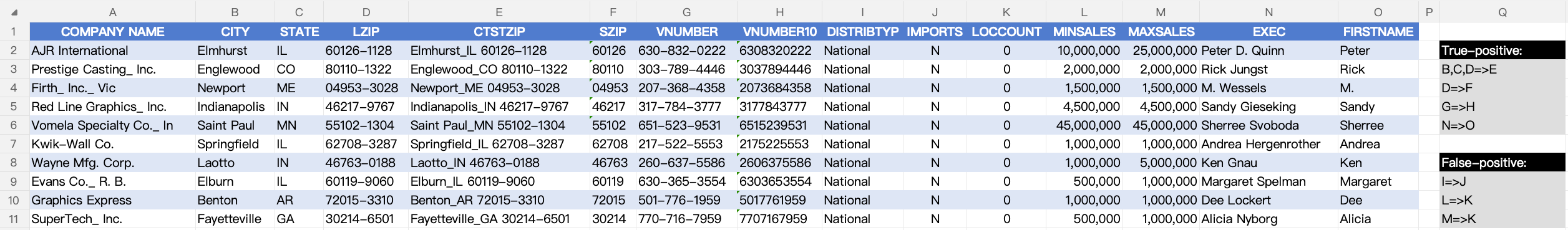}
\vspace{-0.5cm}
\caption{An example spreadsheet table with string transformations (\STs).}
\label{fig:ex-ST}
\vspace{-0.1cm}
\end{figure*}

Recovering such relationships from raw tables is practically valuable, benefiting a range of downstream applications, including table understanding~\cite{naumann2014data,limaye2010annotating}, data quality improvement~\cite{data-cleaning,data-cleaning-2}, data consistency maintenance~\cite{barowy2018excelint,panko1998we,koch2019metric,fan2008conditional}, and data provenance analysis~\cite{provenance-2, provenance-1}.
For table understanding, recovered relationships can reveal the latent structure of complex tables and help users interpret how columns are composed.
For data quality and consistency, they can serve as executable constraints for detecting, explaining, and repairing inconsistent values, and can support automatic propagation of updates to dependent cells or columns.
Moreover, they can aid provenance analysis by helping users trace how derived values are produced within or across tables.

However, discovering such relationships directly from table data is non-trivial. 
Simply finding relationships that hold on the observed table is insufficient.
Existing methods typically rely on observed-table signals such as exact satisfaction~\cite{huhtala1999tane,abedjan2014dfd}, low violation rate~\cite{kivinen1995approximate,giannella2004approximation,parciak2024measuring}, association strength~\cite{ilyas2004cords,piatetsky1993measuring}, or information-theoretic dependency scores~\cite{mandros2017discovering,pennerath2020discovering,parciak2024measuring}, but these signals alone cannot distinguish genuine relationships from spurious ones.
Because a table is only a finite sample of a larger underlying data space, many candidates may appear to hold on one table instance merely by coincidence, redundancy, or limited domain coverage, even though they are not meaningful or generalizable.
We refer to such coincidental, redundant, or incomplete candidates as \emph{spurious} \FRs.
This spuriousness problem becomes even more severe in dirty-data settings, where a candidate may satisfy an approximate criterion despite omitting essential attributes.

\begin{example}
Figures~\ref{fig:ex-AR}--\ref{fig:ex-FD} illustrate that genuine and spurious \FRs can coexist in the same table.
Although these tables contain genuine \FRs, a naive search may also produce several types of spurious candidates.
For example, Figure~\ref{fig:ex-AR} contains \emph{coincidental} candidates such as 
$\text{\code{B}} \times \text{\code{F}} = \text{\code{E}}$,
which hold on the observed rows but are unlikely to reflect genuine relationships,
as well as \emph{redundant} candidates such as 
$\text{\code{I}} \times \text{\code{J}} + \text{\code{E}} = \text{\code{K}}$,
where $\text{\code{E}}$ is extraneous and the real relationship is simply 
$\text{\code{I}} \times \text{\code{J}} = \text{\code{K}}$.
Similar spurious cases arise among string transformations in Figure~\ref{fig:ex-ST},
and Figure~\ref{fig:ex-FD} shows spurious \FRs such as $\text{\code{L}} \rightarrow \text{\code{M}}$, which hold only because of limited value diversity in the observed table.
These examples show that 
satisfying the observed table alone is insufficient for reliable \FR discovery.
\end{example}

\eat{
\begin{example}
Figures~\ref{fig:ex-AR}--\ref{fig:ex-FD} illustrate this challenge.
In Figure~\ref{fig:ex-AR}, relationships such as $\text{\code{F}} \times \text{\code{G}} = \text{\code{H}}$ and $\text{\code{H}} + \text{\code{K}} + \text{\code{N}} = \text{\code{O}}$ are genuine and meaningful, revealing how multiple columns are composed to produce derived numeric values.
At the same time, a naive search may also identify spurious \FRs, including coincidental ones such as $\text{\code{B}} \times \text{\code{F}} = \text{\code{E}}$, which holds on the observed table but is not semantically meaningful, and redundant ones such as $\text{\code{I}} \times \text{\code{J}} + \text{\code{E}} = \text{\code{K}}$, where \text{\code{E}} is extraneous and the real relationship is simply $\text{\code{I}} \times \text{\code{J}} = \text{\code{K}}$.
Similarly, in Figure~\ref{fig:ex-ST}, some string transformations are genuine, such as concatenating $\code{B}$, $\code{C}$, and $\code{D}$ into $\code{E}$, whereas others are spurious and fit only the currently observed values without generalizing beyond the table instance.
In Figure~\ref{fig:ex-FD}, genuine \FDs such as $\code{F} \rightarrow \code{G}$ appear alongside spurious ones such as $\code{L} \rightarrow \code{M}$, which hold only because of limited diversity in the observed values.
Naively using all such candidates can therefore mislead downstream tasks such as table understanding, data cleaning, and formula reconstruction.
\end{example}
}

Beyond these concrete examples, the key observation is that they all share the same functional nature: input values determine output values.
We therefore use the term \emph{functional relationships} (\FRs) to refer to such inter-column relationships in tables.
This unified notion highlights a common reliability challenge shared across different relationship types: how to distinguish meaningful relationships from spurious ones using only table data.

\eat{
Our key idea is that the reliability of \FRs cannot be captured by support alone.
Instead, a reliable \FR should satisfy four complementary properties:
(i) \emph{accuracy}, meaning that it is well supported by the observed table;
(ii) \emph{atomicity}, meaning that it does not contain redundant input columns;
(iii) \emph{stability}, meaning that it is non-trivial and does not survive arbitrary recombinations of observed values; and
(iv) \emph{integrity}, meaning that in dirty data its violations are not systematically explained by omitted non-participative columns.
These properties directly correspond to the major sources of spurious relationships described above.
}

Despite its practical importance, \FR discovery remains under-studied as a unified problem.
Accordingly, we formulate the problem as \emph{reliable functional relationship discovery}.
Since spurious \FRs can arise from different causes, reliability cannot be assessed from a single observed-table signal alone.
We therefore characterize \FR reliability through four complementary criteria: \emph{accuracy}, \emph{atomicity}, \emph{stability}, and \emph{integrity}.
Together, these criteria capture whether a candidate \FR is well supported by the observed table, free of redundant input columns, non-trivial under value recombination, and, in dirty-data settings, not systematically missing essential attributes.
\looseness=-1

Building on these reliability criteria, we propose \autorelate, a unified \emph{mine-then-verify} framework for discovering reliable \FRs by leveraging statistical tests.
\autorelate first generates candidate FRs that satisfy the accuracy requirement, and then verifies the remaining reliability criteria through dedicated tests.
A \emph{Minimality Test} removes non-atomic candidates with redundant input columns.
A \emph{Perturbation Test} evaluates stability by randomly recombining participative values across rows and measuring how likely the candidate is to survive; genuine, non-trivial relationships should usually be destroyed by such recombination, whereas spurious ones tend to survive.
In dirty-data settings, an additional \emph{Independence Test} examines whether the violation pattern of a candidate depends on columns outside its participative column set, thereby filtering incomplete candidates with omitted attributes.
We further develop efficient algorithms that exploit candidate-level pruning, deterministic lower bounds, and statistically guided early termination, enabling reliable predictions within seconds for interactive use cases.\looseness=-1
\eat{\autorelate first generates candidate \FRs that satisfy the accuracy requirement, then applies a \emph{Minimality Test} to remove non-atomic candidates and a \emph{Perturbation Test} to evaluate stability by randomly recombining participative values across rows. In dirty-data settings, it further applies an \emph{Independence Test} to 
filter incomplete candidates with omitted attributes. }

\smallskip
\noindent\textbf{Contributions \& Organizations.}
The main contributions and structure of this paper are summarized as follows.
\begin{itemize}[leftmargin=*]
    \item We introduce \emph{functional relationships} (\FRs), a unified notion for describing inter-column relationships in tables that subsumes arithmetic relationships (\ARs), string transformations (\STs), and functional dependencies (\FDs) 
    (Section~\ref{sec:definition}).

    \item We characterize the reliability of \FRs through four complementary properties, i.e., accuracy, atomicity, stability, and integrity, and use them to guide the design of \autorelate (Section~\ref{subsec:reliability}).

    \item We develop \autorelate, a unified \emph{mine-then-verify} framework for reliable \FR discovery (Section~\ref{subsec:overview}) that first generates accurate candidate \FRs and then verifies atomicity, stability, and integrity through a Minimality Test, a Perturbation Test, and an Independence Test, respectively (Section~\ref{subsec:minimality}--Section~\ref{subsec:independence}).

    \item We enhance \autorelate with three  optimization strategies, including 
    a group-by lower bound for early rejection, a closed-form speed-up strategy for AR-type \FRs, and a binomial bound for statistically guided early termination, substantially improving efficiency while providing theoretical guarantees (Section~\ref{sec:optimization}).\looseness=-1

    \item We built the first large-scale benchmark suite, collected from 58,679 real-world spreadsheets and relational tables, containing 6,414 real functional relationships that cover all three types of \FR. This allows us to systematically evaluate the \FR discovery problem, and can be a useful resource for future research. 
    (Section~\ref{subsec:exp_setting}).
    
    \item We conduct extensive experiments against 18 baselines across all \FR types and data settings. The results show that \autorelate consistently outperforms all baselines, achieving an average \prauc of 0.87 
    and improving over the strongest competing baseline by 59.0\% on average across all settings (Section~\ref{subsec:exp_result}).
\end{itemize}

In addition, we discuss the related work in Section~\ref{sec:related}, and conclude this
paper in Section~\ref{sec:conclusion}.

\eat{
For table understanding, access to functional relationships can help users interpret the internal structure of complex tables. 
For example, in Figure~\ref{fig:ex-AR}, the formulas $\text{\code{F}} \times \text{\code{G}} = \text{\code{H}}$, $\text{\code{I}} \times \text{\code{J}} = \text{\code{K}}$, $\text{\code{L}} \times \text{\code{M}} = \text{\code{N}}$, and $\text{\code{H}} + \text{\code{K}} + \text{\code{N}} = \text{\code{O}}$ immediately reveal three repeated local structures, which are then aggregated into the top-level column \code{O}.
Beyond table understanding, inferred \FRs can serve as rich data-quality constraints for detecting, explaining, and repairing inconsistent values. 
Once recovered as formulas or programs, they can also improve data consistency by automatically propagating updates to dependent cells or columns. 
For instance, in a spreadsheet such as Figure~\ref{fig:ex-AR}, editing a cell such as \codeq{F5} should trigger corresponding updates to dependent cells such as \codeq{H5} and \codeq{O5}.
Without such automatically recovered and enforced relationships, users often have to update dependent cells manually, which is error-prone.
Finally, recovered \FRs can support data provenance by helping users trace how derived columns are produced within or across tables, thereby facilitating applications such as impact analysis and data debugging~\cite{purview-lineage}.

A natural idea is therefore to mine \FRs directly from table data.
However, simply finding relationships that hold on the observed table is insufficient.
Existing methods typically guarantee only that a candidate satisfies some observed-table criterion, such as exact satisfaction or a low violation rate, but cannot distinguish genuine relationships from coincidental ones.
A table is only a finite sample of a larger underlying data space, and the values observed in its columns may cover only a limited portion of the true domains.
As a result, many candidates may appear to hold on one table instance even though they are not meaningful, not minimal, or not generalizable.
We refer to this as the \emph{spurious \FR phenomenon}: the candidate set may contain many \FRs that satisfy the mining criterion on the current table, yet do not correspond to genuine relationships.

This phenomenon is common in practice.
In Figure~\ref{fig:ex-AR}, a naive search may identify coincidental candidates such as $\text{\code{B}} \times \text{\code{F}} = \text{\code{E}}$, which holds on the observed table but is not semantically meaningful.
It may also identify redundant candidates such as $\text{\code{I}} \times \text{\code{J}} + \text{\code{E}} = \text{\code{K}}$, where column \text{\code{E}} is extraneous and the real relationship is simply $\text{\code{I}} \times \text{\code{J}} = \text{\code{K}}$.
Similarly, in Figure~\ref{fig:ex-ST}, one may obtain spurious string transformations that happen to fit the currently observed values but would fail on other instances.
The problem becomes even more severe in the dirty-data setting, where a candidate may still satisfy an approximate criterion despite omitting essential attributes.
For example, if violations are systematically associated with an unmodeled column, then the candidate may look plausible under a relaxed accuracy threshold while still being incomplete.
Naively using all such candidates can therefore mislead downstream tasks such as data analysis, data cleaning, and formula reconstruction.

These observations suggest that reliable \FR discovery requires more than ranking candidates by association strength or violation statistics; it requires explicitly modeling and verifying the reliability of candidate relationships.
Humans can often use semantic cues, domain knowledge, or column names to reject spurious candidates, but in many real scenarios such information may be missing, noisy, privacy-restricted, or simply unreliable.
Therefore, an automatic \FR discovery system should not aim merely to enumerate supported candidates; instead, it should identify \emph{reliable} \FRs that are useful, non-spurious, and likely to generalize beyond a single table instance.

To this end, we formulate the problem as \emph{reliable functional relationship discovery}.
Our key idea is that reliability cannot be captured by support alone.
Instead, a reliable \FR should satisfy four complementary properties:
(i) \emph{accuracy}, meaning that it is well supported by the observed table;
(ii) \emph{atomicity}, meaning that it does not contain redundant input columns;
(iii) \emph{stability}, meaning that it is non-trivial and does not survive arbitrary recombinations of observed values; and
(iv) \emph{integrity}, meaning that in dirty data its violations are not systematically explained by omitted non-participative columns.
These properties directly correspond to the major sources of spurious \FRs described above.

Based on this view, we propose \autorelate, a unified \emph{mine-then-verify} framework for discovering reliable \FRs.
\autorelate first generates \FR candidates that satisfy the accuracy requirement, and then subjects them to dedicated reliability checks.
A \emph{Minimality Test} removes non-atomic candidates with redundant input columns.
An \emph{Independence Test}, used in the dirty-data setting, examines whether the violation pattern of a candidate depends on columns outside its participative column set, thereby filtering incomplete candidates with omitted attributes.
A \emph{Perturbation Test} evaluates stability by randomly recombining participative values across rows and measuring how likely the candidate is to survive such perturbations: a genuine, non-trivial relationship should usually be destroyed by random recombination, whereas a spurious one tends to survive.
We further develop efficient algorithms that exploit 
deterministic lower bounds and statistically guided early termination, enabling reliable predictions within seconds for interactive use cases.
}

\section{Problem Definition}
\label{sec:definition}

\begin{figure*}
\centering
\includegraphics[width=\linewidth]{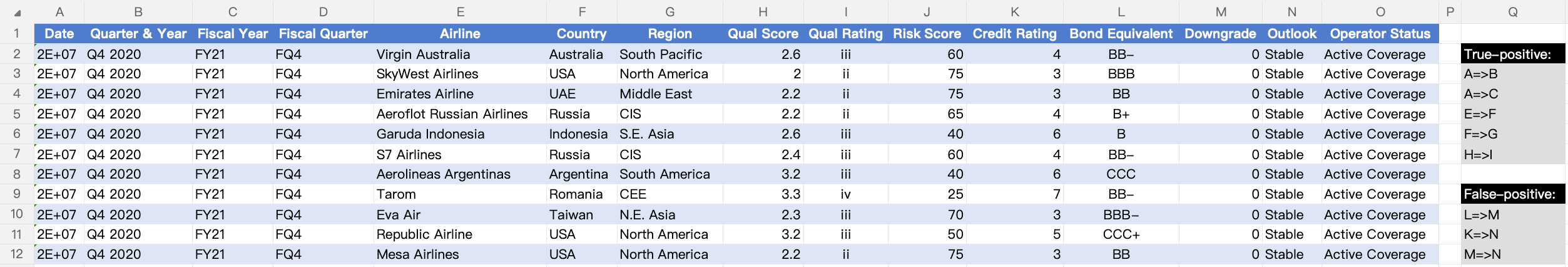}
\vspace{-0.5cm}
\caption{An example spreadsheet table with functional dependencies (\FDs).}
\label{fig:ex-FD}
\vspace{-0.1cm}
\end{figure*}

We use the \textit{Functional Relationship (\FR)} to refer to the mapping between different columns in tabular data, illustrating how input values (one or more columns in the data) are transformed into output values (other columns). Formally, we define \FRs as follows.

\begin{definition}[Functional Relationships]
    Given a table $T$ with a set of columns $C$, and column subsets $X, Y \subseteq C$. A \emph{Functional Relationship} $\Psi=(X,Y,f_r)$ is composed of input columns set $X$, output columns set $Y$ and a mapping function $f_r$.
\end{definition}

Let $V_T(X)$ and $V_T(Y)$ denote the domains of 
$X$ and $Y$ induced by table $T$, respectively. For all $x\in V_T(X) ~and~ y\in V_T(Y)$, it holds that 
\[
\Pr(Y = y \mid X = x) =
\begin{cases} 
1, & \text{if} ~y = f_r(x) \\
0, & \text{otherwise}
\end{cases}
\]
A \FR $\Psi=(X,Y,f_r)$ is \textit{non-trivial} if $Y \not\subseteq X$, and \textit{normalized} if $Y$ contains a single column. 
We denote $C_\Psi = X \cup Y$ as the \textit{participative column set} of \FR $\Psi$. 
For two \FRs $\Psi_1$ and $\Psi_2$, they are said to be \textit{isomerous} if $C_{\Psi_1} = C_{\Psi_2}$. 
In this paper, we consider only non-trivial, normalized \FRs, as they suffice to infer all other \FRs that hold on table $T$.
\looseness=-1

\smallskip
\noindent\textbf{Semantics.}
Given a tuple $t \in T$, we say that $t$ \emph{satisfies}
an \FR $\Psi=(X,Y,f_r)$, written as $t \models \Psi$,
if $f_r(t[X]) = t[Y]$; otherwise, $t$ is said to \emph{violate} $\Psi$, written as $t \not\models \Psi$.

For a table $T$, we say that $T$ \emph{satisfies} $\Psi$,
denoted as $T \models \Psi$, if $t \models \Psi$ for all $t \in T$.
The set of observed violating tuples of $\Psi$ on $T$ is
$\mathrm{vio}_T(\Psi)=\{t\in T \mid t \not\models \Psi\}$,
and the satisfaction ratio of $\Psi$ on $T$ is 
\[
\mathrm{sat}_T(\Psi) = \frac{|\{t\in T \mid t \models \Psi\}|}{|T|}.
\]
Equivalently, $T \models \Psi$ iff $\mathrm{sat}_T(\Psi)=1$,
or iff $\mathrm{vio}_T(\Psi)=\emptyset$.

\begin{definition}[Approximate \FRs]
\label{def:approx-fr}
Given a threshold $\tau \in (0, 1]$,
an \FR $\Psi = (X, Y, f_r)$ is said to be
\emph{$\tau$-approximate} on $T$ if
$\mathrm{sat}_T(\Psi) \geq \tau$,
i.e., $|\mathrm{vio}_T(\Psi)| \leq (1-\tau)|T|$.
When $\tau = 1$, a $\tau$-approximate \FR reduces to an exact \FR,
i.e., $T \models \Psi$.
\end{definition}

\vspace{0.5ex}
We next introduce three common types of \FRs as follows. 

\emph{\underline{Arithmetic Relationships (\ARs).}}
When input and output columns are numeric, we refer to the \FR as an 
\AR,
where the mapping function $f_r$ corresponds to an arithmetic formula involving operations such as addition, subtraction, multiplication, and division.

\emph{\underline{String Transformations (\STs).}}
When input and output columns are textual, we refer to the \FR as an \ST, where $f_r$ corresponds to a string operation such as concatenation, substring extraction, or splitting~\cite{bogatu2019towards,he2018transform,yang2021auto,harris2011spreadsheet,jin2017foofah,jin2018clx}.

\emph{\underline{Functional Dependencies (\FDs).}}
When both the input and output 
are single categorical columns,
we refer to the \FR as an \FD, a well-studied form of consistency constraint~\cite{caruccio2015relaxed,berti2018discovery}.
Although \FDs can express deterministic relationships over numeric or textual values, they capture value determinacy rather than the mapping function provided by \ARs and \STs.
We therefore reserve \FDs for the single-column categorical setting, 
where $f_r$ is implicitly induced by the dependency: identical input 
values imply identical output values.\looseness=-1
\eat{
Broadly speaking, deterministic relationships over numeric and textual values can also be expressed as \FDs. 
However, compared with \ARs and \STs, \FDs primarily emphasize the determining role of the input columns on the output columns and are less expressive in characterizing the underlying mapping function.
Therefore, to preserve the stronger expressiveness of \ARs and \STs where applicable, we use \FDs only in this single-column categorical setting. 
In this case, the mapping function $f_r$ is not represented as an explicit arithmetic formula or string transformation, but is implicitly induced by the dependency: identical input values imply identical output values.
}

\begin{example}
\label{exa:FRs}
Figures~\ref{fig:ex-AR}--\ref{fig:ex-FD} illustrate the three types of \FRs in real spreadsheet tables. 
In each figure, true-positive (TP) relationships are genuine, whereas false-positive (FP) relationships are spurious.

\begin{itemize}[leftmargin=*]
\item \textbf{\ARs} (Figure~\ref{fig:ex-AR}): 
$\Psi_1=(\{F,G\}, H, F \times G)$ captures that $H$ (\kw{Revenue}) equals 
$F$ (\kw{Days}) $\times$ $G$ (\kw{Rate}), and $\Psi_2=(\{H,K,N\},$ $O, H+K+N)$ captures that column $O$ (\kw{Total Revenue}) is obtained by summing the three partial revenues in $H$, $K$, and $N$. 

\item \textbf{\STs} (Figure~\ref{fig:ex-ST}): 
$\Psi_3=(\{B,C,D\}, E, \textit{concat}(B,C,D))$ concatenates 
$B$ (\kw{CITY}), $C$ (\kw{STATE}), and $D$ (\kw{LZIP}) into $E$ (\kw{CTSTZIP}), while 
$\Psi_4=(\{N\}, O, \textit{extract}(N,\kw{space}))$ extracts the first name in $O$ (\kw{FIRSTNAME}) from the full name in $N$ (\kw{EXEC}).

\item \textbf{\FDs} (Figure~\ref{fig:ex-FD}): 
$\Psi_5 = (\{A\}, B, A \to B)$ captures that $B$ (\kw{Quarter \& Year}) is 
determined by $A$ (\kw{Date}), and $\Psi_6 = (\{F\}, G, F \to G)$ 
captures that $G$ (\kw{Region}) is determined by $F$ (\kw{Country}). \looseness=-1 
\end{itemize}
\vspace{-1.0\baselineskip}
\end{example}

\eat{
\begin{figure*}
\centering
\includegraphics[width=\linewidth]{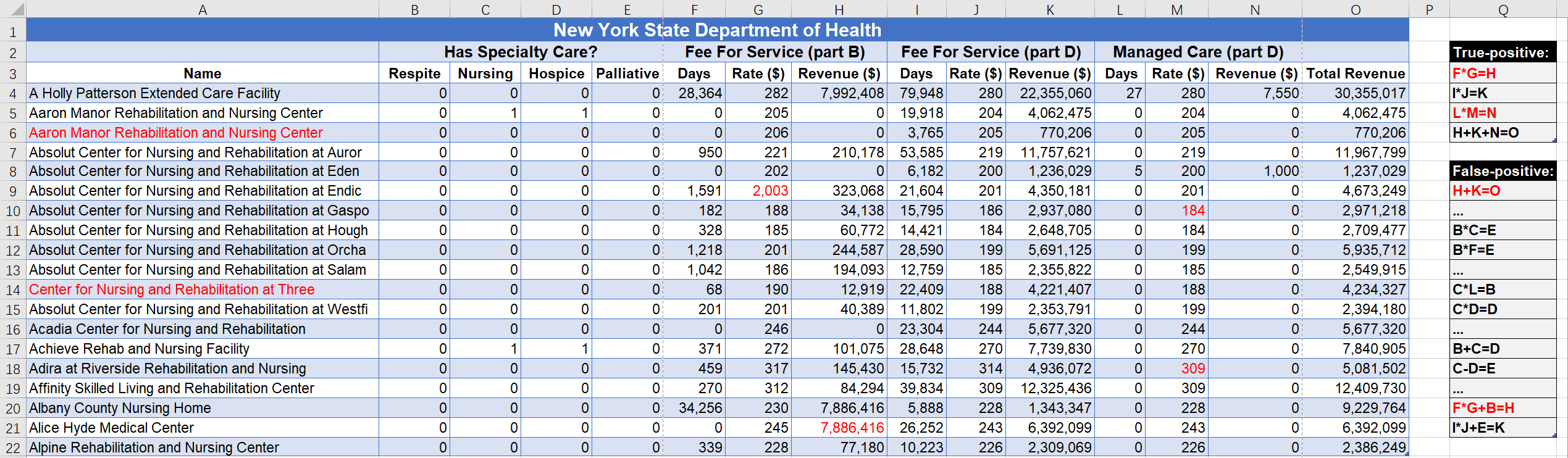}
\caption{A noisy example spreadsheet table with arithmetic relationship (ARs).}
\label{fig:dirty-AR}
\end{figure*}

\begin{example}
\label{eg:dirty-AR}
Figure~\ref{fig:dirty-AR} is the dirty data version of Figure~\ref{fig:ex-AR}. Noise cells and FRs with violation rows are highlighted in red. Considering $\alpha = 0.1$, the $\alpha$-FRs in Figure~\ref{fig:dirty-AR} have at most two violation rows. The FR $\Psi_H=(\{F,G\},F*G)$ is a reliable $\alpha$-FR, failing on the ninth and twenty-first rows. Additionally, due to the relaxed constraint, the unreliable FR $\Psi_O=(\{H,K\},H+K)$ is included in the $\alpha$-FRs set, despite failing on the fourth and eighth rows.
\end{example}
}

\eat{
\begin{figure}
\centering
\includegraphics[width=\columnwidth]{figures/FR discover and measure flow chart.pdf}
\vspace{-0.5cm}
\caption{Functional relationships discovery workflow.}
\vspace{-0.1cm}
\label{fig:FlowChart}
\end{figure}
}

\noindent
\textbf{Problem statement.}
Our task is to discover reliable functional relationships
(\FRs) from a table.

\begin{itemize}[leftmargin=*]
\item \textbf{Input:} A table $T$. 
\item \textbf{Output:} A set $R$ of reliable \FRs discovered from $T$.
\end{itemize}

To this end, we design a unified framework that first generates candidate \FRs from $T$ and then evaluates the reliability of each candidate \FR, thereby filtering out
spurious \FRs and returning a set $R$ of reliable \FRs.
We will formally define the reliability 
criteria for \FRs and
present the framework in Section~\ref{sec:auto-relate}.

\section{The \autorelate Framework}
\label{sec:auto-relate}

In this section, we present \autorelate, a unified \emph{mine-then-verify} framework for discovering reliable \FRs. 
We first introduce four reliability criteria for characterizing meaningful
\FRs 
(Section~\ref{subsec:reliability}).
We then provide an overview of the framework
(Section~\ref{subsec:overview}) and detail its three reliability tests:
the Minimality Test, the Perturbation Test, and the Independence Test
(Sections~\ref{subsec:minimality}--\ref{subsec:independence}).
\eat{We first formalize the reliability criteria of FRs,
including accuracy, atomicity, stability, and integrity, which together characterize the practical meaningfulness of discovered \FRs (Section~\ref{subsec:reliability}).
We then present an overview of the \autorelate framework, which consists of two components: a candidate-generation stage that enumerates FR candidates satisfying
the accuracy requirement, and a reliability-verification stage that further filters these candidates through minimality checking and statistical tests (Section~\ref{subsec:overview}). 
Finally, we present the two core statistical tests in \autorelate: the Perturbation Test
(Section~\ref{subsec:perturbation}) and the Independence Test (Section~\ref{subsec:independence}).
}

\subsection{Reliability Criteria for \FRs}
\label{subsec:reliability}
\eat{
The goal of \autorelate is not merely to enumerate candidate \FRs, but to identify \emph{reliable} \FRs that are useful,
non-spurious, and generalizable beyond a single table instance.
To make this goal explicit, we characterize the reliability of an \FR
through four desirable properties: \emph{accuracy}, \emph{atomicity},
\emph{stability}, and \emph{integrity}. Together, these properties
define the practical meaningfulness of an \FR and motivate the design of the statistical tests in our \autorelate framework.
}

\autorelate aims to identify reliable \FRs that are non-spurious and 
generalizable beyond a single table instance, rather than merely enumerating candidates.
We characterize \FR reliability through four properties: \emph{accuracy}, \emph{atomicity}, \emph{stability}, and \emph{integrity}, which capture complementary aspects of \FR reliability and motivate the statistical tests in \autorelate.

\begin{definition}[Accuracy]
\label{def:accuracy}
An \FR $\Psi=(X,Y,f_r)$ is said to be \emph{accurate} on a table $T$
if $T \models \Psi$, i.e., $\operatorname{sat}_T(\Psi)=1$.
In dirty-data settings, exact satisfaction may be too restrictive; we therefore relax this requirement and say that $\Psi$ is
\emph{$\tau$-accurate} on $T$ if it is a
\emph{$\tau$-approximate} \FR on $T$, 
i.e., $\operatorname{sat}_T(\Psi)\ge\tau$.
\end{definition}

\begin{definition}[Atomicity]
\label{def:atomicity}
An \FR $\Psi = (X, Y, f_r)$ is said to be \emph{atomic} on a table $T$ if there exists no other FR $\Psi' =(X', Y, f_r')$ such that
$X' \subseteq X$ and $\Psi$ is accurate on $T$.
That is, no proper subset of the input columns $X$ of $\Psi$ can determine the same output columns $Y$ on $T$.\looseness=-1
\end{definition}

\begin{definition}[Stability]
\label{def:stability}
An \FR $\Psi = (X, Y, f_r)$ is said to be \emph{stable} on a table $T$ if there exists at
least one violating combination in the induced domain, i.e.,
$\exists x \in V_T(X), \exists y \in V_T(Y) \text{ s.t. } 
y \neq f_r(x)$.
Equivalently, the valid mappings of $f_r$ are a strict subset of the Cartesian product
$V_T(X)\times V_T(Y)$.
\end{definition}

\eat{
\begin{definition}[Integrity]
\label{def:integrity}
Let $Z = C \setminus C_\Psi$ be the set of non-participative
columns of an \FR $\Psi = (X, Y, f_r)$ in a table $T$ with a set of columns $C$, and let $t$ be a  tuple sampled uniformly at random from $T$.
We define the violation indicator of $\Psi$ on $t$ as
$I_\Psi(t)=\indicator[ t \not\models \Psi ]$.
We say that $\Psi$ satisfies \emph{integrity} if
$I_\Psi(t) \perp\!\!\!\perp t[Z]$.
That is, whether $t$ satisfies or violates $\Psi$ is independent of the columns that do not participate in $\Psi$.
\end{definition}
}

\begin{definition}[Integrity]
\label{def:integrity}
Let $Z = C \setminus C_\Psi$ be the set of non-participative columns of an \FR $\Psi = (X, Y, f_r)$ in a table $T$ with a set of columns $C$. 
For a tuple $t$ sampled uniformly at random from $T$, let 
$I_\Psi(t)=\indicator[t \not\models \Psi]$ denote the violation indicator of $\Psi$.
We say that $\Psi$ satisfies \emph{integrity} if 
$I_\Psi(t) \perp\!\!\!\perp t[c]$ for every $c \in Z$.
That is, the violation pattern of $\Psi$ is independent of every non-participative column.\looseness=-1
\end{definition}

\begin{definition}[Reliability]
An \FR is said to be \emph{reliable} if it satisfies accuracy,
atomicity, stability, and integrity.
\end{definition}

\eat{
\begin{example}
\label{exa:reliability}
Continuing with \FRs in Example~\ref{exa:FRs}, we illustrate why reliability requires more than observed-table accuracy alone.
The candidate $\Psi_1=(\{F,G\}, H, F \times G)$ is accurate on the table in Figure~\ref{fig:ex-AR}, since each observed tuple satisfies $H=F\times G$.
Consider a spurious candidate 
$\Psi_1^{+}=(\{B,F,G\}, H, F \times G + B)$.
Although $\Psi_1^{+}$ is still accurate, it is not atomic, because the smaller candidate $\Psi_1$ already determines $H$ and column $B$ is redundant.
The FR $\Psi_1$ is also stable, since random recombination of the observed values in $F$, $G$, and $H$ will usually destroy the equality $H=F\times G$.
By contrast, a spurious candidate 
$\Psi_\mathrm{coin}=(\{B,C\}, E, B\times C)$  may hold only by coincidence:
because $B$, $C$, and $E$ are sparse binary columns, most tuples satisfy
$B \times C = E = 0$.
As a result, random perturbation often fails to destroy this candidate,
showing that its apparent validity is accidental rather than due to a genuine underlying relationship.
In a dirty-data variant of the table in Figure~\ref{fig:ex-AR}, consider an incomplete candidate 
$\Psi_2^{-}=(\{H,K\}, O, H+K)$ derived from the genuine FR $\Psi_2=(\{H,K,N\}, O, H+K+N)$.
Although $\Psi_2^{-}$ may still be $\tau$-accurate, its violations
are not random: they occur precisely on rows where $N \neq 0$, i.e., where
the omitted term makes $H+K \neq O$.
This dependence on $N$ indicates that $\Psi_2^{-}$ violates integrity.
\end{example}
}

\begin{example}
\label{exa:reliability}
Continuing with \FRs in Example~\ref{exa:FRs}, we illustrate why
observed-table accuracy alone is insufficient.
The genuine \FR $\Psi_1=(\{F,G\}, H, F \times G)$ is accurate on
the table in Figure~\ref{fig:ex-AR} and is stable because random 
recombination of $F$, $G$, and $H$ will usually destroy the equality 
$H=F\times G$.
By contrast, $\Psi_1^{+}=(\{B,F,G\}, H, F \times G + B)$ is also 
accurate but violates atomicity, since the smaller \FR $\Psi_1$ 
already determines $H$ and $B$ is redundant.
Another candidate, $\Psi_\mathrm{coin}=(\{B,C\}, E, B\times C)$,
is unstable: $B$, $C$, and $E$ are sparse binary columns, so most 
tuples trivially satisfy $B \times C = E = 0$ and $\Psi_\mathrm{coin}$
often survives perturbation.
Finally, in a dirty-data variant of Figure~\ref{fig:ex-AR}, the
incomplete candidate
$\Psi_2^{-}=(\{H,K\}, O, H+K)$ may still be $\tau$-accurate, but its
violations occur on rows where the omitted term $N$ is nonzero.
This dependence on $N$ indicates that $\Psi_2^{-}$ violates integrity.
\end{example}

Among these criteria, integrity is meaningful primarily in dirty-data settings, where an approximately accurate candidate may 
omit essential attributes.
In clean-data settings, accurate candidates have no observed violations from which such omissions can be diagnosed.
We next describe how \autorelate operationalizes these 
criteria.
\looseness=-1

\eat{
These four criteria directly motivate the design of \autorelate. 
Accuracy is ensured by the upstream candidate-generation stage, which retains 
only accurate or $\tau$-accurate candidates. The remaining criteria—atomicity, stability, and integrity—are then verified by the \kw{MinimalityTest} (Section~\ref{subsec:minimality}), \kw{PerturbationTest} (Section~\ref{subsec:perturbation}), and \kw{IndependenceTest} (Section~\ref{subsec:independence}), respectively.
}

\eat{
Intuitively, the four properties serve different purposes.
Accuracy ensures that an \FR is supported by the observed data, while atomicity removes candidates with redundant input columns.
Stability requires an \FR to be non-trivial over the induced domain, filtering out coincidental relationships. 
Finally, integrity is mainly meaningful in dirty-data settings, where an approximately accurate candidate may still omit essential attributes.
In clean-data settings, by contrast, accurate candidates have no observed violations, leaving no violation pattern for diagnosing 
such omissions.\looseness=-1

These criteria directly motivate the design of \autorelate.
Accuracy is ensured by the upstream \FR mining process, which outputs only candidates with low violation rates.
Atomicity is enforced by the \kw{MinimalityTest}, which identifies and prunes candidates with redundant input columns.
The remaining reliability properties are addressed by dedicated statistical tests in \autorelate:
\begin{itemize}[leftmargin=*]
\item \textbf{Minimality Test} (Section~~\ref{subsec:minimality})

\item \textbf{Perturbation Test} 
(Section~\ref{subsec:perturbation}) verifies stability by 
perturbing table values and testing whether the candidate FR remains valid under random perturbations.

\item \textbf{Independence Test} 
(Section~\ref{subsec:independence}) verifies integrity in dirty-data settings by examining whether the violation pattern of candidate FR depends on columns outside its participative column set $C_\Psi$. \looseness=-1
\end{itemize}
}

\subsection{Overview of \autorelate}
\label{subsec:overview}
Given an input table $T$, \autorelate aims to discover a set of reliable \FRs from $T$.
It mainly consists of two stages: 
\emph{candidate generation} and \emph{reliability verification}.
The candidate-generation stage enumerates candidate \FRs that satisfy the accuracy requirement, 
while the reliability-verification stage further examines these candidates through three tests: the \kw{MinimalityTest} for atomicity (Section~\ref{subsec:minimality}), the \kw{PerturbationTest} for stability (Section~\ref{subsec:perturbation}), and the \kw{IndependenceTest} for integrity in the dirty-data setting (Section~\ref{subsec:independence}).

\eat{
The overall procedure of \autorelate is summarized in
Algorithm~\ref{alg:Auto-Relate}.
The algorithm operates in a \emph{levelwise search} manner, interleaving
\emph{candidate generation} and \emph{reliability verification} at each level,
so that verification results from earlier levels can guide and
prune candidate enumeration at later levels.
}

\eat{
The overall procedure of \autorelate is summarized in Algorithm~\ref{alg:Auto-Relate}. 
Given an input table $T$, \autorelate aims to discover a set of reliable FRs from $T$ in a level-wise manner, interleaving the following two stages at each level.

\smallskip
\noindent
\textbf{Candidate generation.}
\autorelate first invokes a type-specific candidate generation procedure
to obtain the candidate FR set $R_C$ from $T$.
This stage is responsible for ensuring \emph{accuracy} and
\emph{atomicity}: it retains only candidates whose satisfaction ratio
meets the accuracy threshold, and enumerates from smaller to larger
input column sets so that redundant candidates are pruned early.

\smallskip
\noindent\textbf{Reliability verification.}
For each candidate $\Psi \in R_C$, \autorelate performs reliability
verification to distinguish genuine FRs from spurious ones.
Let $C_{\Psi}$ denote the participative column set of $\Psi$.
The core of this stage is the Perturbation Test, which evaluates
\emph{stability} by testing whether $\Psi$ is non-trivial over the
induced domain.
Specifically, it measures how likely $\Psi$ is to remain valid after
random cross-row perturbations on $C_{\Psi}$.
Intuitively, a genuine FR represents a strict, non-trivial mapping that
depends on the exact alignment of row values.
As a result, it holds only for a proper subset of the possible
recombinations of observed participative values, and is therefore
expected to be violated by random perturbations with high probability,
yielding a low survival rate.
By contrast, a spurious FR that appears valid only because the observed data cover a limited portion of the domain (e.g., a coincidental mapping
to a constant) lacks this strict dependence and is insensitive to data
shuffling; thus, it is more likely to survive perturbation by chance and thus receive a high score.

In the dirty-data setting ($\delta = \text{dirty}$), \autorelate
additionally performs an Independence Test \emph{prior to} perturbation scoring to evaluate \emph{integrity}.
Specifically, it checks whether the violation pattern of $\Psi$ depends on any non-participative columns.
If such dependence is detected, the candidate is discarded as incomplete.
Together, the two tests address the remaining reliability criteria: the Perturbation Test targets \emph{stability}, while the Independence Test targets \emph{integrity}.

A candidate is admitted into the reliable FR set $R$ only if its
perturbation score satisfies $p \leq \eta$, where $\eta$ is a
user-specified perturbation threshold; in the dirty-data setting, the
candidate must additionally pass the Independence Test.
The details of the two tests are presented below.
}

\begin{algorithm}[!ht]
\SetKwInOut{Input}{Input}
\SetKwInOut{Output}{Output}

\Input{A table $T$, a data setting $\delta \in \{\text{clean}, \text{dirty}\}$, 
an accuracy threshold $\tau$, a significance threshold $\alpha$, a perturbation threshold $\eta$, 
a maximum number of input columns (i.e., the search level)
$\ell_{\max}$.}
\Output{A set $R$ of reliable \FRs discovered from $T$.}

$R \gets \emptyset$\;

$R_\kw{prune} \gets \emptyset$\; 

\For{$i \gets 1$ \KwTo $\ell_{\max}$}{
    $R_\kw{cand} \gets \kw{CandidateGeneration}(T, \delta, \tau, i)$\;
    \lIf{$R_\emph{\kw{cand}} = \emptyset$}{\textbf{continue}}
    
    \ForEach{$\Psi \in R_\emph{\kw{cand}}$}{
        \If{\textbf{not} \emph{\kw{MinimalityTest}}$(\Psi, R_\emph{\kw{prune}})$}{\textbf{continue}}
        
        \If{$\delta = \text{dirty}$}{
            \lIf{\textbf{not} \emph{\kw{IndependenceTest}}$(T, \Psi, \alpha)$}{\textbf{continue}}
        }

        $R_\kw{prune} \gets R_\kw{prune} \cup \{\Psi\} $\;
        
        $p_\kw{pert} \gets \kw{PerturbationTest}(T, \Psi, \eta)$\;
        
        \If{$p_\emph{\kw{pert}} \le \eta$}{
            $R \gets R \cup \{\Psi\}$\;
        }
    }
}

\Return{$R$}\;

\caption{Auto-Relate}
\label{alg:Auto-Relate}
\end{algorithm}

\smallskip
\noindent\textbf{Algorithm.}
As shown in Algorithm~\ref{alg:Auto-Relate}, \autorelate operates in a \emph{level-wise} manner.
It first initializes two sets: (i) $R$, the set of reliable \FRs to be returned, and (ii) $R_{\text{prune}}$, the set of previously retained candidates maintained across levels for atomicity-based pruning of subsequent candidates (lines~1--2).
It then proceeds level by level from $i = 1$ to $\ell_{\max}$, interleaving \emph{candidate generation} and \emph{reliability verification} at each level (line~3--14).
At each level $i$, \autorelate invokes \kw{CandidateGeneration} to obtain the candidate \FR set $R_{\kw{cand}}$ (line~4); if no candidates are produced, the level is skipped (line~5).
For each candidate $\Psi \in R_{\kw{cand}}$ (line~6), it first applies \kw{MinimalityTest} to check whether $R_{\kw{prune}}$ already contains a candidate whose input column set is a proper subset of that of $\Psi$ (line~7); if so, $\Psi$ is non-atomic and discarded.
In the dirty-data setting, the surviving candidate is further checked via \kw{IndependenceTest} (lines~9--10) for integrity; a candidate that fails this test is rejected as incomplete.
Each candidate that passes the above filters is added to $R_{\kw{prune}}$ (line~11), so that it can prune supersets at subsequent levels.
Finally, \kw{PerturbationTest} is applied to compute the perturbation score $p_\kw{pert}$ (line~12); if $p_\kw{pert} \le \eta$, the candidate is admitted into the reliable FR set $R$ (lines~13--14).
After all levels have been processed, \autorelate returns $R$ as the set of reliable FRs discovered from $T$ (line~15).

\smallskip
\noindent\textbf{Complexity.}
Let $R_C$ be the set of all candidate FRs generated across all levels.
For a candidate \FR $\Psi$, let $\mathcal{T}_{\mathrm{mt}}(\Psi)$,
$\mathcal{T}_{\mathrm{it}}(\Psi)$, and $\mathcal{T}_{\mathrm{pt}}(\Psi)$
denote the time of \kw{MinimalityTest}, \kw{IndependenceTest}, and
\kw{PerturbationTest}, respectively.
Then the total running time of \autorelate is the candidate-generation cost plus at most
$O(\sum_{\Psi \in R_C}
(\mathcal{T}_{\mathrm{mt}}(\Psi)
+
\indicator[\delta=\text{dirty}] \cdot \mathcal{T}_{\mathrm{it}}(\Psi)
+
\mathcal{T}_{\mathrm{pt}}(\Psi)))$.
Note that this is the worst-case complexity;
in practice, our algorithm is much faster by effective pruning and optimization strategies introduced in Section~\ref{sec:optimization}.

\eat{
\smallskip
\noindent\textbf{Remark.}
In practice, users need not know in advance whether the input table is clean or dirty.
The parameter $\delta$ determines which verification branch is activated: the clean-data setting follows a simplified path with only the Perturbation Test, while the dirty-data setting invokes a more comprehensive procedure with an additional Independence Test.
When such prior knowledge is unavailable, one may use the dirty-data setting as a practical default, since it provides a more conservative verification path.
}

\eat{
\begin{example}
$\Psi_H=(\{F,G,B\},F*G+B)$ in Figure~\ref{fig:ex-AR} can be decomposed into $\Psi_H=(\{F,G\},F*G)$, with column $B$ consisting entirely of zeros. The \SubformulaTest~ sets the  score \ReliabilityScore\ to 1 to filter out the redundant FR $\Psi_H=(\{F,G,B\},F*G+B)$.
For FRs that passed \SubformulaTest~, \autorelate uses \HTOne~ to calculate their scores.
\end{example}
}

\subsection{Minimality Test}
\label{subsec:minimality}
The Minimality Test is the first verification step in Algorithm~\ref{alg:Auto-Relate}.
It enforces the \emph{atomicity} criterion by pruning redundant candidates. 
Recall from Definition~\ref{def:atomicity} that a candidate $\Psi=(X,Y,f_r)$ is non-atomic 
if there exists another FR $\Psi'=(X',Y,f'_r)$ with $X' \subset X$ and $\mathrm{sat}_T(\Psi') \ge \tau$.
In this case, $Y$ can already be determined with sufficient accuracy by a strict subset of $X$, 
and thus $\Psi$ can be 
discarded.
\looseness=-1

To operationalize this idea, \autorelate performs candidate enumeration in a \emph{level-wise} manner, processing all candidates with $|X|=i$ before those with $|X|=i+1$.
For each candidate $\Psi=(X,Y,f_r)$ at level $i$, the \kw{MinimalityTest} checks whether the pruning set $R_{\kw{prune}}$ already contains a previously retained candidate $\Psi'=(X',Y,f'_r)$ with the same output column $Y$ and a strict subset of input columns, i.e., $X' \subset X$.
If so, $\Psi$ is immediately pruned and does not enter the remaining verification procedures.
Since pruning decisions at earlier levels propagate to later levels, this test can eliminate many redundant supersets, especially on wide tables.
\eat{Because the enumeration proceeds level by level, pruning decisions made at level $i$ naturally propagate to later levels, thereby eliminating potentially large numbers of supersets at level $i+1$ and beyond.
This strategy is especially effective for wide tables, where the number of candidate column subsets grows exponentially.

A further practical advantage of the level-wise structure is that candidates within the same level can be processed independently, since the outcome for one candidate does not affect any other candidate at that level.
We exploit this property in our implementation using multiprocessing.
}

\eat{
\begin{example}
\label{exa:minimality}
Consider the spurious candidate
$\Psi_1^{+}=(\{B,F,G\}, H,$ $F\times G+B)$
from Example~\ref{exa:reliability}.
The Minimality Test checks whether some proper subset of its input columns already yields an accurate FR.
Since $B=0$ on every tuple of Figure~\ref{fig:ex-AR}, the smaller candidate
$\Psi_1=(\{F,G\}, H, F\times G)$
achieves the same accuracy as $\Psi_1^{+}$.
Hence, $B$ is redundant, so $\Psi_1^{+}$ violates atomicity and is rejected by the Minimality Test.\looseness=-1
\end{example}
}

\subsection{Perturbation Test}
\label{subsec:perturbation}
The perturbation test is designed to evaluate the \emph{stability} of a candidate \FR $\Psi$ by measuring how likely it is to remain valid under random cross-row perturbations of its participative values.
If $\Psi$ captures a genuine, non-trivial relationship, then randomly recombining observed values across tuples should often destroy it. 
By contrast, spurious candidates caused by low domain diversity or coincidental patterns tend to survive such perturbations more easily.
\looseness=-1

Let $\Psi=(X,Y,f_r)$ be a candidate \FR on a table $T$, and let $C_{\Psi}=X\cup Y$ denote its participative column set.
A single \emph{perturbation trial} uniformly samples two distinct rows $r_a, r_b$ from $T$ without replacement, uniformly samples a non-empty subset $C' \subseteq C_{\Psi}$, swaps 
the values of $C'$ between $r_a$ and $r_b$ to produce perturbed rows $r_a'$ and $r_b'$, 
and checks whether both $r_a'$ and $r_b'$ still satisfy $\Psi$.
If so, we say that $\Psi$ \emph{survives} the trial. 
\eat{
A single perturbation trial proceeds as follows:
\begin{enumerate}[leftmargin=*]
    \item Uniformly sample two distinct rows $r_a$ and $r_b$ from $T$ without replacement;
    
    \item Uniformly sample a non-empty subset $C' \subseteq C_{\Psi}$ from all non-empty subsets of $C_{\Psi}$;

    \item Swap the values of columns in $C'$ between $r_a$ and $r_b$, producing two perturbed rows $r_a'$ and $r_b'$;
    
    \item Check whether both perturbed rows $r_a'$ and $r_b'$ satisfy $\Psi$.
    If so, we say that $\Psi$ \emph{survives} the trial;

\end{enumerate}
}

We define the \emph{perturbation score} $p_{\kw{pert}}$ of $\Psi$, denoted by $p_{\kw{pert}}$ as the probability that $\Psi$ survives a random perturbation trial:
\[
p_{\kw{pert}} = \Pr[\Psi \text{ survives a random perturbation trial}].
\]
In practice, $p_{\kw{pert}}$ is estimated by 
the fraction of survivals over repeated trials.
A low score indicates that $\Psi$ is sensitive to random recombination and is therefore more likely to reflect a genuine 
relationship,
whereas
a high score suggests 
$\Psi$ is spurious.
We accept $\Psi$ as stable only if $p_{\kw{pert}} \le \eta$, where $\eta$ is a user-specified 
threshold.
\looseness=-1

\eat{
\begin{figure}
\centering
\includegraphics[width=\columnwidth]{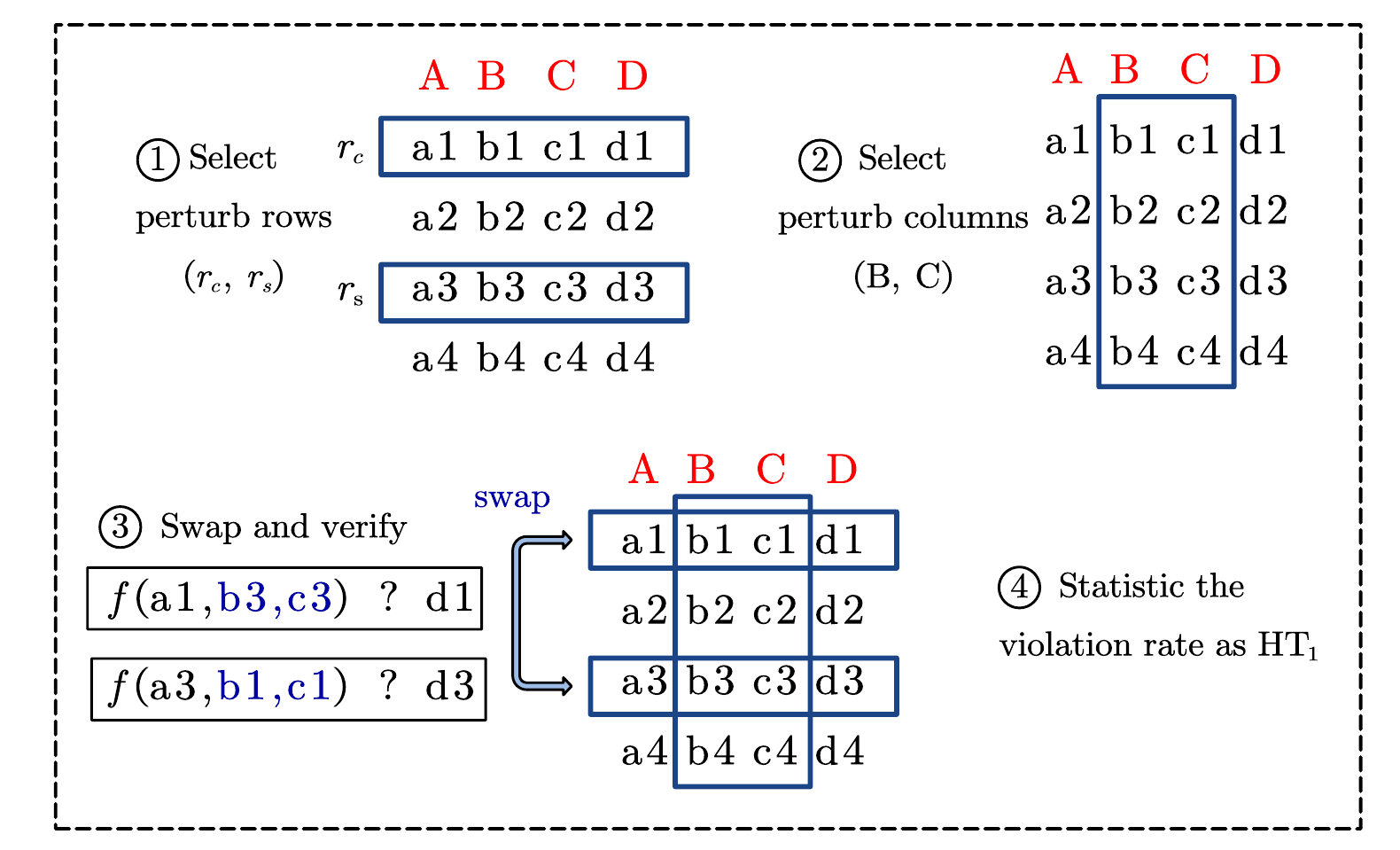}
\caption{Overview of the Perturbation Test procedure.}
\label{fig:HT1Flow}
\end{figure}
}

\smallskip
\noindent\textbf{Algorithm.}
An exact computation of $p_{\kw{pert}}$ requires enumerating all pairs of rows in $T$ together with all $2^{|C_{\Psi}|}-1$ non-empty column subsets of $C_{\Psi}$, which is computationally prohibitive.
We therefore adopt a sampling-based approximation to estimate $p_{\kw{pert}}$, as shown in Algorithm~\ref{alg:PerturbationTest}.
To further improve efficiency, \autorelate incorporates two optimization strategies: 
a \emph{group-by lower bound} that enables exact early rejection before sampling 
(lines~2--4), and a \emph{binomial bound} based on the Wilson confidence interval that 
enables statistically guided early termination during sampling (lines~14--17).
Their details and theoretical guarantees are presented in Section~\ref{sec:optimization}.
\eat{
To further improve efficiency, \autorelate incorporates two optimization strategies.
First, it computes a group-by lower bound $B_{\kw{group}}$ before sampling (line~2).
If this lower bound already exceeds the threshold $\eta$,  the candidate \FR can be rejected immediately without any sampling (lines~3--4).
Second, during sampling, the algorithm periodically  computes a binomial confidence interval for $p_{\kw{pert}}$ using the Wilson method, and terminates early once the interval lies entirely on one side of $\eta$ (lines~14--17). 
Together, these optimizations substantially improve efficiency.
The group-by bound enables exact early rejection before sampling, 
while the binomial bound provides a practical statistically motivated early-termination strategy during sampling.
Their details and theoretical guarantees are presented in Section~\ref{sec:optimization}.
}

\begin{algorithm}[!t]
\SetKwInOut{Input}{Input}
\SetKwInOut{Output}{Output}
\Input{A table $T$, a candidate \FR $\Psi$, a perturbation threshold $\eta$, maximum iterations $n_{\max}$, a confidence $z$-score, and checking interval $\beta$.}
\Output{An estimated perturbation score $p_\kw{pert}$.}

$C_{\Psi} \gets$ the participative column set of $\Psi$\;
$B_\kw{group} \gets$ the group-by lower bound computed from table $T$ and FR $\Psi$\;

\If{$B_\emph{\kw{group}} > \eta$}{
    \Return{$B_\emph{\kw{group}}$}
}

$n_s \gets 0$\;
$n \gets 0$\;

\For{$i \gets 1$ \KwTo $n_{\max}$}{
    $(r_a,\; r_b) \gets$ randomly sample two distinct
    rows from $T$\;
    $C' \gets$ a random non-empty subset of $C_{\Psi}$\;
    $(r_a',\; r_b') \gets \kw{perturb}(r_a,\; r_b,\; C')$\;

    \If{$r_a'$ satisfies $\Psi$ \textbf{and} $r_b'$ satisfies $\Psi$}{
        $n_s \gets n_s + 1$\;
    }

    $n \gets n + 1$\;

    \If{$\mathrm{mod}(n,\beta)=0$}{
        $B_{\kw{lower}}, B_{\kw{upper}} \gets \mathrm{wilson\_interval}(n, n_s, z)$\;
        \If{$B_\emph{\kw{lower}} > \eta$ \textbf{or} $B_\emph{\kw{upper}} < \eta$}{
            \textbf{break}\;
        }
    }
}

$p_\kw{pert} \gets n_s / n$\;
\Return{$p_\emph{\kw{pert}}$}\;
\caption{Perturbation Test}
\label{alg:PerturbationTest}
\end{algorithm}

\smallskip
\noindent\textbf{Complexity.}
An exact computation of the perturbation score $p_{\kw{pert}}$ requires enumerating all
$\binom{|T|}{2}$ row pairs and all $2^{|C_\Psi|}-1$ non-empty column subsets,
yielding $O(|T|^2 \cdot 2^{|C_\Psi|} \cdot |C_\Psi|)$ per candidate,
which is prohibitive even for moderately sized tables.
The sampling-based Algorithm~2 performs at most $n_{\max}$ iterations in the worst case,
each costing $O(|C_\Psi|)$ for the perturbation and verification.
Computing the group-by lower bound incurs an additional
pre-computation cost of $O(|T| \cdot 2^{|C_\Psi|})$ per candidate, but when $B_{\mathrm{group}} > \eta$, the sampling loop is bypassed entirely.
For the remaining candidates, Wilson-interval early termination often stops sampling after only $n^* \ll n_{\max}$ iterations in practice, yielding an effective per-candidate cost of
$O(|T| \cdot 2^{|C_\Psi|} + n^* \cdot |C_\Psi|)$.

\begin{example}
\label{exa:perturbation}
Consider the genuine \FR $\Psi_1=(\{F,G\}, H, F\times G)$ and the spurious candidate 
$\Psi_{\mathrm{coin}}=(\{B,C\}, E, B\times C)$ from Example~\ref{exa:reliability}.
For $\Psi_1$, swapping the value of column $G$ between two sampled rows in 
Figure~\ref{fig:ex-AR} typically violates the equality $H=F\times G$, yielding a low 
perturbation score.
By contrast, $\Psi_{\mathrm{coin}}$ often survives perturbation because $B$, $C$, and $E$ are sparse binary columns and many tuples coincidentally satisfy $B\times C=E=0$.
\end{example}

\eat{
For example, recall Figure \ref{fig:ex-AR}-\ref{fig:ex-FD} and observe how \HTOne~ works on three functional relationships.

\textbf{Arithmetic Relationship.} We start with the Arithmetic Relationship examples from Figure \ref{fig:ex-AR}. The first AR $\Psi_H=(\{F,G\},F*G)$ consists of three columns $F,G, and H$, each of which has multiple elements in the domain.
When applying \HTone~to this three-columns relationship, perturbing one column and perturbing two columns have the same effect. AR does not violate only if \HTOne~ selects the same value for swap or if both F and H columns are equal to 0. According to the \HTOne, the \ReliabilityScore of $\Psi_H=(\{F,G\},F*G)$ is set to 0.02. In the experiments section, we set the default value of \SignificanceLevel to 0.5, so $\Psi_H=(\{F,G\},F*G)$ is correctly considered reliable. As mentioned in Section 2.2, $\Psi_E=(\{B,C\},B*C)$ is not a AR. Since columns B and E are composed of zeros, no matter which group of perturbation rows and perturbation columns is selected, there is no violation in AR. Fake AR $\Psi_E=(\{B,C\},B*C)$ will be filtered out because \ReliabilityScore = 1.

\textbf{String Transformation.} Recall the String Transformation examples in Figure \ref{fig:ex-ST}. The first functional relationship in TP is $\Psi_E=(\{B,C,D\}, concatenate(B,C,D))$. For each column in $C_{\Psi}$ , the relationship only holds if the perturbation results in the same value. $\Psi_E=(\{B,C,D\}, concatenate(B,C,D))$'s \ReliabilityScore is $0.06$, suggesting that this transformation is likely reliable.
The second FR in FP is $\Psi_K=(\{L\}, L\to K$), which means that $K$ is equal to the last character of $L$. $L$(MINSALES) is a column of defined large integers ending in zero, while $K$ consists of a column of zeros.  From a human perspective, there is no FR between $K$ and $L$;  it is merely a coincidence that they both end in zero. This type of coincidental string transformation is prevalent in tables and can be identified using \HTone. Whether column A or B is perturbed, the FR always holds. The fake FR $\Psi_B=(\{A\}, A\to B$) will be filtered out because \ReliabilityScore = 1.

\textbf{Functional Dependency.} The FR $\Psi_B=(\{A\}, A\to B)$ was introduced in Section 2. $\Psi_B=(\{A\}, A\to B)$ is considered reliable because its \ReliabilityScore score is 0.02.
The first functional dependency in FP is $\Psi_M=(\{L\}, L\to M)$, which coincidentally appears because the domain of column $M$ contains only one element. Regardless of how any two elements within $L$ are swapped, the FR always holds. Therefore, $\Psi_M=(\{L\}, L\to M)$ will be filtered out by \HTone.
$\Psi_H=(\{E\}, E\to H)$ holds across the entire table because each value in $E$ is unique. However, there is no necessary connection between Unit price and Category. Suppose new data rows are added with items from two different categories having the same price, the new rows might be altered due to not fitting the fake FR $\Psi_H=(\{E\}, E\to H)$. When \HTone~is applied to $\Psi_H=(\{E\}, E\to H)$, since each Unit price appears only once, perturbing any two rows still maintains the relationship. $\Psi_H=(\{E\}, E\to H)$ will also be filtered out due to an  score $p$ of 1.
}

\subsection{Independence Test}
\label{subsec:independence}
In the dirty-data setting, a candidate \FR $\Psi$ may hold on most rows 
while violating a few 
due to noise.
A natural approach is to discard the violation rows and apply the Perturbation Test on the remaining clean rows.
However, this 
removes precisely the rows that carry diagnostic information; violations may be systematically caused by an omitted column, and removing them eliminates the evidence needed to detect such incompleteness.
For example, removing the violation rows of
$\Psi_2^{-}$ 
in Example~\ref{exa:reliability}
would eliminate the rows where $N\neq 0$,
causing it to incorrectly survive the subsequent Perturbation Test.
\looseness=-1

To address this issue, we introduce the \emph{Independence Test}, which verifies the \emph{integrity} of a candidate \FR before the Perturbation Test.
Recall from Definition~\ref{def:integrity} that integrity requires the violation pattern of $\Psi$ to be
independent of all non-participative columns $Z = C \setminus C_\Psi$.
Intuitively, if the violations of $\Psi$ are systematically  correlated with any column
$c' \in Z$, then $\Psi$ violates integrity; the output column
is not solely determined by the declared input columns $X$, but is also
influenced by $c'$, which should have been included in $X$.
Such candidates are therefore regarded as unreliable and are discarded.

Let $\Psi=(X,Y,f_r)$ be a candidate \FR on table $T$, with $C_\Psi=X\cup Y$. 
For each tuple $t \in T$, we define the \emph{violation indicator}
\[
I_{\Psi}(t)=\indicator[t \not\models \Psi],
\]
which equals $1$ if $t$ violates $\Psi$ and $0$ otherwise. 
Let $\mathbf{I}_\Psi = (I_\Psi(t))_{t \in T}$ denote the 
indicator vector over all tuples in $T$.
For every non-participative column $c' \in C \setminus C_{\Psi}$,
the Independence Test examines whether the binary variable
$I_{\Psi}(t)$ is statistically independent of the observed value $t[c']$.
Formally, for each such column $c'$, we test
\[
H_0^{c'}: I_{\Psi}(t) \perp\!\!\!\perp t[c']
\qquad \text{against} \qquad
H_1^{c'}: I_{\Psi}(t) \not\!\perp\!\!\!\perp t[c'].
\]
Empirically, this is implemented by constructing the contingency table
between $\mathbf{I}_{\Psi}$ and $c'$ and computing the corresponding $p$-value
via a chi-square test of independence~\cite{pearson1900}.
If any $p$-value is no greater than the significance threshold $\alpha$,
we reject $\Psi$ as violating integrity; otherwise, $\Psi$ passes the Independence Test.

\eat{
\begin{figure}
\centering
\includegraphics[width=\columnwidth]{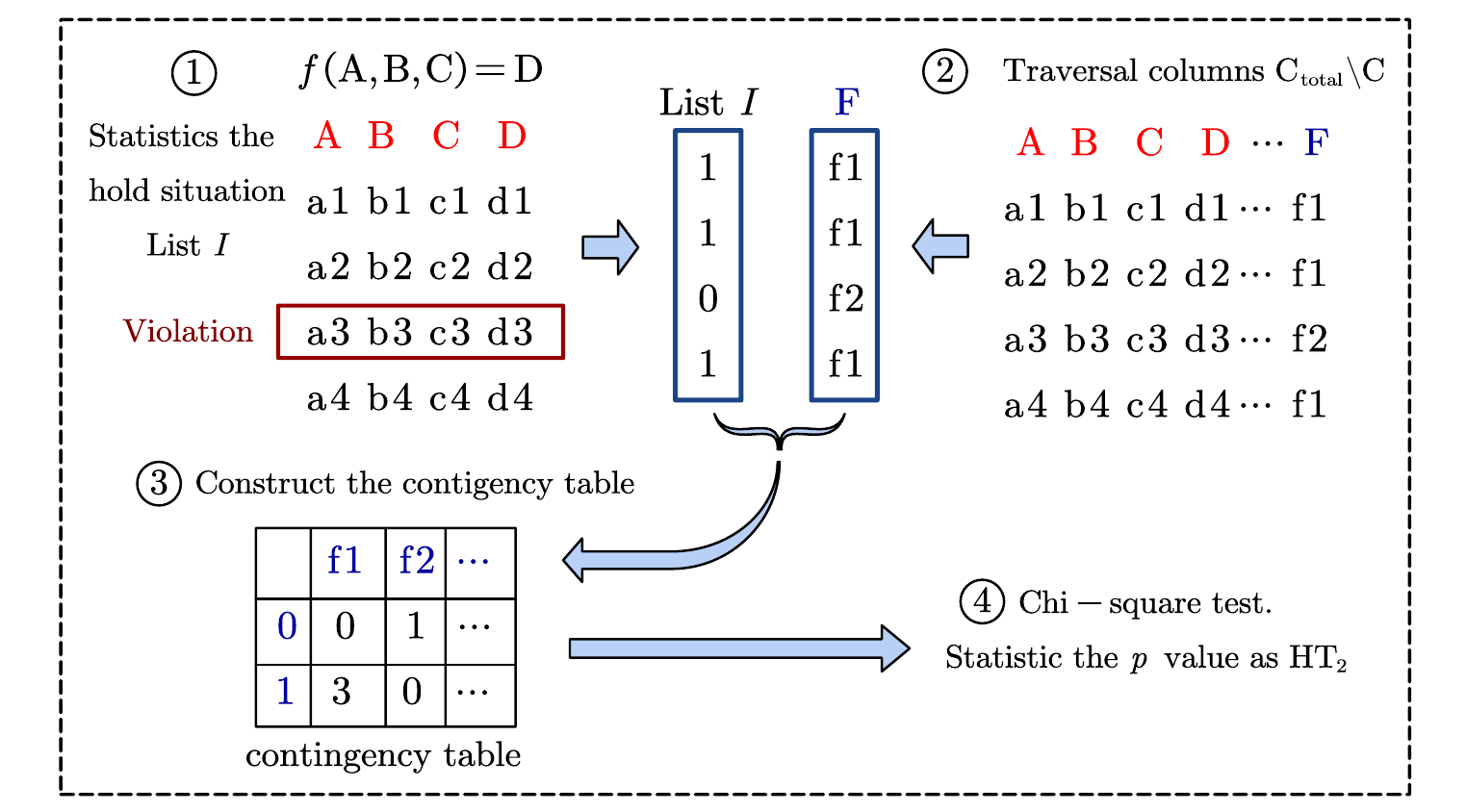}
\caption{Overview of the Independence Test procedure.}
\label{fig:IndependenceTest}
\end{figure}
}

\smallskip
\noindent\textbf{Algorithm.}
Algorithm~\ref{alg:IndependenceTest} summarizes the Independence Test procedure. 
It first computes the violation indicator vector $\mathbf{I}_\Psi$,
where $I_\Psi(t) = 1$ if tuple $t$ violates $\Psi$ and $0$ otherwise (line~1).
It then iterates over each non-participative column $c' \in C_{\mathrm{irr}}$ (line~4):
for each such column, it constructs the contingency table between $c'$ and
$\mathbf{I}_\Psi$ (line~5) and computes the corresponding $p$-value via a
chi-square test of independence~\cite{pearson1900} (line~6).
If any $p$-value falls below the significance threshold $\alpha$, the candidate \FR
$\Psi$ is immediately rejected as violating integrity and the algorithm returns
\kw{False} (lines~7--8); if no such column is found, $\Psi$ passes the
Independence Test and the algorithm returns \kw{True} (line~9).

\eat{
the Independence Test proceeds in four steps:
\begin{enumerate}[leftmargin=*]
\item Compute the violation indicator vector $\mathbf{I}_\Psi$, 
where $I_\Psi(t) = 1$ if tuple $t$ violates $\Psi$ 
and $I_\Psi(t) = 0$ otherwise.

\item For each non-participative column $c' \in C \setminus C_\Psi$,
construct the contingency table between $c'$ and $I_\Psi$.

\item Apply the chi-square test of independence~\cite{pearson1900} to each contingency
table, testing the null hypothesis $H_0^{c'}: I_{\Psi}(t) \perp\!\!\!\perp t[c']$, 
and record the corresponding $p$-value $p_c'$ measuring the association between
$c'$ and $I$.

\item If any $p$-value $p_c'$ falls below the significance threshold $\alpha$, reject the candidate FR $\Psi$ as violating integrity; otherwise, $\Psi$ passes the Independence Test.
\end{enumerate}
}

\eat{
\begin{table}
    \caption{An example of a contingency table}
    \centering
    \label{tab:ContingencyTable}
    \begin{tabular}{c|ccc|c}
    \hline
    \diagbox{I}{c$_N$} & 7550 & 1000 & 0  & Total \\ \hline
    1                  & 1    & 1    & 0  & 2     \\
    0                  & 0    & 0    & 18 & 18    \\ \hline
    Total              & 1    & 1    & 18 & 20    \\ \hline
    \end{tabular}
\end{table}
}

\eat{
\begin{example}
\label{exa:independence}
Continuing with the incomplete candidate
$\Psi_2^{-}=(\{H,K\}, O, H+K)$
from Example~\ref{exa:reliability}, suppose we are given a dirty-data variant of
Figure~\ref{fig:ex-AR}.
Although $\Psi_2^{-}$ may still be $\tau$-accurate, its violations are not random;
they occur on rows where the omitted column $N$ is nonzero.
The Independence Test therefore examines whether the violation indicator of
$\Psi_2^{-}$ is independent of each non-participative column.
In this case, the violation pattern is significantly associated with $N$,
indicating that $\Psi_2^{-}$ omits an essential attribute and thus violates
integrity.
\end{example}
}

\eat{
\begin{example}
Consider the FRs in Figure~\ref{fig:dirty-AR}.
For $\Psi_6 = (\{F,G\}, H, F \times G)$, the indicator vector $\mathbf{I}_{\Psi_6}$ shows no significant correlation with any non-participative column, so $\Psi_6$ passes the Independence Test.
For $\Psi_7 = (\{H,K\}, O, H+K)$, the contingency table between $\mathbf{I}_{\Psi_7}$ and column $N$ (shown in Table~\ref{tab:ContingencyTable}) yields a chi-square $p$-value of $4.54 \times 10^{-5}$, well below
$\alpha = 0.05$, indicating a significant dependence.
This reveals that the violations of $\Psi_7$ are not random noise but are systematically associated with the values of column $N$, indicating that $\Psi_7$ violates integrity and is therefore unreliable.
\autorelate correctly rejects it.
\end{example}
}

\noindent\textbf{Complexity.}
Let $C_{\mathrm{irr}} = C \setminus C_\Psi$ be the set of non-participative columns.
Computing the violation indicator vector $\mathbf{I}_\Psi$ costs $O(|T| \cdot |C_\Psi|)$.
For each $c' \in C_{\mathrm{irr}}$, constructing the contingency table requires a single
pass over $T$ in $O(|T|)$, and the chi-square test costs $O(d_{c'})$ where $d_{c'}$
is the number of distinct values in $c'$.
The total cost is therefore
$O\!\left(|T| \cdot |C_\Psi| + |C_{\mathrm{irr}}| \cdot (|T| + \max_{c'} d_{c'})\right)
= O(|T| \cdot |C|)$, since $|C_\Psi|, |C_{\mathrm{irr}}| \leq |C|$ and $d_{c'} \leq |T|$.
Note that the Independence Test is invoked only in the dirty-data setting;
in the clean-data setting, $\mathbf{I}_\Psi$ is uniformly zero and the test is skipped.

\eat{
\smallskip
\noindent\textbf{Remark.}
The Independence Test is meaningful only in the dirty-data setting.
In the clean-data setting, any candidate that satisfies accuracy has no observed violations, so $\mathbf{I}_\Psi$ is uniformly zero and carries no information for independence analysis.
}

\begin{algorithm}[t]
\caption{Independence Test}
\label{alg:IndependenceTest}
\SetKwInOut{Input}{Input}
\SetKwInOut{Output}{Output}
\Input{A table $T$, a candidate FR $\Psi$, and a significance threshold $\alpha$.}
\Output{A Boolean decision indicating whether $\Psi$ passes the Independence Test.}

$\mathbf{I}_{\Psi} \gets$ the violation indicator vector of $\Psi$ on $T$\;
$C_{\Psi} \gets$ the participative column set of $\Psi$\;
$C_{\mathrm{irr}} \gets C \setminus C_{\Psi}$, where $C$ is the column set of $T$\;

\ForEach{$c' \in C_{\mathrm{irr}}$}{
    construct the contingency table between $\mathbf{I}_{\Psi}$ and $c'$\;
    $p \gets$ the $p$-value of the chi-square test of independence on the contingency table\;
    \If{$p \le \alpha$}{
        \Return{\emph{\kw{False}}}
    }
}
\Return{\emph{\kw{True}}}
\end{algorithm}

\section{Optimization Strategies}
\label{sec:optimization}
The reliability verification stage of \autorelate can be computationally expensive.
As analyzed in Sections~\ref{subsec:perturbation} and~\ref{subsec:independence}, 
the Independence Test runs in $O(|T| \cdot |C|)$ time per candidate, which is linear in the table size and inexpensive in practice.
The Perturbation Test, however, requires enumerating all $\binom{|T|}{2}$ row pairs and all $2^{|C_\Psi|}-1$ non-empty column subsets in its exact form,
yielding a per-candidate complexity of $O(|T|^2 \cdot 2^{|C_\Psi|})$, which is prohibitive even for moderately sized tables.

To address these bottlenecks, we develop three complementary optimization strategies for the \kw{PerturbationTest}.
We first introduce a \emph{group-by bound} that 
rejects candidate FRs before sampling begins (Section~\ref{subsec:groupby}). 
We then propose a \emph{closed-form speed-up} strategy for AR-type FRs that reduces the per-iteration verification cost within the sampling loop by exploiting their expression-tree structure (Section~\ref{subsec:closedform}).
Finally, we develop a \emph{binomial bound} that enables statistically guided early termination during sampling once the perturbation score can be determined with statistical confidence (Section~\ref{subsec:binomial}).
Together, these strategies substantially reduce the practical cost of perturbation-based verification while preserving the correctness of deterministic pruning and providing approximate statistical error control for early termination.

\eat{
\subsection{Candidate-Level Pruning}
\label{subsec:pruning}
We first propose a pruning strategy that exploits the \emph{atomicity} criterion to reduce the number of candidates that enter the reliability verification stage.
If a smaller input column set $X' \subset X$ already determines the output column $Y$ with sufficient accuracy, i.e., $\mathrm{sat}_T(X', Y, f'_r) \geq \tau$
for some mapping $f'_r$, then any candidate $\Psi = (X, Y, f_r)$ with $X' \subset X$ is redundant and can be safely discarded before entering the subsequent verification steps.

To filter out non-atomic FRs, we enumerate candidate FRs
in a \emph{layer-wise} manner, processing all candidates with $|X| = i$ before those with $|X| = i+1$.
At each level $i$, \autorelate invokes the \kw{MinimalityTest} (Line~7 of Algorithm~\ref{alg:Auto-Relate}):
for each candidate $\Psi = (X, Y, f_r)$, it checks whether the pruning set $R_{\kw{prune}}$ already contains a previously retained candidate $\Psi' = (X', Y, f'_r)$ with $X' \subset X$.
If so, $\Psi$ is non-atomic and is pruned immediately.
Because the enumeration proceeds level by level, pruning decisions made at level $i$ propagate to all later levels, thereby eliminating potentially large numbers of supersets at level $i+1$ and beyond.
This strategy is especially effective for tables with many columns, where the number of candidate column subsets grows exponentially.

A further advantage of the layer-wise structure is that candidates within the same level are mutually independent, since the verification outcome of
one candidate does not affect any other at the same level.
This enables all candidates within a level to be verified in parallel, and we exploit this property in our implementation using a multi-process architecture.
}

\subsection{Group-by Bound}
\label{subsec:groupby}
We first introduce a deterministic group-by lower bound on the perturbation score of a candidate \FR $\Psi=(X,Y,f_r)$.
This bound enables exact rejection before the sampling loop begins.

The key observation is straightforward.
In a perturbation trial, two rows $r_a$ and $r_b$ are sampled uniformly at random from $T$ without replacement, and the values of a randomly selected non-empty subset $C' \subseteq C_\Psi$ are swapped between them.
If the swapped values are identical across $r_a$ and $r_b$ for every column in $C'$, i.e., $r_a[c] = r_b[c]$ for all $c \in C'$, then the perturbed rows $r'_a$ and $r'_b$ remain unchanged, and the FR trivially survives the perturbation.
Therefore, the perturbation score $p_\kw{pert}$ is at least as large as the probability of such an \emph{identity-swap} event.

\eat{ 
We formalize this observation as follows.
For a single column $c \in C_\Psi$, let $v_1, v_2, \ldots, v_m$ be the distinct values of $c$ in $T$, and let $n_i = |\{t \in T : t[c] = v_i\}|$ be the frequency of value $v_i$.
Then, when two rows are sampled uniformly at random without replacement, the probability that they share the same value in column $c$ is
\begin{equation}
    \label{eq:single_col_bound}
    q_c = \sum_{i=1}^{m} \frac{n_i}{|T|} \cdot \frac{n_i - 1}{|T| - 1}.
\end{equation}

More generally, for a non-empty subset $C' \subseteq C_\Psi$ consisting of multiple columns, the identity-swap event requires all columns in $C'$ to match simultaneously.
Let $\mathbf{v}_1, \mathbf{v}_2, \ldots, \mathbf{v}_m$ be the distinct \emph{joint values} observed on $C'$, and let $n_i$ be the number of rows sharing joint value $\mathbf{v}_i$.
Then the probability of an identity swap on $C'$ is
\begin{equation}
    \label{eq:joint_col_bound}
    q_{C'} = \sum_{i=1}^{m} \frac{n_i}{|T|} \cdot \frac{n_i - 1}{|T| - 1}.
\end{equation}
}

For a non-empty subset $C' \subseteq C_\Psi$, let $\mathbf{v}_1, \ldots, \mathbf{v}_m$ be the distinct joint values observed on $C'$, and let $n_i$ be the number of rows sharing joint value $\mathbf{v}_i$.
When two rows are sampled uniformly at random without replacement, the probability of an identity swap on $C'$ is
\begin{equation}
    \label{eq:joint_col_bound}
    q_{C'} = \sum_{i=1}^{m} \frac{n_i}{|T|} \cdot \frac{n_i - 1}{|T| - 1}.
\end{equation}

Recall that in Algorithm~\ref{alg:PerturbationTest}, 
each column in $C_\Psi$ is independently included in $C'$ with probability $1/2$, conditioned on $C'$ being non-empty.
This is equivalent to sampling $C'$ uniformly at random from the $2^{|C_\Psi|} - 1$ non-empty subsets of $C_\Psi$. 
Accordingly, we define the group-by lower bound $B_{\kw{group}}$ as the expected identity-swap probability under this distribution:
\begin{equation}
    \label{eq:group_bound}
    B_{\kw{group}} = \frac{1}{2^{|C_\Psi|} - 1} \sum_{\emptyset \neq C' \subseteq C_\Psi} q_{C'}.
\end{equation}

\begin{lemma}[Group-by Lower Bound]
\label{lem:groupby}
In the clean-data setting, let $p_\emph{\kw{pert}}$ denote the perturbation score of a candidate \FR $\Psi$ on a table $T$, computed by Algorithm~\ref{alg:PerturbationTest}. Then $p_\emph{\kw{pert}} \geq B_{\kw{group}}$.
\end{lemma}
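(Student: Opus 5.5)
We will interpret $p_{\kw{pert}}$ as the exact survival probability of one random trial, i.e., the quantity that Algorithm~\ref{alg:PerturbationTest} estimates. We then compare it to $B_{\kw{group}}$ by conditioning on the column subset $C'$. A trial draws an unordered pair $\{r_a,r_b\}$ uniformly without replacement, and independently draws $C'$ uniformly from the $2^{|C_\Psi|}-1$ non-empty subsets of $C_\Psi$. By the law of total probability,
\[
p_{\kw{pert}} = \frac{1}{2^{|C_\Psi|}-1}\sum_{\emptyset\neq C'\subseteq C_\Psi} \Pr[\Psi \text{ survives} \mid C'].
\]
Comparing this term by term with Eq.~(\ref{eq:group_bound}), it suffices to show $\Pr[\Psi \text{ survives}\mid C'] \ge q_{C'}$ for every fixed non-empty $C'$.

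Fix $C'$ and let $E_{C'}$ be the identity-swap event $r_a[c]=r_b[c]$ for all $c\in C'$. We will prove two facts.

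\emph{First, $\Pr[E_{C'}] = q_{C'}$.} Partition the rows by their joint value on $C'$ into groups of sizes $n_1,\dots,n_m$. The probability that the first sampled row lies in group $i$ is $n_i/|T|$. Given this, the probability that the second, distinct row lies in the same group is $(n_i-1)/(|T|-1)$. Summing over $i$ gives Eq.~(\ref{eq:joint_col_bound}). Sampling an ordered pair without replacement gives the same distribution over unordered pairs, so this is consistent with the algorithm.

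\emph{Second, $E_{C'}$ implies survival.} On $E_{C'}$, swapping the values of $C'$ leaves both rows unchanged, so $r_a'=r_a$ and $r_b'=r_b$. In the clean-data setting the candidate is accurate (Definition~\ref{def:accuracy}), so $T\models\Psi$, hence $r_a\models\Psi$ and $r_b\models\Psi$. Therefore $\Psi$ survives. This is the only place the clean-data hypothesis is used. In the dirty setting, a violating row could be left unchanged and the trial would count as a failure, which is why the lemma is restricted to clean data.

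Combining the two facts gives $\Pr[\text{survive}\mid C'] \ge \Pr[E_{C'}\mid C'] = q_{C'}$. Averaging over $C'$ yields $p_{\kw{pert}}\ge B_{\kw{group}}$.

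The main obstacle is interpretive rather than computational. The empirical value $n_s/n$ returned by the sampling loop is only an estimate and need not exceed $B_{\kw{group}}$ on every run. The claim should therefore be stated and proved for the true trial probability, equivalently the expectation of the estimator, with the sampling error deferred to the binomial bound of Section~\ref{subsec:binomial}.

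A secondary point is to check that the algorithm's subset distribution is genuinely uniform over non-empty subsets, as the text asserts. Taking that assertion as given makes the averaging weights match Eq.~(\ref{eq:group_bound}) exactly.
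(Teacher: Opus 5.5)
Your proposal is correct and matches the paper's proof: you condition on $C'$, note that the identity-swap event leaves both (accurate, hence satisfying) rows unchanged and so implies survival, which gives $\Pr[\text{survive}\mid C'] \ge q_{C'}$, and then average uniformly over the $2^{|C_\Psi|}-1$ non-empty subsets. Your extra remarks are consistent with the paper: you derive $q_{C'}$ explicitly, and you read $p_{\mathrm{pert}}$ as the true per-trial survival probability rather than the empirical ratio $n_s/n$, which is what the paper's proof does when it sets $p_{\mathrm{pert}} = \Pr[S]$.
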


\begin{proof}
Fix any non-empty subset $C' \subseteq C_\Psi$.
Let $I$ denote the identity-swap event that $r_a[c] = r_b[c]$ for every $c \in C'$, and let $S$ denote the event that $\Psi$ survives the perturbation trial.
Under event $I$, swapping the values on $C'$ leaves both rows unchanged, i.e., $r'_a = r_a$ and $r'_b = r_b$.
In the clean-data setting, both $r_a$ and $r_b$ satisfy $\Psi$ before perturbation, and since they remain unchanged, $\Psi$ is also satisfied after perturbation.
Hence, $I \subseteq S$, and therefore
$\Pr[S | C'] \geq \Pr[I | C'] = q_{C'}$.
Taking expectation over the uniform choice of $C'$, we have \looseness=-1
\begin{align}
p_\kw{pert} &= \Pr[S]
   = \frac{1}{2^{|C_\Psi|}-1}
     \sum_{\emptyset \neq C' \subseteq C_\Psi} \Pr[S \mid C'] \notag \\
  &\geq \frac{1}{2^{|C_\Psi|}-1}
     \sum_{\emptyset \neq C' \subseteq C_\Psi} q_{C'}
   = B_{\kw{group}}. \tag*{\mbox{$\Box$}}
\end{align}
\renewcommand{\eop}{}
\end{proof}

Therefore, if $B_{\kw{group}} > \eta$, the candidate \FR can be safely rejected without entering the sampling loop (Lines~2--4 of Algorithm~\ref{alg:PerturbationTest}).

\smallskip
\noindent\textbf{Dirty-data extension.}
In the dirty-data setting, a candidate \FR is allowed to have a small number of violations after passing the $\tau$-accuracy filter.
To apply the group-by bound in this setting, we restrict
attention to the subset of rows on which $\Psi$ holds, i.e., $T^+_\Psi = \{t \in T : t \models \Psi\}$.
Since every identity swap on $T^+_\Psi$ leaves the sampled rows unchanged and those rows satisfy $\Psi$ before perturbation, the same argument as in Lemma~\ref{lem:groupby} yields a valid lower bound for the
perturbation score over initially satisfied rows.
Accordingly, in the dirty-data setting, we use
$B_{\kw{group}}^{+} = 
B_{\kw{group}}(T_\Psi^{+})$.

\eat{
In practice, computing $B_{\kw{group}}$ or $B_{\kw{group}}^{+}$ requires group-by aggregations over the $2^{|C_\Psi|} - 1$ non-empty subsets of $C_\Psi$, incurring a cost of $O(|T| \cdot 2^{|C_\Psi|})$ per candidate.
Although exponential in $|C_\Psi|$, this is substantially cheaper than the full sampling process, and in practice $|C_\Psi|$ is small for most candidates.
As we show in Section~\ref{sec:experiment}, this bound alone allows 85.4\% of candidate FRs on the clean AR dataset to skip the sampling process entirely.

\smallskip
\noindent\textbf{Remark.}
For AR-type FRs represented as expression trees, an additional per-iteration speed-up is available. 
When the perturbed columns correspond to the left and right children of a binary operation node, the survival check reduces to testing whether the two rows share the same value of the corresponding sub-expression (the
\emph{perturbation core}), thereby avoiding re-evaluation of the full expression tree for each trial.
}

\subsection{Closed-form Speed-up}
\label{subsec:closedform}
For AR-type FRs that enter the sampling loop, we additionally reduce the per-trial verification cost by exploiting the expression-tree representation of the arithmetic formula.

Specifically, let $\Psi=(X,Y,f_r)$ be an AR candidate whose formula $f_r$ is represented as a binary expression tree.
When the sampled perturbation set $C'$ coincides with the set of leaf columns in the subtree rooted at 
some internal binary operator node $u$, we refer to the value of that sub-expression as the \emph{perturbation core}.
In this case, swapping all columns in $C'$ between two sampled rows only affects the perturbation core while leaving the rest of the expression unchanged.
Therefore, for such perturbations, the trial outcome can be determined by checking whether the perturbation core remains unchanged, rather than re-evaluating the entire 
expression tree on the perturbed rows.
This optimization is specific to AR-type FRs and is applied only when the sampled perturbation set matches the leaf set of some subtree in the expression tree.

\eat{
\begin{example}
Consider the FR $A \times B - C = D$ in Figure~\ref{fig:ClosedForm}.
If the perturbed columns are $A$ and $B$, then the sub-expression $A \times B$ forms the perturbation core. 
For two sampled rows that satisfy the FR before perturbation, swapping $A$ and $B$ preserves the FR whenever the two rows have the same value of $A \times B$, so the trial outcome can be determined without recomputing the full formula.
\end{example}

\begin{figure}[t]
\centering
\includegraphics[width=0.8\columnwidth]{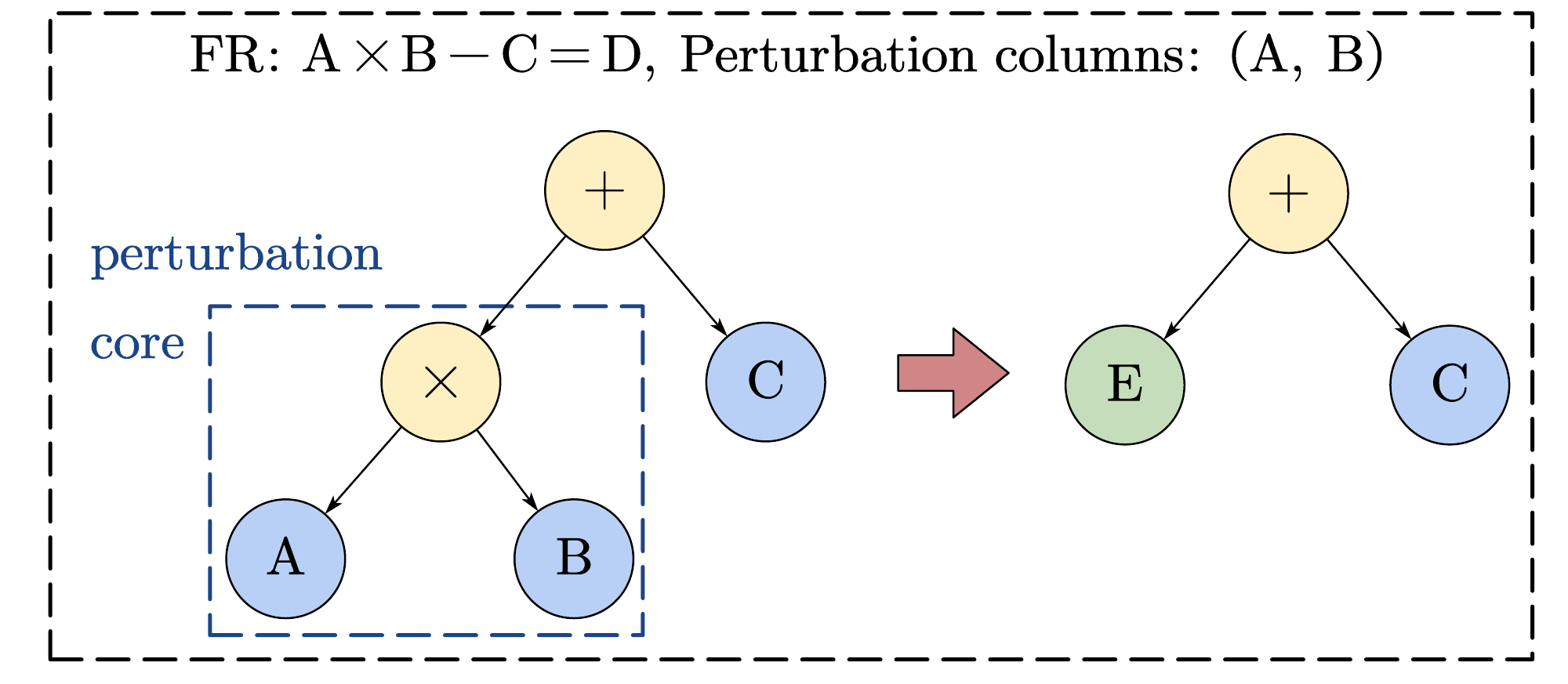}
\caption{Illustration of the closed-form speed-up for ARs.}
\label{fig:ClosedForm}
\end{figure}
}

\begin{example}
\label{exa:closedform}
Consider the genuine AR $\Psi_2=(\{H,K,N\}, O, H+K+N)$ in Example~\ref{exa:FRs} and a perturbation trial in which the sampled column subset is $C'=\{H,K\}$.
In the expression tree of $\Psi_2$, the columns $H$ and $K$ form the sub-expression $H+K$, which serves as the perturbation core.
Since the remaining part of the expression, namely $N$, is unchanged in this trial, the Closed-form Speed-up does not need to recompute the full expression $H+K+N$ on the perturbed tuples.
Instead, provided both sampled tuples satisfy $\Psi_2$ before perturbation,
it only checks whether they share the same core value, i.e., 
$H_{r_a}+K_{r_a}=H_{r_b}+K_{r_b}$, which is both necessary and sufficient for
$\Psi_2$ to remain satisfied after swapping $H$ and $K$ between $r_a$ and $r_b$.
\end{example}

\subsection{Binomial Bound}
\label{subsec:binomial}
When the group-by lower bound does not suffice to reject a candidate FR, Algorithm~\ref{alg:PerturbationTest} enters the sampling loop.
Each perturbation trial yields a Bernoulli outcome.
If the number of trials were fixed in advance at $n$, then the cumulative number of survivals $n_s$ would follow a $\mathrm{Binomial}(n, p_\kw{pert})$ distribution, where $p_\kw{pert}$ denotes the true perturbation score.
This motivates the use of the Wilson score interval as a practical confidence interval for $p_\kw{pert}$.
In Algorithm~\ref{alg:PerturbationTest}, at each checkpoint with $n$ completed trials and $n_s$ observed survivals, we compute the Wilson interval for $p_\kw{pert}$ and terminate sampling early once the interval lies entirely on one side of the threshold $\eta$.
We emphasize, however, that because the interval is repeatedly monitored during sampling and the stopping time is data-dependent, the resulting error control should be interpreted as approximate rather than as an exact finite-sample sequential guarantee.
\looseness=-1

Specifically, we adopt the Wilson score interval~\cite{wilson1927probable}
rather than the Wald interval, because it is known to exhibit substantially better finite-sample performance, especially for small $n$ or extreme values of $p_\kw{pert}$.
At any fixed checkpoint after $n$ trials with $n_s$ successes, the nominal $(1-\gamma)$ confidence interval $[B_{\mathrm{lower}}, B_{\mathrm{upper}}]$ for $p_\kw{pert}$ is
\begin{equation}
    \label{eq:wilson_lower}
    B_{\mathrm{lower}} = \frac{n_s + z^2/2}{n + z^2} -
    \frac{z\sqrt{n_s(n - n_s)/n + z^2/4}}{n + z^2},
\end{equation}
\begin{equation}
    \label{eq:wilson_upper}
    B_{\mathrm{upper}} = \frac{n_s + z^2/2}{n + z^2} +
    \frac{z\sqrt{n_s(n - n_s)/n + z^2/4}}{n + z^2},
\end{equation}
where $z=\Phi^{-1}(1-\gamma/2)$ is the standard normal critical value for a nominal two-sided confidence level of $1-\gamma$.

Based on this interval, the sampling loop in Algorithm~\ref{alg:PerturbationTest} terminates early if $B_{\kw{lower}} > \eta$ or $B_{\kw{upper}} < \eta$ (lines~14--17).
That is, early termination occurs once the entire Wilson interval lies on one side of the threshold~$\eta$.
To reduce the overhead of interval evaluation, the Wilson interval is computed once every $\beta$ iterations rather than after every perturbation trial.

\begin{lemma}[Binomial Bound]
\label{lem:wilson}
Let $p_\emph{\kw{pert}}$ denote the true perturbation score of a candidate FR $\Psi$.
If Algorithm~\ref{alg:PerturbationTest} terminates early because $B_{\mathrm{lower}} > \eta$ or $B_{\mathrm{upper}} < \eta$,
then the resulting threshold-based decision can disagree with the exact decision based on $p_\emph{\kw{pert}}$ only if $p_\emph{\kw{pert}} \notin [B_{\mathrm{lower}}, B_{\mathrm{upper}}]$.
\end{lemma}

\begin{proof}
The early termination is triggered only when the Wilson interval at the stopping checkpoint
$[B_{\mathrm{lower}}, B_{\mathrm{upper}}]$ lies strictly on one side of the threshold $\eta$, i.e., when
$B_{\mathrm{lower}} > \eta$ or $B_{\mathrm{upper}} < \eta$.
If $B_{\mathrm{upper}} < \eta$ and $p_\kw{pert} \in [B_{\mathrm{lower}}, B_{\mathrm{upper}}]$, then $p_\kw{pert} < \eta$,
so the resulting threshold-based decision agrees with the exact decision based on $p_\kw{pert}$.
Similarly, if $B_{\mathrm{lower}} > \eta$ and $p_\kw{pert} \in [B_{\mathrm{lower}}, B_{\mathrm{upper}}]$, then $p_\kw{pert} > \eta$, so the two decisions also agree.
Therefore, a disagreement can occur only if $p_\kw{pert} \notin [B_{\mathrm{lower}}, B_{\mathrm{upper}}]$, that is, 
\[
\{\text{decision error}\} \subseteq
\{p_\kw{pert} \notin [B_{\mathrm{lower}}, B_{\mathrm{upper}}]\}.
\]
Taking probabilities on both sides gives
\[
\Pr(\text{decision error}) \le
\Pr(p_\kw{pert} \notin [B_{\mathrm{lower}}, B_{\mathrm{upper}}]). \tag*{\mbox{$\Box$}}
\]
\vspace{-1em}
\renewcommand{\eop}{}
\end{proof}

Consequently, any early-termination decision can be erroneous only if the Wilson interval at the termination checkpoint fails to cover $p_\kw{pert}$.
Thus, the Binomial Bound reduces the practical cost of the Perturbation Test by terminating sampling once the decision relative to $\eta$ becomes clear from the monitored Wilson interval, while this early-termination criterion should be interpreted as a practical approximation rather than an exact sequential coverage guarantee under repeated monitoring.

\eat{
Consequently, any early-termination decision can be erroneous only if the Wilson interval at the termination checkpoint fails to cover $p_\kw{pert}$.
Thus, the reliability of the early decision depends on the coverage behavior of the monitored Wilson interval. Under repeated monitoring, this criterion should be interpreted as a practical approximation for early termination rather than as an exact sequential coverage guarantee.

\eat{
Together with the Group-by Bound, the Binomial Bound allows the sampling process in the Perturbation Test to be either skipped entirely or terminated early once the decision relative to $\eta$ can be made from the Wilson interval.
Accordingly, the effective per-candidate cost is
$O(|T| \cdot 2^{|C_\Psi|} + n^* \cdot |C_\Psi|)$, where the first term accounts for the Group-by Bound computation and the second for the sampling loop.
In practice, early termination often occurs at $n^* \ll n_{\max}$ iterations.
}

Together with the Group-by Bound, the Binomial Bound further reduces the practical cost of the Perturbation Test by allowing the sampling process to terminate early once the decision relative to $\eta$ becomes clear from the monitored Wilson interval.
}
\section{Experiments}
\label{sec:experiment}

Using real-life data, we experimentally evaluated the (1)
effectiveness and (2) efficiency of the proposed \autorelate of discovering reliable \FRs for (a) different types of \FRs and (b) clean and dirty data. \looseness=-1

\subsection{Experimental Settings}
\label{subsec:exp_setting}

\smallskip
\noindent\textbf{Datasets.}
Since benchmark datasets for \AR and \ST discovery are scarce, and existing \FD benchmarks are limited in both scale and diversity, we collaborated with Microsoft Research
to construct a new benchmark suite, \kw{Real}, 
based on a large collection of real-world Excel spreadsheets
and tabular data from \kw{Auto-Pipeline}~\cite{yang2021auto}. 

The suite supports the evaluation of all three \FR types considered in this paper, namely \ARs, \STs, and \FDs.  
As shown in Table~\ref{tab:dataset}, we summarized the statistics of the datasets, including the number of tables, the row and column statistics per table, and the number of positive and negative \FRs.

\begin{itemize}[leftmargin=*]
    \item \textbf{\realAR} contains 2{,}539 tables, with 28--16{,}961 rows and 3--400 columns per table, yielding 966 positive and 8{,}102 negative \AR candidates.
    Positive samples correspond to ground-truth formulas recorded in original spreadsheets, while negative ones are spurious candidates 
    with observed violations on the source tables. \looseness=-1

    \item \textbf{\realST} contains 55{,}766 tables, with 5--23{,}005 rows and 32--500 columns per table, yielding 1{,}938 positive and 294{,}253 negative \ST candidates.
    Positive samples are verified string-transformation formulas, 
    while negative ones are spurious candidates whose output column is a string column inserted from an unrelated table.\looseness=-1

    \item \textbf{\realFD} contains 374 tables, with 107--717{,}321 rows and 4--333 columns per table, yielding 3{,}510 positive and 5{,}579 negative \FD candidates.
    Positive samples include both reliable and approximate \FDs extracted from the data, while negative samples 
    comprise spurious candidates with substantial observed violations and manually added unreliable dependencies.
\end{itemize}

For evaluation, we derived both a \emph{clean-data} setting and a \emph{dirty-data} setting from the underlying source tables.
In the clean-data setting, violating rows of negative candidate \FRs were removed so that both positive and negative candidates hold exactly on the resulting tables, making the discrimination task more challenging.
In the dirty-data setting, we injected 10\% random noise into the source tables to simulate realistic data errors.
\looseness=-1

In addition, we included \rwd~\cite{parciak2024measuring}, which contains 10 tables and 126 manually labeled \FD relationships, 
for further validating effectiveness.
Unlike commonly used \FD discovery benchmarks that are primarily designed
for evaluating enumeration efficiency, \rwd provides ground-truth design \FDs and is therefore suitable for evaluating candidate-level reliability discrimination.
\looseness=-1

\smallskip 
\noindent\textbf{Baselines.}
We compared \autorelate against a comprehensive set of 18 baselines and ranking measures from 12 groups.
(1) \textbf{\kw{Explain-Da-V}}~\cite{shraga2023explaining}, 
which explains dataset-version changes using data transformations; we adapt its conciseness and concentration criteria to score candidate \FRs.
(2) 
\textbf{GPT-5}~\cite{openai2025gpt5}, a large language model prompted with column names and sampled rows in a few-shot in-context learning setting to verify \FRs. 
(3) \textbf{\kw{Chi-Squared (CS)}}~\cite{ilyas2004cords}, 
a classical test of independence. 
(4) \textbf{\kw{Mutual Information (MI)}}~\cite{cheng1997learning}, 
an information-theoretic dependence measure. 
(5) \textbf{\kw{Fraction of Information (FI)}}~\cite{cavallo1987theory,giannella2004approximation}, an information-theoretic measure that quantifies the relative reduction in uncertainty about the RHS given the LHS. 
(6) \textbf{Reliable FI (\kw{RFI$^+$}, \kw{RFI$^{\prime+}$})}~\cite{mandros2017discovering,parciak2024measuring}, two bias-corrected FI-based variants. 
(7) \textbf{\kw{Smoothed MI (SMI)}}~\cite{pennerath2020discovering}, an MI 
variant with Laplace smoothing. 
(8) \textbf{\kw{Co-occurrence ($\rho$)}}~\cite{ilyas2004cords}, 
the ratio of distinct LHS values to distinct $(\text{LHS}, \text{RHS})$ value pairs.
(9) \textbf{\kw{$g_1$-based measures} ($g_1$, 
$g_1^s$, $g_1^{s+}$)}~\cite{kivinen1995approximate,giannella2004approximation,parciak2024measuring}, 
based on the fraction of violating tuple pairs and closely related variants.
(10) \textbf{\kw{Violating row ratio ($g_2$)}}~\cite{kivinen1995approximate,parciak2024measuring}, 
the fraction of tuples involved in 
violations.
(11) \textbf{\kw{Row-removal ratio ($g_3$, $g_3^{\prime}$)}}~\cite{kivinen1995approximate,giannella2004approximation,parciak2024measuring}, 
the minimum fraction of rows whose removal eliminates all violations.
(12) \textbf{\kw{Probabilistic dependency} ($p_{\mathit{dep}}$}) and its normalized variants ($\mu^+$, $\tau$)~\cite{piatetsky1993measuring,goodman1979measures}, 
based on the conditional probability that two randomly sampled tuples agree on 
the LHS also agree on the RHS.

\eat{ 
For dependency measures originally defined for ranking \kw{FD}/\kw{AFD} candidates, we extend them to multi-column FRs by encoding the input columns as a composite attribute using an underscore delimiter, following standard practice~\cite{parciak2024measuring}.
}

For quality comparison, all methods were evaluated on the same candidate \FR pool in each benchmark. 
Each benchmark provided a fixed set of positive and negative FR candidates, and each method scored, ranked, or filtered these candidates independently.

\begin{table}[t]
\centering
\caption{Statistics of the \kw{Real} benchmark suite}
\vspace{-2ex}
\label{tab:dataset}
\resizebox{\linewidth}{!}{
\begin{tabular}{|c|c|c|c|c|c|c|}
\hline 
Dataset & \#Tables & \#Rows (Min, Avg, Max) & \#Cols (Min, Avg, Max) & \#Pos. FRs & \#Neg. FRs \\ \hline
\kw{Real-AR}  & 2539   & (28, 578, 16961)   & (3, 12, 400)         & 966       & 8102     \\ \hline
\kw{Real-ST} & 55766  & (5, 98, 23005)   & (32, 113, 500)      & 1938  & 294253  \\ \hline
\kw{Real-FD}  & 374    & (107, 25688, 717321)   &  (4, 22, 333)    & 3510   & 5579     \\ \hline
\end{tabular}
}
\end{table}

\smallskip
\noindent\textbf{Metrics.}
We evaluated the quality of discovered \FRs using \emph{precision}, \emph{recall}, and \emph{F1-score}. 
Here, precision is the fraction of predicted positives that are correct, and recall is the fraction of true positives that are successfully identified. Formally,
\[
P = \frac{\mathrm{TP}}{\mathrm{TP} + \mathrm{FP}}, \quad
R = \frac{\mathrm{TP}}{\mathrm{TP} + \mathrm{FN}}, \quad
F_1 = \frac{2PR}{P+R}.
\]
To summarize performance over all score thresholds, we further reported \prauc (Precision-Recall Area Under the Curve)~\cite{eval}, which measures the area under the precision--recall curve, where a larger value indicates better overall performance.

\begin{table*}[!ht]
\centering
\caption{Quality comparison. Each cell reports \prauc (precision, recall, F1-score); the best \prauc per setting is in bold}
\vspace{-2ex}
\resizebox{\linewidth}{!}{
    \begin{tabular}{|c||c|c||c|c||c|c||c|c||}
    \hline
     & \multicolumn{2}{c||}{Arithmetic Relationships (\kw{Real-AR})}  & \multicolumn{2}{c||}{String Transformations (\kw{Real-ST})}  & \multicolumn{2}{c||}{Functional Dependencies (\kw{Real-FD})} \\ \hline
    Methods    &  Clean          &  Dirty         &   Clean          & Dirty          &  Clean          &  Dirty \\ \hline
    
    \autorelate  & \textbf{0.82} (0.89, 0.85, {0.87}) & \textbf{0.92} (0.91, 0.87, {0.89}) & \textbf{0.84} (0.92, 0.90, {0.91})   & \textbf{0.81} (0.92, 0.86, {0.89})
    & \textbf{0.91} (0.73, 1.0, 0.85) 
    & \textbf{0.92} (0.74, 1.0, 0.85)    \\ \hline
    
    \kw{Explain-Da-V} & {0.53 (0.62, 0.49, 0.55)}        & 0.52 (0.62, 0.49, 0.55)          & {0.01 (0.01, 1.0, 0.01)}            & 0.0 (0.0, 1.0, 0.01)          & {0.35 (0.39, 1.0, 0.56)}             & 0.36 (0.38, 1.0, 0.55)              \\ \hline
    
    
    \kw{GPT-5} & 0.54 (0.36, 0.62, 0.46) & 0.59 (0.61, 0.92, 0.73) & 0.81 (0.84, 0.77, 0.80) & 0.25 (0.39, 0.92, 0.55) & 0.90 (0.82, 0.96, 0.88) &  0.75 (0.72, 1.0, 0.84)   \\ \hline 
    
    
    \kw{CS}           & {0.53 (0.57, 0.90, 0.69)}        & 0.15 (0.18, 0.92, 0.30)          & {0.66 (0.64, 0.89, 0.75)}          & 0.01 (0.02, 0.21, 0.03)          & {0.45 (0.46, 0.75, 0.57)}          & 0.44 (0.43, 0.91, 0.59)              \\ \hline
    \kw{MI}           & {0.43 (0.48, 0.97, 0.64)}        & 0.57 (0.57, 0.89, 0.70)          & {0.01 (0.01, 1.0, 0.02)}            & 0.03 (0.05, 0.86, 0.10)          & {0.38 (0.38, 0.82, 0.52)}          & 0.45 (0.47, 0.83, 0.60)              \\ \hline
    \kw{FI}           & {0.52 (0.53, 0.93, 0.68)}        & 0.18 (0.39, 0.90, 0.55)          & {0.58 (0.59, 0.99, 0.74)}           & 0.02 (0.03, 0.92, 0.07)          & {0.42 (0.43, 0.90, 0.59)}          & 0.42 (0.47, 0.95, 0.63)         \\ \hline
    \kw{RFI$^+$}      & {0.34 (0.51, 0.45, 0.48)}        & 0.08 (0.11, 1.0, 0.19)           & {0.21 (0.34, 0.28, 0.31)}          & 0.03 (0.25, 0.04, 0.06)          & {0.49 (0.46, 0.76, 0.57)}          & 0.53 (0.56, 0.76, 0.64)          \\ \hline
    RFI$^{\prime+}$   & {0.50 (0.55, 0.78, 0.65)}   & 0.49 (0.56, 0.57, 0.56)          & {0.39 (0.49, 0.60, 0.54)}          & 0.01 (0.02, 0.39, 0.03)          & {0.44 (0.45, 0.88, 0.59)}          & 0.55 (0.58, 0.81, 0.68)         \\ \hline
    \kw{SMI}          & {0.26 (0.39, 0.23, 0.29)}        & 0.06 (0.11, 1.0, 0.19)           & {0.23 (0.39, 0.45, 0.42)}          & 0.02 (0.29, 0.04, 0.07)          & {0.49 (0.39, 0.73, 0.51)}          & 0.41 (0.38, 1.0, 0.55)          \\ \hline
    $p_{dep}$       & {0.10 (0.11, 1.0, 0.19)}         & 0.47 (0.60, 0.53, 0.56)          & {0.01 (0.01, 1.0, 0.01)}            & 0.0 (0.0, 0.78, 0.01)            & {0.39 (0.39, 1.0, 0.56)}           & 0.34 (0.43, 1.0, 0.61)          \\ \hline
    $\mu^+$      & {0.43 (0.52, 0.58, 0.55)}        & 0.32 (0.27, 0.48, 0.35)          & {0.26 (0.39, 0.36, 0.37)}          & 0.01 (0.02, 0.33, 0.03)          & {0.49 (0.52, 0.80, 0.63)}          & 0.56 (0.60, 0.81, {0.69})          \\ \hline
    $\tau$       & {0.53 (0.53, 0.93, 0.68)}        & 0.61 (0.39, 0.87, 0.54)          & {0.58 (0.59, 0.99, 0.74)}          & 0.02 (0.04, 0.92, 0.07)          & {0.43 (0.44, 0.90, 0.59)}          & 0.45 (0.52, 0.90, 0.66)           \\ \hline
    $\rho$       & {0.11 (0.11, 1.0, 0.19)}         & 0.52 (0.55, 0.69, 0.61)          & {0.01 (0.01, 1.0, 0.01)}            & 0.0 (0.01, 0.89, 0.01)           & {0.40 (0.40, 1.0, 0.57)}           & 0.37 (0.39, 0.99, 0.55)          \\ \hline
    $g_1$        & {0.11 (0.11, 1.0, 0.19)}         & 0.57 (0.56, 0.82, 0.67)          & {0.01 (0.01, 1.0, 0.01)}            & 0.0 (0.01, 0.88, 0.01)           & {0.38 (0.38, 1.0, 0.55)}           & 0.41 (0.44, 1.0, 0.61)           \\ \hline
    $g_1^s$      & {0.11 (0.11, 1.0, 0.19)}         & 0.51 (0.61, 0.57, 0.59)          & {0.01 (0.01, 1.0, 0.01)}            & 0.0 (0.01, 0.86, 0.01)           & {0.40 (0.40, 1.0, 0.57)}           & 0.40 (0.41, 0.86, 0.55)           \\ \hline
    $g_1^{s+}$   & {0.11 (0.11, 1.0, 0.19)}         & 0.51 (0.61, 0.57, 0.59)          & {0.01 (0.01, 1.0, 0.01)}            & 0.0 (0.01, 0.86, 0.01)           & {0.40 (0.40, 1.0, 0.57)}           & 0.40 (0.38, 1.0, 0.55)              \\ \hline
    $g_2$        & {0.11 (0.11, 1.0, 0.19)}         & 0.56 (0.56, 0.85, 0.68)          & {0.01 (0.01, 1.0, 0.01)}            & 0.0 (0.01, 0.87, 0.01)           & {0.40 (0.40, 1.0, 0.57)}           & 0.40 (0.38, 0.98, 0.55)           \\ \hline
    $g_3$        & {0.11 (0.11, 1.0, 0.19)}         & 0.44 (0.52, 0.53, 0.52)          & {0.01 (0.01, 1.0, 0.01)}            & 0.0 (0.0, 1.0, 0.01)             & {0.40 (0.40, 1.0, 0.57)}           & 0.41 (0.44, 1.0, 0.61)           \\ \hline
    $g_3^{\prime+}$    & {0.08 (0.11, 1.0, 0.19)}   & 0.07 (0.11, 1.0, 0.19)           & {0.01 (0.01, 1.0, 0.01)}            & 0.0 (0.0, 1.0, 0.01)             & {0.43 (0.45, 0.88, 0.59)}          & 0.37 (0.48, 0.92, 0.63)        \\ \hline 
    \end{tabular}
}
\label{tab:Auc}
\end{table*}

\eat{
\begin{table}[!ht]
\centering
\caption{Quality comparison of all methods}
\vspace{-2ex}
\resizebox{\linewidth}{!}{
    \begin{tabular}{|c|c|c|c|c|c|c|}
    \hline
     & \multicolumn{2}{c|}{\kw{Real-AR}}  
     & \multicolumn{2}{c|}{\kw{Real-ST}}  
     & \multicolumn{2}{c|}{\kw{Real-FD}} \\ \hline
    Methods    &  Clean          &  Dirty         &   Clean          & Dirty          &  Clean          &  Dirty \\ \hline
    \autorelate  
    & \textbf{0.82} 
    & \textbf{0.92} 
    & \revise{\textbf{0.84}}   
    & \revise{\textbf{0.70}} 
    & \revise{\textbf{0.57}}   
    & \revise{\textbf{0.71}}    \\ \hline

    \kw{Explain-Da-V} 
    & 0.53        
    & 0.52          
    & 0.01            
    & 0.00          
    & 0.35             
    & 0.36              \\ \hline

    \revise{\kw{GPT-5.4-nano}}        
    & \revise{0.51}         
    & \revise{0.64}          
    & \revise{0.69}          
    & \revise{0.21}          
    & \revise{0.55}          
    & \revise{0.39}         \\ \hline

    \kw{CS}           
    & 0.53        
    & 0.15          
    & 0.66          
    & 0.01          
    & 0.45          
    & 0.44              \\ \hline

    \kw{MI}           
    & 0.43        
    & 0.57          
    & 0.01            
    & 0.03          
    & 0.38          
    & 0.45              \\ \hline

    \kw{FI}           
    & 0.52        
    & 0.18          
    & 0.58           
    & 0.02          
    & 0.42          
    & 0.42         \\ \hline

    \kw{RFI$^+$}      
    & 0.34        
    & 0.08           
    & 0.21          
    & 0.03          
    & 0.49          
    & 0.53          \\ \hline

    RFI$^{\prime+}$   
    & 0.50   
    & 0.49          
    & 0.39          
    & 0.01          
    & 0.44          
    & 0.55         \\ \hline

    \kw{SMI}          
    & 0.26        
    & 0.06           
    & 0.23          
    & 0.02          
    & 0.49          
    & 0.41          \\ \hline

    $p_{dep}$       
    & 0.10         
    & 0.47          
    & 0.01            
    & 0.00            
    & 0.39           
    & 0.34          \\ \hline

    $\mu^+$      
    & 0.43        
    & 0.32          
    & 0.26          
    & 0.01          
    & 0.49          
    & 0.56          \\ \hline

    $\tau$       
    & 0.53        
    & 0.61          
    & 0.58          
    & 0.02          
    & 0.43          
    & 0.45           \\ \hline

    $\rho$       
    & 0.11         
    & 0.52          
    & 0.01            
    & 0.00           
    & 0.40           
    & 0.37          \\ \hline

    $g_1$        
    & 0.11         
    & 0.57          
    & 0.01            
    & 0.00           
    & 0.38           
    & 0.41           \\ \hline

    $g_1^s$      
    & 0.11         
    & 0.51          
    & 0.01            
    & 0.00           
    & 0.40           
    & 0.40           \\ \hline

    $g_1^{s+}$   
    & 0.11         
    & 0.51          
    & 0.01            
    & 0.00           
    & 0.40           
    & 0.40              \\ \hline

    $g_2$        
    & 0.11         
    & 0.56          
    & 0.01            
    & 0.00           
    & 0.40           
    & 0.40           \\ \hline

    $g_3$        
    & 0.11         
    & 0.44          
    & 0.01            
    & 0.00             
    & 0.40           
    & 0.41           \\ \hline

    $g_3^{\prime+}$    
    & 0.08   
    & 0.07           
    & 0.01            
    & 0.00             
    & 0.43          
    & 0.37        \\ \hline 
    \end{tabular}
}
\label{tab:Auc}
\end{table}
}

\begin{figure*}[t]
    \centering
    \includegraphics[scale=0.3]{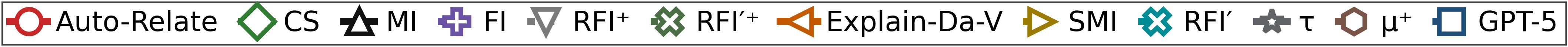}
    \vspace{0.2em}
    
    \begin{subfigure}[b]{0.23\linewidth}        
        \centering
        \includegraphics[width=\linewidth]{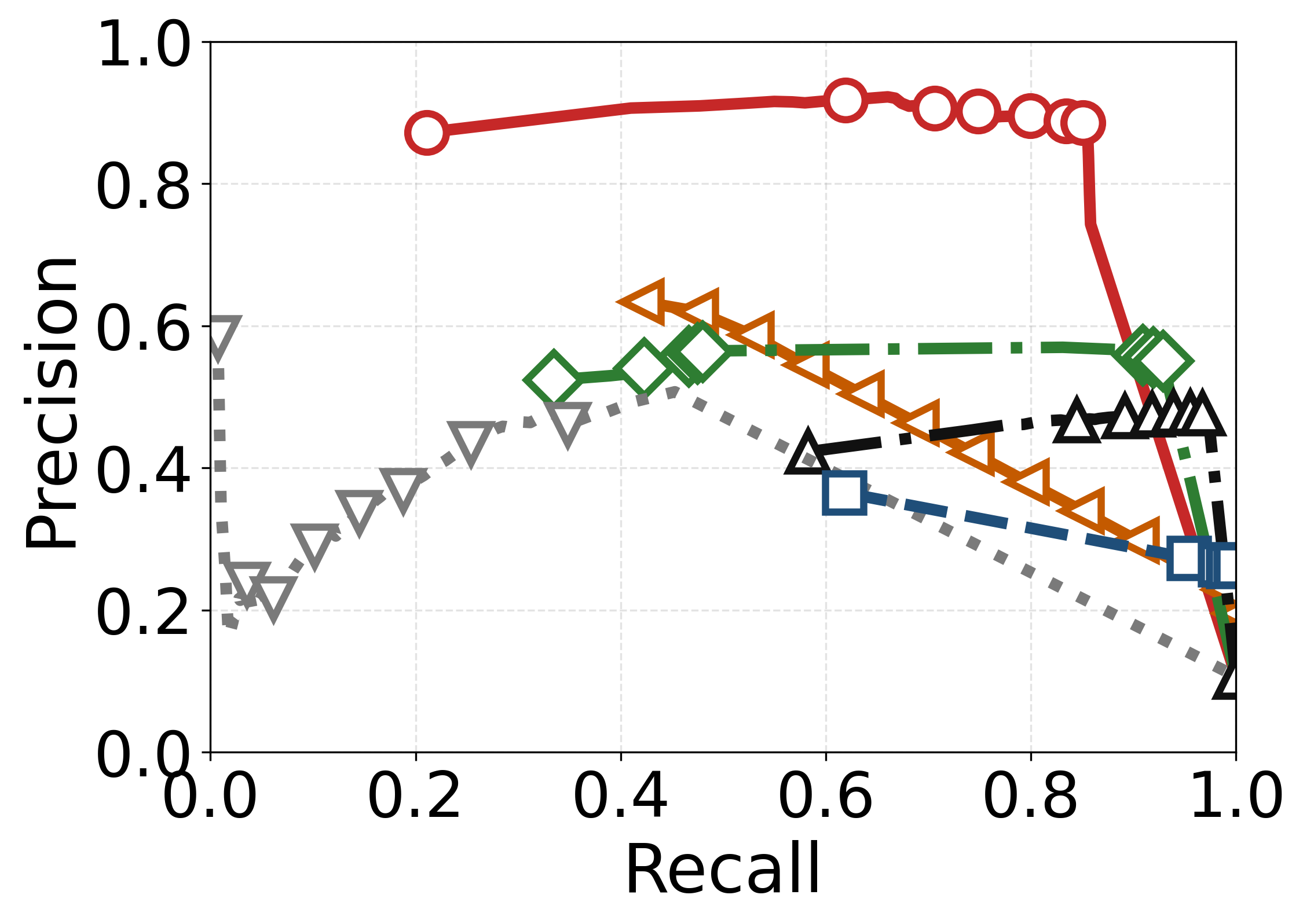}
        \caption{Real-AR (Clean)}
        \label{fig:auc-real-ar-clean}
    \end{subfigure}
    \hfill
    \begin{subfigure}[b]{0.23\linewidth}
        \centering
        \includegraphics[width=\linewidth]{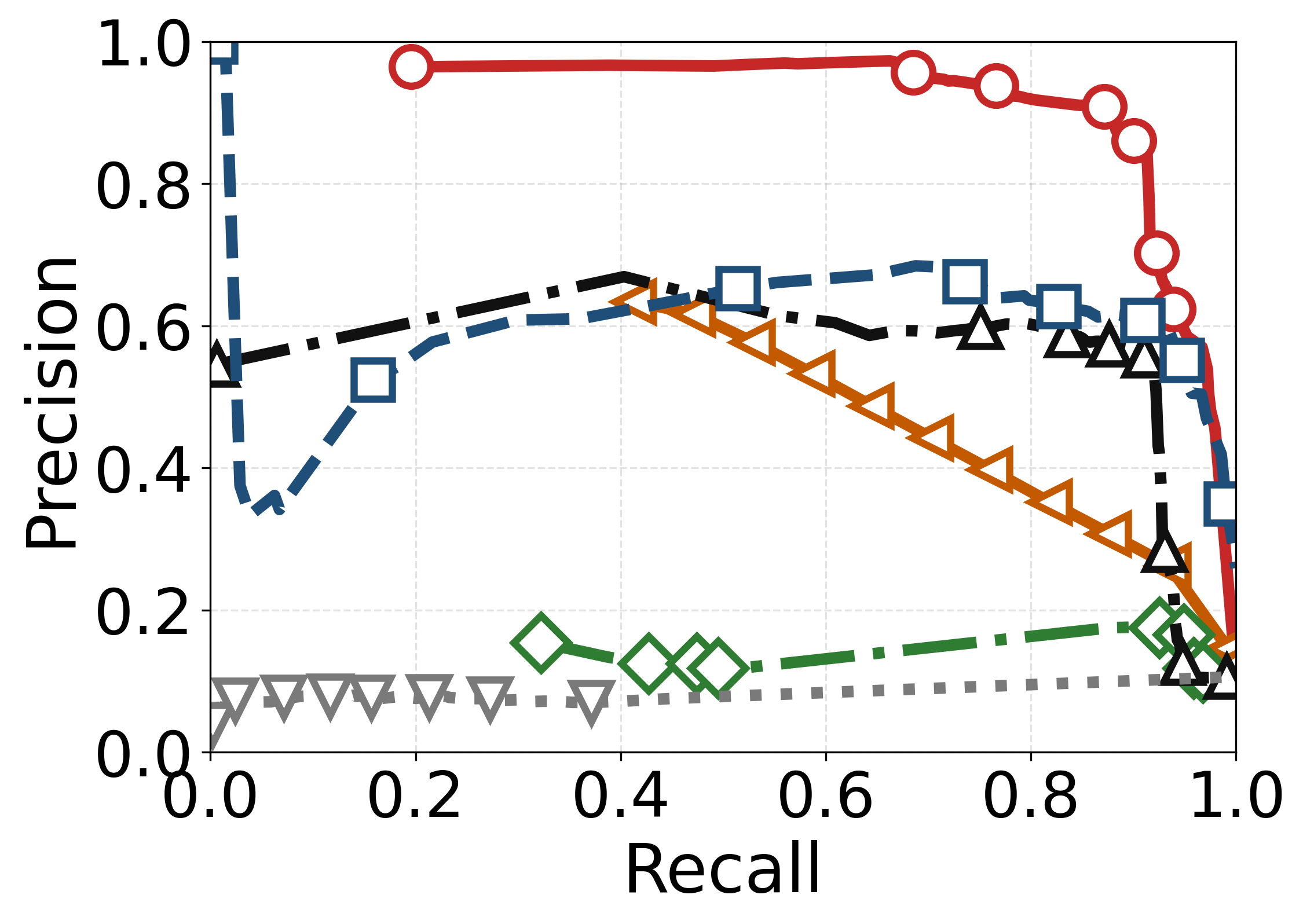}
        \caption{Real-AR (Dirty)}
        \label{fig:auc-real-ar-dirty}
    \end{subfigure}
    \hfill
    \begin{subfigure}[b]{0.23\linewidth}
        \centering
        \includegraphics[width=\linewidth]{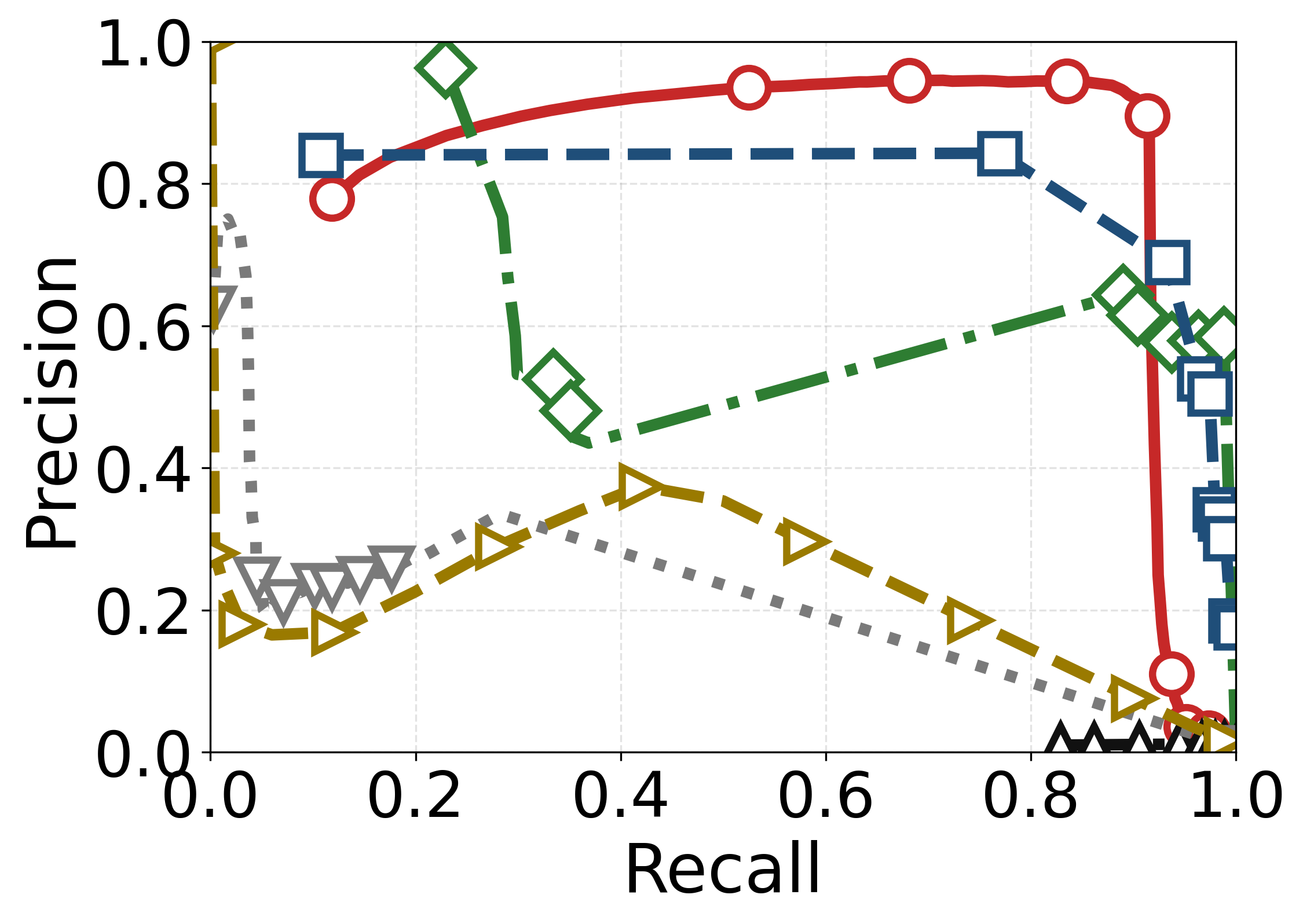}
        \caption{Real-ST (Clean)}
        \label{fig:auc-real-st-clean}
    \end{subfigure}
    \hfill
    \begin{subfigure}[b]{0.23\linewidth}
        \centering
        \includegraphics[width=\linewidth]{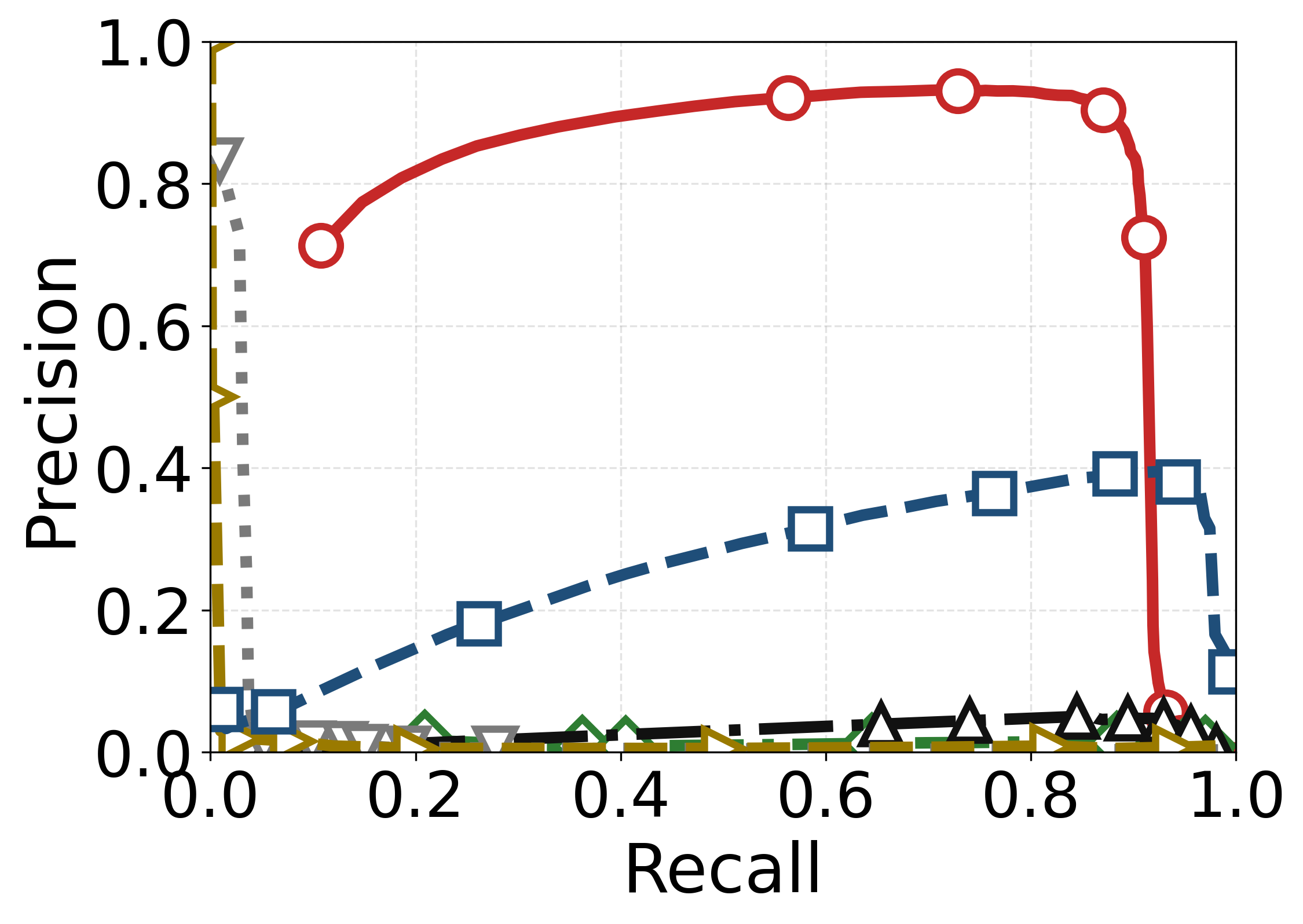}
        \caption{Real-ST (Dirty)}
        \label{fig:auc-real-st-dirty}
    \end{subfigure}


    \begin{subfigure}[b]{0.23\linewidth}
        \centering
        \includegraphics[width=\linewidth]{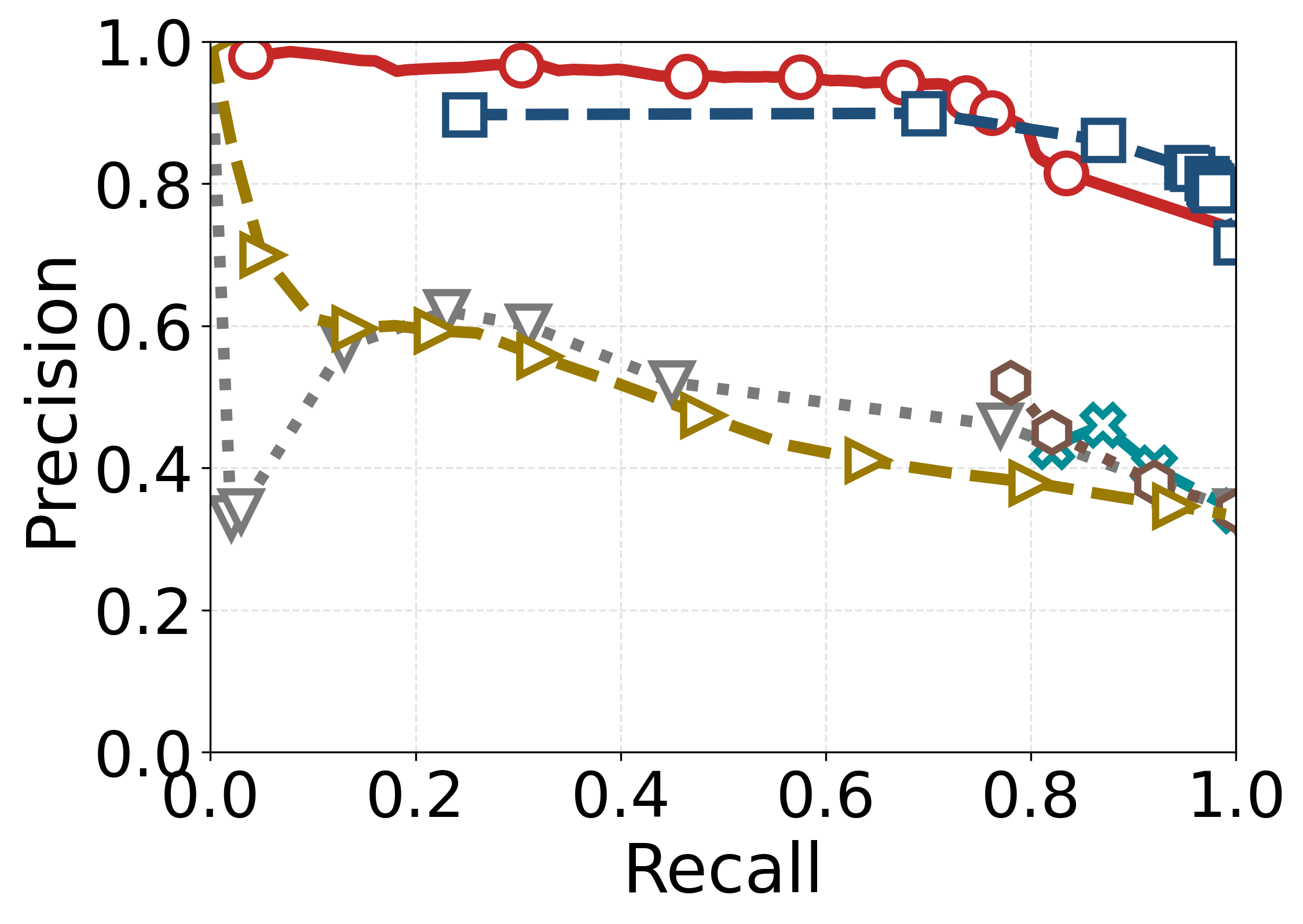}
        \caption{Real-FD (Clean)}
        \label{fig:auc-real-fd-clean}
    \end{subfigure}
    \hfill
    \begin{subfigure}[b]{0.23\linewidth}
        \centering
        \includegraphics[width=\linewidth]{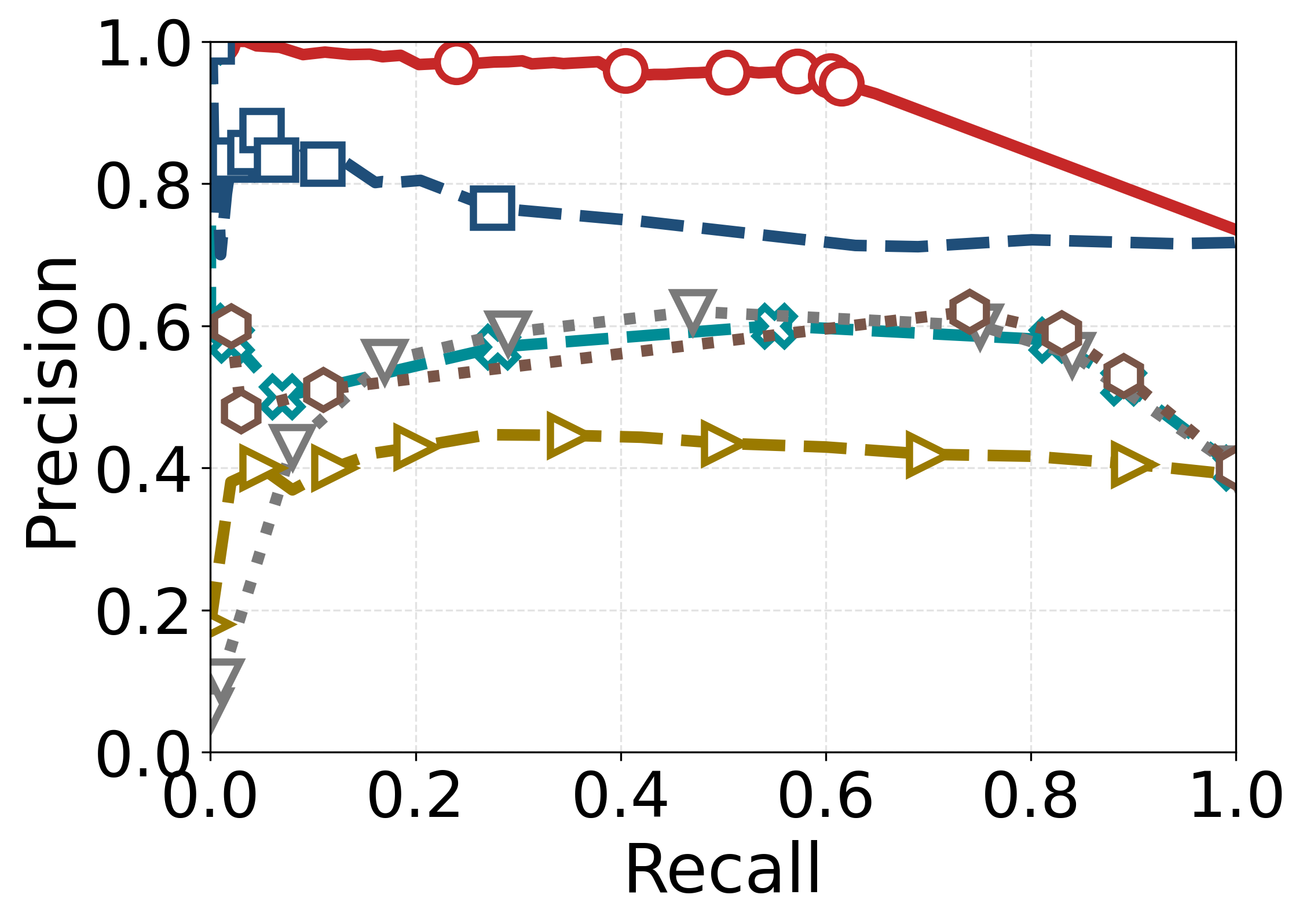}
        \caption{Real-FD (Dirty)}
        \label{fig:auc-real-fd-dirty}
    \end{subfigure}
    \hfill
    \begin{subfigure}[b]{0.23\linewidth}
        \centering
        \includegraphics[width=\linewidth]{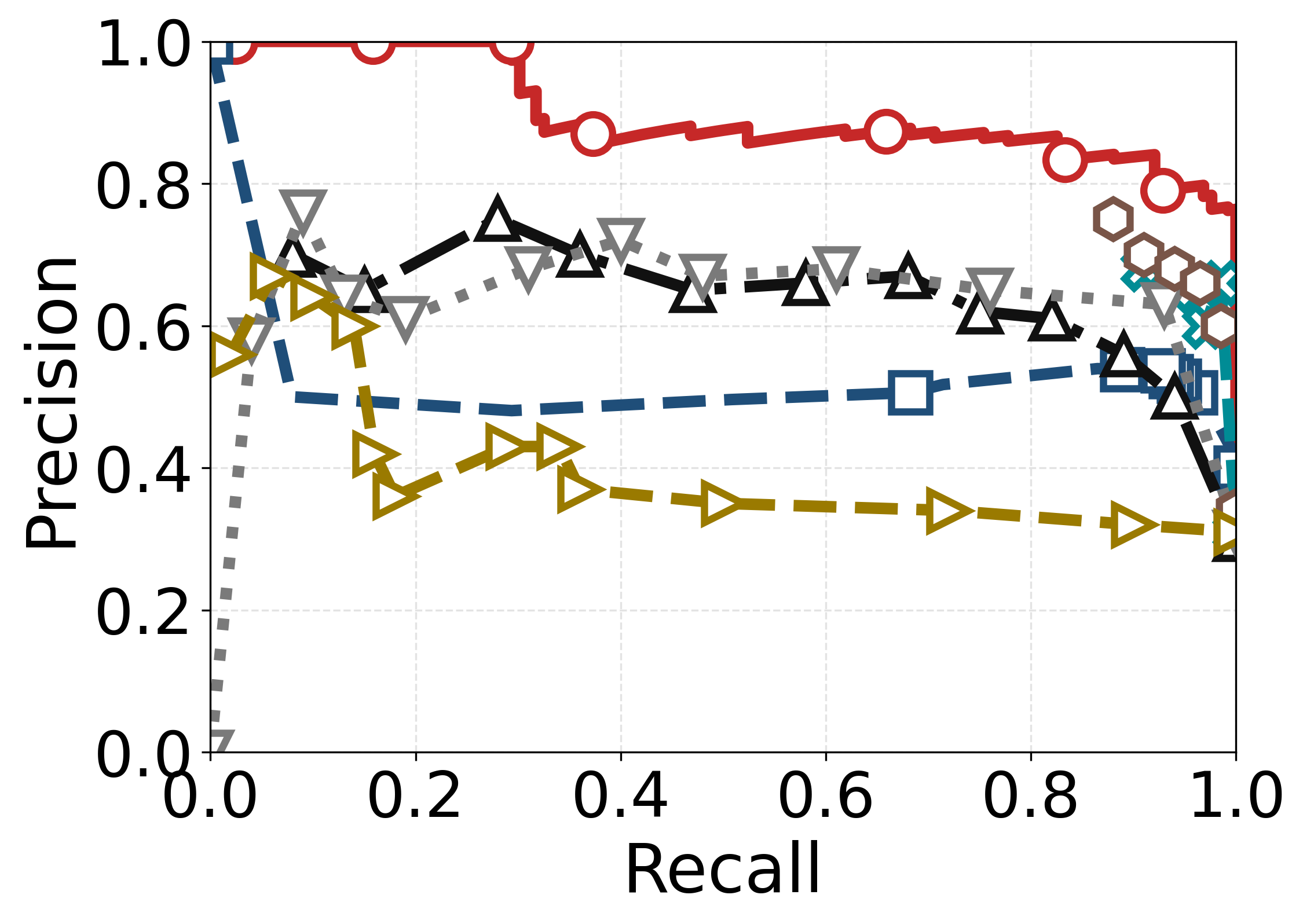}
        \caption{RWD (Clean)}
        \label{fig:auc-rwd-clean}
    \end{subfigure}
    \hfill
    \begin{subfigure}[b]{0.23\linewidth}
        \centering
        \includegraphics[width=\linewidth]{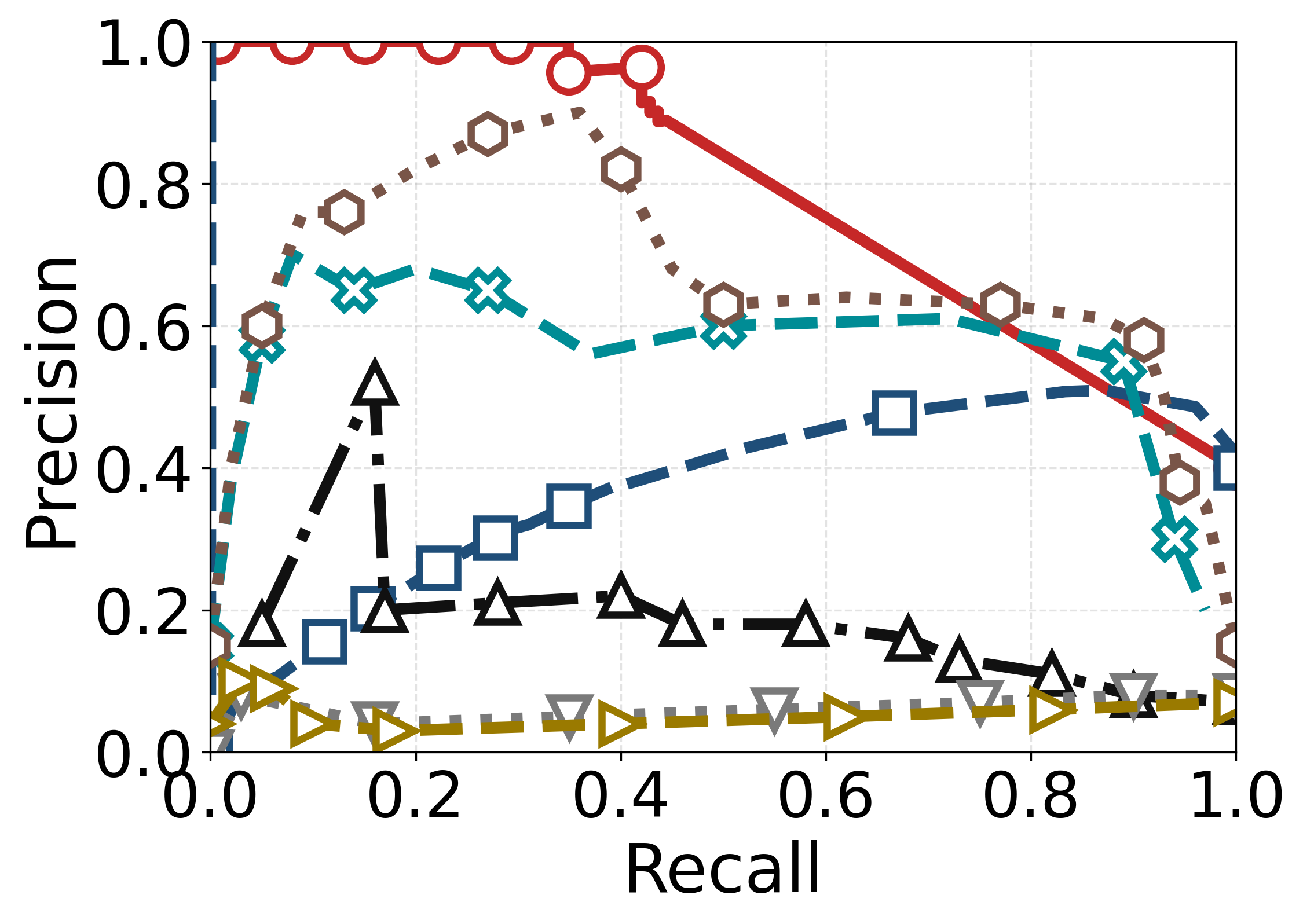}
        \caption{RWD (Dirty)}
        \label{fig:auc-rwd-dirty}
    \end{subfigure}

    \vspace{-2ex}
    \caption{Quality comparisons on the \kw{Real} and \rwd benchmarks, between six best performing methods.
    }
    \label{fig:all-auc-comparisons}
\end{figure*}

\eat{
\begin{table}[!t]
\centering
\caption{\prauc comparison on \rwd}
\vspace{-2ex}
\label{tab:clean-dirty-results}
\begin{tabular}{|c|c|c|}
\hline
\textbf{Method} & \textbf{Clean} & \textbf{Dirty} \\
\hline
\autorelate & \textbf{0.900198} & \textbf{0.797278} \\
\kw{GPT-5}       & 0.521988          & 0.368822          \\
\kw{RFI}'        & 0.055425          & 0.562200          \\
\kw{MI}          & 0.588700          & 0.174200          \\
\kw{RFI$^+$}        & 0.635800          & 0.060800          \\
$\mu^+$     & 0.078850          & 0.668125          \\
\kw{SMI}         & 0.383750                & 0.050650          \\
\hline
\end{tabular}
\end{table}
}

\smallskip
\noindent\textbf{Configuration.}
All experiments were conducted on a server running Ubuntu 20.04 with an Intel(R) Xeon(R) Platinum 8370C CPU (2.80\,GHz), 96\,GB RAM, and a 1.8\,TB SSD.
For \autorelate, we set the perturbation threshold to $\eta = 0.5$, the significance threshold in the Independence Test to $\alpha = 0.05$, the maximum discovery level $\ell_\kw{max} = 4$, and the accuracy threshold to $\tau = 1.0$ in the clean-data setting and $\tau = 0.8$ in the dirty-data setting. 
To exploit the within-level independence of candidates, we set the degree of parallelism to the number of CPU cores.
All baselines were run with the default configurations reported in their original papers.

\subsection{Experimental results}
\label{subsec:exp_result}

\smallskip
\noindent\textbf{Exp-1: Effectiveness of \autorelate.}
We first evaluated the effectiveness of \autorelate and all baselines under the default setting.

\noindent
\emph{\underline{\kw{Real} benchmarks.}}
Table~\ref{tab:Auc} reports the overall quality comparison on the \kw{Real} benchmarks, and Figure~\ref{fig:all-auc-comparisons} shows the corresponding precision--recall curves of the top-six methods.
As shown in Table~\ref{tab:Auc}, \autorelate consistently 
achieves the best \prauc across all settings.
It achieves an average \prauc of 0.87, which is on average 59\% higher than the strongest competing baseline across all settings, with the largest absolute gain of 0.56 on \realST in the dirty setting.
Figure~\ref{fig:all-auc-comparisons} further shows that \autorelate maintains a clear precision--recall advantage in most \kw{Real} benchmark settings.
\looseness=-1
\eat{
As shown in Table~\ref{tab:Auc}, \autorelate achieves the best \kw{PR-AUC} in all settings, e.g., 0.82 and 0.92 on \kw{Real-AR}, \revise{0.84} and \revise{0.7} on \kw{Real-ST}, and \revise{0.57} and \revise{0.71} on \kw{Real-FD} under the clean and dirty settings, respectively. 
This verifies the effectiveness of \autorelate for discovering reliable FRs across different FR types and data conditions.
}

\autorelate delivers consistent gains across all three \FR types. 
On \realAR, it achieves \prauc scores of 0.82 and 0.92 under the clean and dirty settings, respectively, which are 51.9\% and 50.8\% higher than the strongest competing baselines.
On \realST, \autorelate obtains \prauc scores of 0.84 and 0.81, 
ranking first in both settings, 
with a particularly large margin
under dirty data.
%
On \realFD, it 
ranks first in both settings, achieving \prauc scores of 0.91 (clean) and 0.92 (dirty).
While the margin over the strongest baseline is narrow on clean data (0.91 vs. 0.90), 
\autorelate maintains a 22.7\% improvement under dirty data, demonstrating stronger robustness to noise.
These indicate that \autorelate generalizes effectively across \ARs, \STs, and classical \FDs, providing strong reliability discrimination across both clean and dirty data.
\eat{
On \kw{Real-FD}, \autorelate also performs the best in both settings, with PR-AUC \revise{0.57} on clean data and \revise{0.71} on dirty data, compared with the best baseline results of \revise{0.55} (GPT-\revise{5.4-nano}) on clean and 0.56 (\kw{$\mu^+$}) on dirty, respectively.
Although the gap is smaller than that on AR and ST, \autorelate still consistently performs best, showing that the proposed framework generalizes effectively not only to arithmetic and string-based FRs, but also to classical FD discovery.
This is because \autorelate explicitly verifies the reliability of candidate FRs rather than relying solely on association strength or violation statistics.
In particular, the Perturbation Test helps reject exact but coincidental relationships, while the Independence Test in the dirty-data setting further filters out incomplete approximate FRs whose violations are systematically associated with omitted columns.
\looseness=-1
}

\noindent
\emph{\underline{\rwd benchmark.}}
We further compared \autorelate against the top-six best-performing baselines on the \rwd benchmark.
As shown in 
Figures~\ref{fig:auc-rwd-clean} and \ref{fig:auc-rwd-dirty},
\autorelate consistently achieves the best performance under both clean and dirty settings, with \prauc scores of 0.90 and 0.80, respectively.
It outperforms the strongest competing baselines by 41.6\% and 19.3\%,
and maintains higher precision across a wide recall range.
These results confirm its effectiveness on this manually labeled \FD benchmark.

\eat{
\begin{figure}[t]
    \centering
    \includegraphics[width=\columnwidth]{figures/exp/AR_ST.png}
    \caption{Quality comparisons on the \kw{Real-AR} and \kw{Real-ST} benchmarks, between 6 best performing methods.}
    \label{fig:AR-ST}
\end{figure}

\begin{figure}[t]
    \centering
    \includegraphics[width=\columnwidth]{figures/exp/FD.png}
    \caption{Quality comparisons on the \kw{Real-FD} and \kw{RWD}  benchmarks~\cite{parciak2024measuring}, between 6 best performing methods.}
    \label{fig:FD}
\end{figure}
}


\begin{figure*}[!t]
    \centering
    \begin{minipage}{0.85\linewidth}
        \centering
        \begin{subfigure}[b]{0.34\linewidth}
            \centering
            \includegraphics[width=\linewidth]{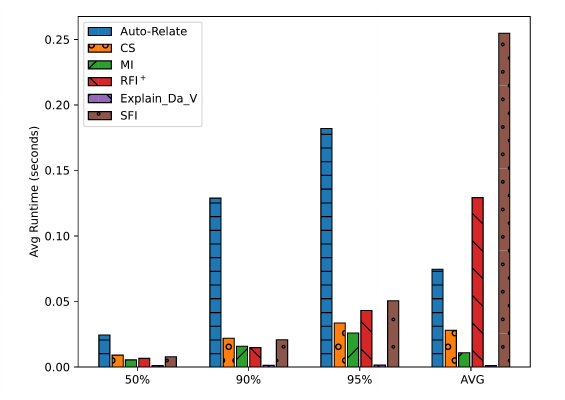}
            \caption{Real-AR dataset}
            \label{fig:running_time_AR}
        \end{subfigure}
        \hfill
        \begin{subfigure}[b]{0.32\linewidth}
            \centering
            \includegraphics[width=\linewidth]{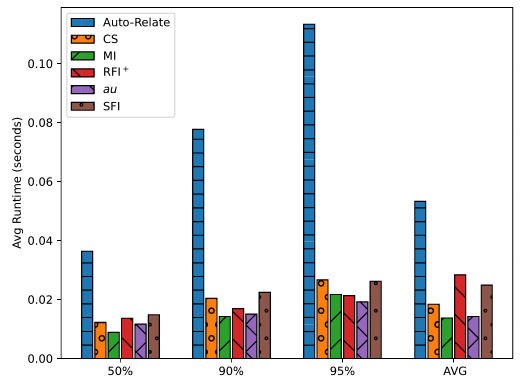}
            \caption{Real-ST dataset}
            \label{fig:running_time_ST}
        \end{subfigure}
        \hfill
        \begin{subfigure}[b]{0.32\linewidth}
            \centering
            \includegraphics[width=\linewidth]{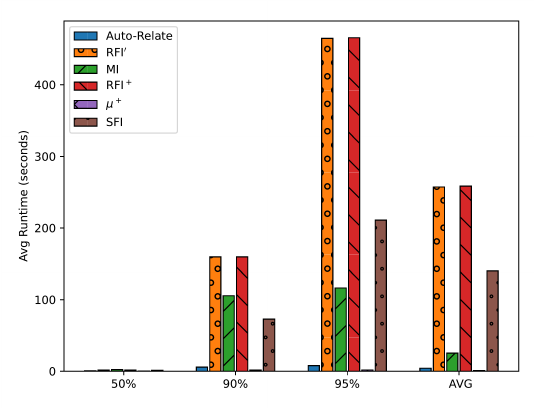}
            \caption{Real-FD dataset}
            \label{fig:running_time_FD}
        \end{subfigure}
    \end{minipage}
    \vspace{-2ex}
    \caption{Per-candidate verification time on the \kw{Real} benchmarks.}
    \label{fig:running_time_all}
    \vspace{-2ex}
\end{figure*}

\begin{figure*}[!t]
\centering
\includegraphics[width=0.9\textwidth]{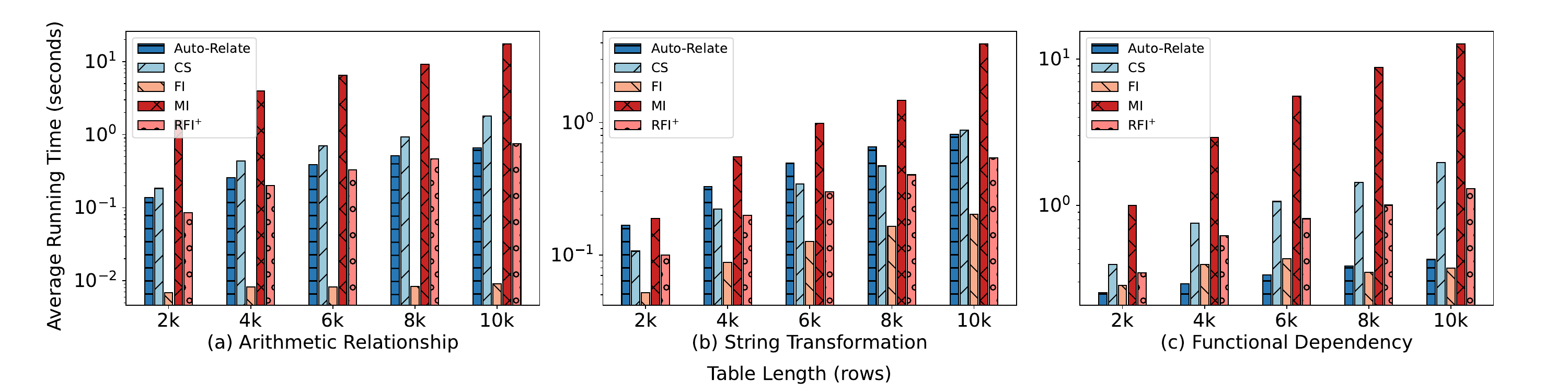}
\vspace{-2ex}
\caption{Scalability with respect to table size on the \kw{Real} benchmarks.}
\label{fig:RowScale}
\vspace{-2ex}
\end{figure*}

\smallskip
\noindent\textbf{Exp-2: Efficiency and Scalability of \autorelate.}
We evaluated the efficiency of \autorelate under the default setting, focusing on its per-candidate running time, scalability with respect to table size, and the practical impact of the proposed optimization strategies.

\noindent\underline{\emph{Comparison vs. baselines.}}
Figure~\ref{fig:running_time_all} reports the per-candidate running time for the reliability-verification stage on each benchmark, excluding candidate generation, which is shared across all methods. 
Specifically, the $x$-axis shows the $50$th, $90$th, and $95$th percentile running times, together with the average running time across all tables. 
On \realAR and \realST, \autorelate processes each table within a fraction of a second across all reported percentiles.
While it is slower than lightweight statistical measures such as \kw{CS} and \kw{MI}, all methods remain in the sub-second regime, keeping \autorelate practical for interactive use.
On \realFD, some baselines (e.g., \kw{RFI$^{\prime+}$}) spend more than 300 seconds per table at the 95th percentile, whereas \autorelate remains substantially faster. 
Although \autorelate is not always the fastest method, its reliability-verification procedure naturally incurs additional cost beyond purely statistical measures. 
Nevertheless, its per-candidate running time remains practical for interactive use while achieving substantially higher accuracy than competing baselines (Table~\ref{tab:Auc}).
\looseness=-1

\noindent\underline{\emph{Varying table size.}}
To evaluate the scalability of \autorelate, we selected tables with more than 1,000 rows and construct datasets of target sizes ranging from 2K to 10K rows via controlled down-sampling or row-level up-sampling. 
As shown in Figure~\ref{fig:RowScale}, the average per-candidate running time generally increases with the number of rows, as expected.
Nevertheless, \autorelate exhibits strong scalability across all \FR types, with consistently low per-candidate running time, while \kw{MI} can exceed 10 seconds per candidate on larger inputs.
Although \autorelate is slower than \kw{FI} and \kw{RFI$^+$} on \realAR and \realST, it achieves substantially better quality (see Table~\ref{tab:Auc}).
These verify that \autorelate scales well with table size while maintaining a practical efficiency-quality trade-off.

\noindent\underline{\emph{Impact of optimization strategies.}}
We evaluated the practical impact of the proposed optimization strategies on \realAR.  
As shown in Table~\ref{tab:figure-runtime-auc-ablation}, the proposed optimizations reduce the per-candidate verification time by 43.5\% in the clean setting (from 35.9 ms to 20.3 ms) and 36.9\% in the dirty setting (from 115.6 ms to 72.9 ms). 
Among them, the closed-form speed-up and binomial bound contribute the most, reducing the per-candidate cost by up to 19.4 ms and 20.6 ms in the dirty setting, respectively.
On \realAR, the closed-form speed-up applies to 44.2\% of sampled candidates, and the binomial bound triggers early termination on every sampled candidate, explaining why they reduce verification time substantially.
The group-by bound has a smaller effect on average per-candidate runtime because 
candidates rejected by it are also quickly eliminated by subsequent tests; nonetheless, it rejects 85.4\% of candidates before the sampling loop and is expected to be more beneficial on wider tables with higher-arity \FRs.

\eat{
As shown in Table~\ref{tab:figure-runtime-auc-ablation}, across all settings, \autorelate is on average \tbf$\times$, \tbf$\times$, \tbf$\times$, and \tbf$\times$ faster than its variants without the group-by bound, closed-form speed-up, binomial bound, and any optimization, respectively, with maximum speedups of \tbf$\times$, \tbf$\times$, \tbf$\times$, and \tbf$\times$.
Moreover, on the clean \realAR benchmark, the group-by bound 
rejects 85.4\% of candidates before sampling, 
the closed-form speed-up applies to 44.2\% of sampled AR candidates, and the binomial bound achieves an overall early-termination effective rate of 100\%.
These verify that the proposed optimizations jointly and substantially reduce the practical cost of the Perturbation Test.
}

\eat{
\begin{table*}[!ht]
\centering
\caption{Ablation test of \autorelate on Real benchmarks}
\vspace{-2ex}
\label{tab:figure-runtime-auc-ablation}
\resizebox{\linewidth}{!}{
\begin{tabular}{|c||cc|cc||cc|cc||cc|cc||}
\hline
 & \multicolumn{4}{c||}{Arithmetic Relationships (\kw{Real-AR})} 
 & \multicolumn{4}{c||}{String Transformations (\kw{Real-ST})} 
 & \multicolumn{4}{c||}{Functional Dependencies (\kw{Real-FD})} \\
\cline{2-5} \cline{6-9} \cline{10-13}
Method 
& \multicolumn{2}{c|}{Clean} & \multicolumn{2}{c||}{Dirty}
& \multicolumn{2}{c|}{Clean} & \multicolumn{2}{c||}{Dirty}
& \multicolumn{2}{c|}{Clean} & \multicolumn{2}{c||}{Dirty} \\
\cline{2-3} \cline{4-5}
\cline{6-7} \cline{8-9}
\cline{10-11} \cline{12-13}
 & Time (ms) & PR-AUC 
 & Time (ms) & PR-AUC
 & Time (ms) & PR-AUC 
 & Time (ms) & PR-AUC
 & Time (ms) & PR-AUC 
 & Time (ms) & PR-AUC \\
\hline
\autorelate      &  20.3   &   0.82    &  72.94   &  0.92     &     &   0.84    &    &   0.70     &     &     0.57  &      &   0.71   \\
No Group-by      & 19.9 &    0.82  &   72.89   &    0.92  &      &      &      &      &      &      &      &      \\
No Closed-form   & 31.3 &   0.83   &   92.26   &   0.92   &      &      &      &      &      &      &      &      \\
No Binomial      & 25.4 &   0.82   &    93.54  &   0.92   &      &      &      &      &      &      &      &      \\
No Optimization  & 35.9 &   0.83   &   115.63   &   0.92   &      &      &      &      &      &      &      &      \\
\hline
\end{tabular}
}
\end{table*}
}

\begin{table}[!t]
\centering
\caption{Ablation of optimizations on \realAR}
\vspace{-2ex}
\label{tab:figure-runtime-auc-ablation}
\resizebox{0.9\linewidth}{!}{
\begin{tabular}{|c|cc|cc|}
\hline
\multirow{2}{*}{Method} 
& \multicolumn{2}{c|}{Clean} 
& \multicolumn{2}{c|}{Dirty} \\
\cline{2-3} \cline{4-5}
 & Time (ms) & \prauc 
 & Time (ms) & \prauc \\
\hline
\autorelate      & 20.3 & 0.82 & 72.9  & 0.92 \\
No Group-by      & 21.6 & 0.82 & 72.9  & 0.92 \\
No Closed-form   & 31.3 & 0.83 & 92.3  & 0.92 \\
No Binomial      & 25.4 & 0.82 & 93.5  & 0.92 \\
No Optimization  & 35.9 & 0.83 & 115.6 & 0.92 \\
\hline
\end{tabular}
}
\vspace{-2ex}
\end{table}

\smallskip
\noindent\textbf{Exp-3: Sensitivity Analysis.}
We evaluated the sensitivity of \autorelate to 
the threshold $\eta$, noise rate, and the size $|C_\Psi|$ in \FRs.

\begin{figure*}[!t]
    \centering
    \includegraphics[scale=0.3]{figures/exp/latest/figure5_6_unified_legend.png}
    \vspace{0.2em}
    
    \begin{subfigure}[b]{0.23\linewidth}
        \centering
        \includegraphics[width=\linewidth]{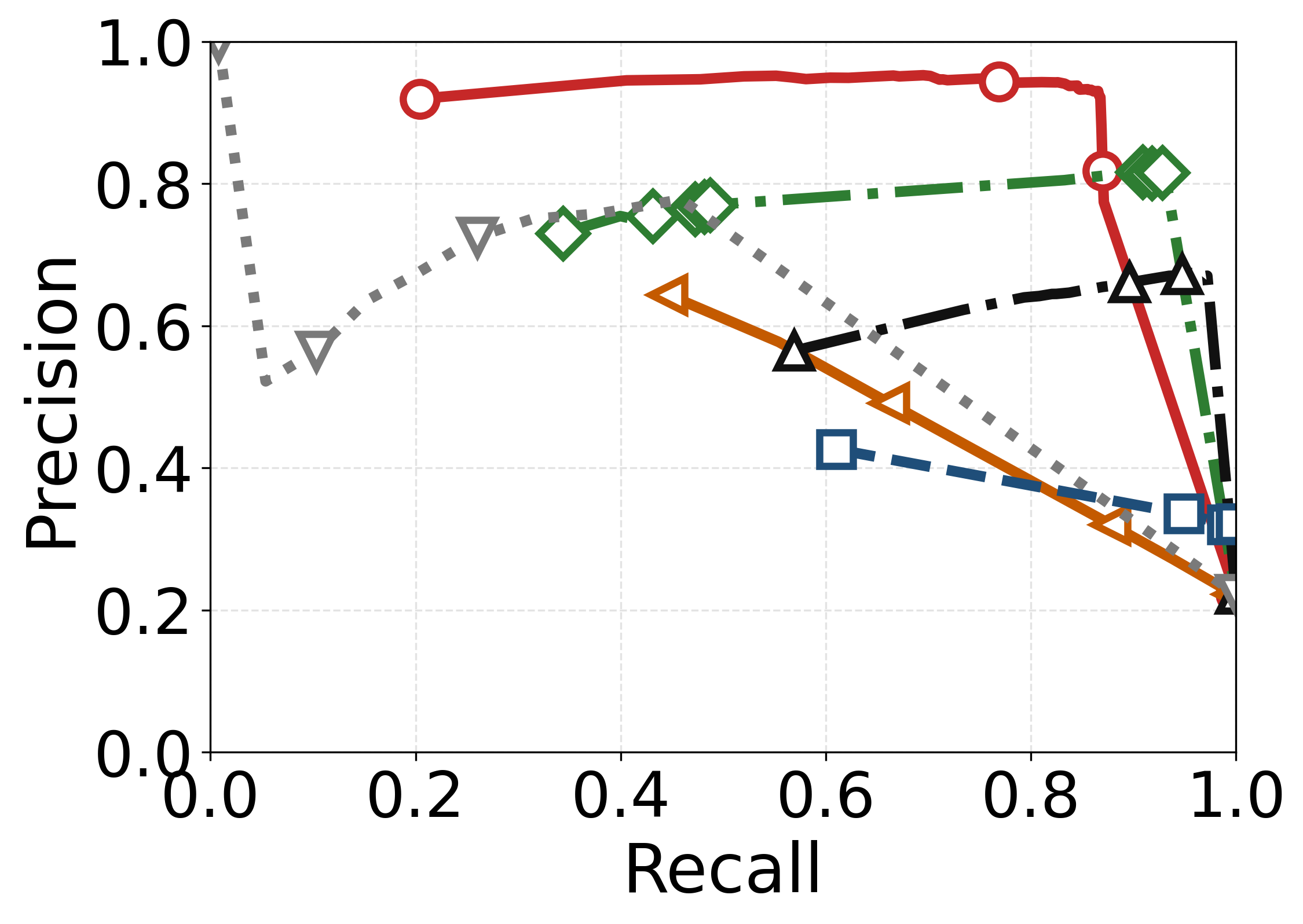}
        \caption{Clean Real-AR: 3-column ARs}
        \label{fig:auc-default-3-column-ar-clean}
    \end{subfigure}
    \hfill
    \begin{subfigure}[b]{0.23\linewidth}
        \centering
        \includegraphics[width=\linewidth]{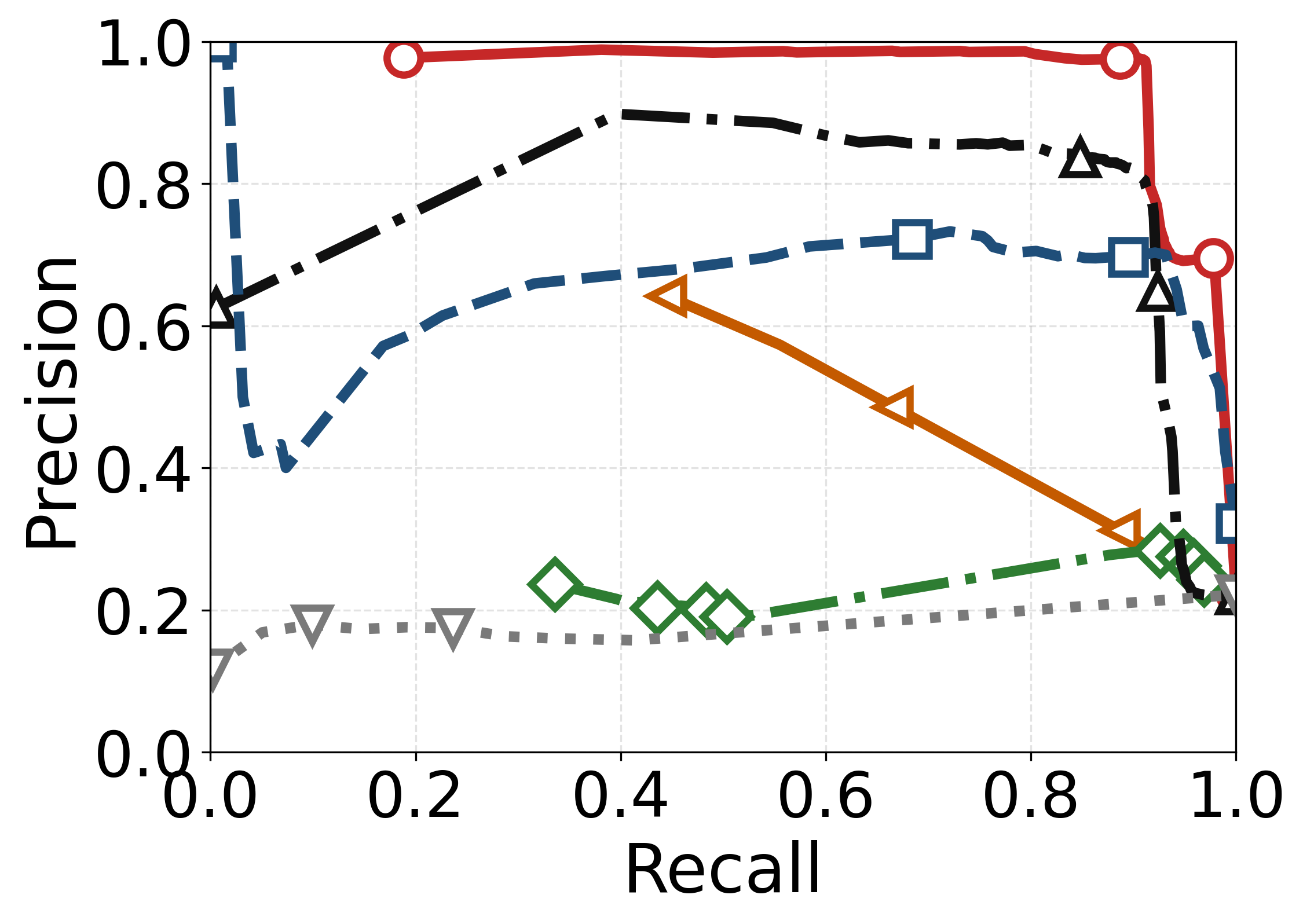}
        \caption{Dirty Real-AR: 3-column ARs}
        \label{fig:auc-default-3-column-ar-dirty}
    \end{subfigure}
    \hfill
    \begin{subfigure}[b]{0.23\linewidth}
        \centering
        \includegraphics[width=\linewidth]{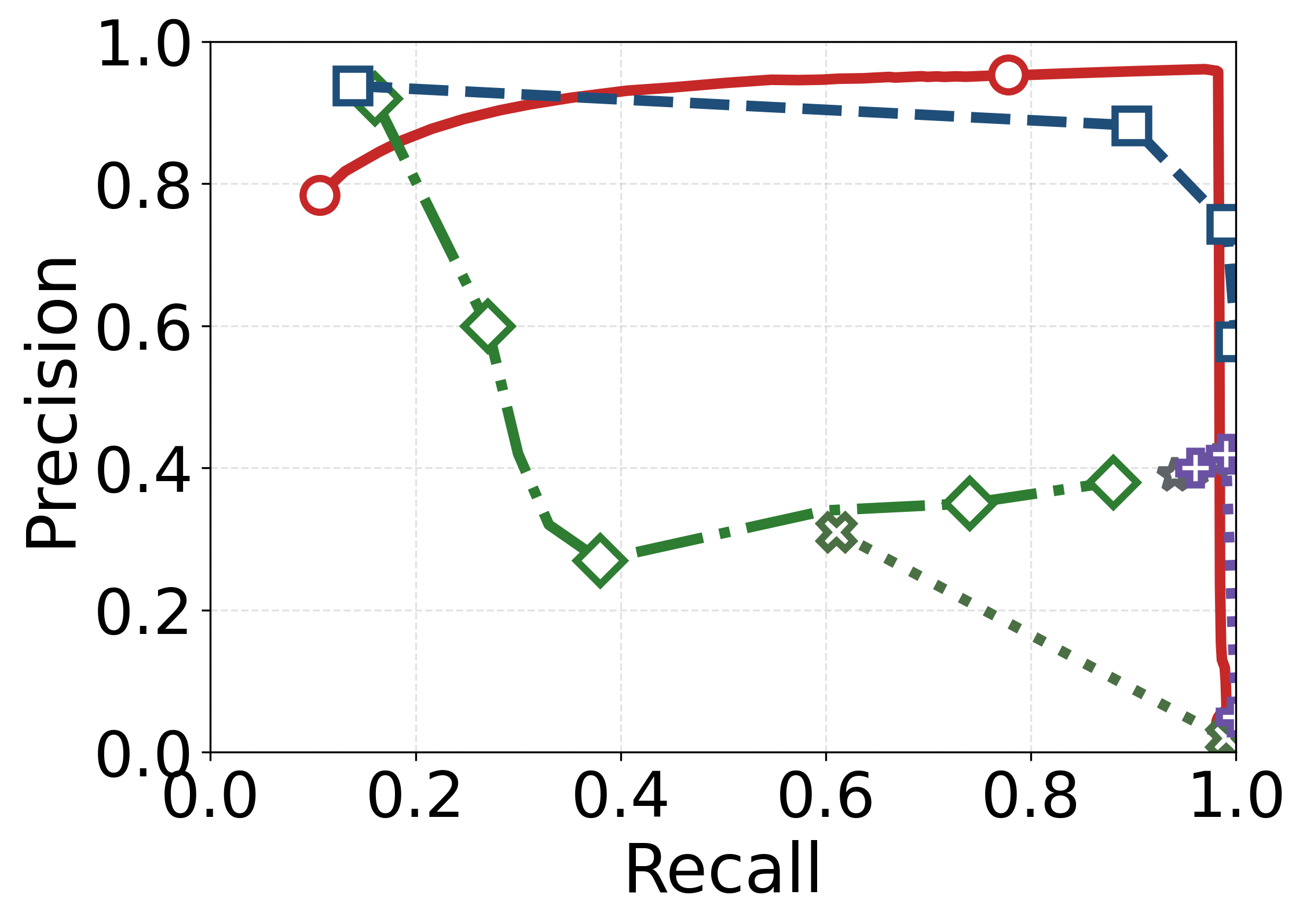}
        \caption{Clean Real-ST: 2-column STs}
        \label{fig:auc-default-2-column-st-clean}
    \end{subfigure}
    \hfill
    \begin{subfigure}[b]{0.23\linewidth}
        \centering
        \includegraphics[width=\linewidth]{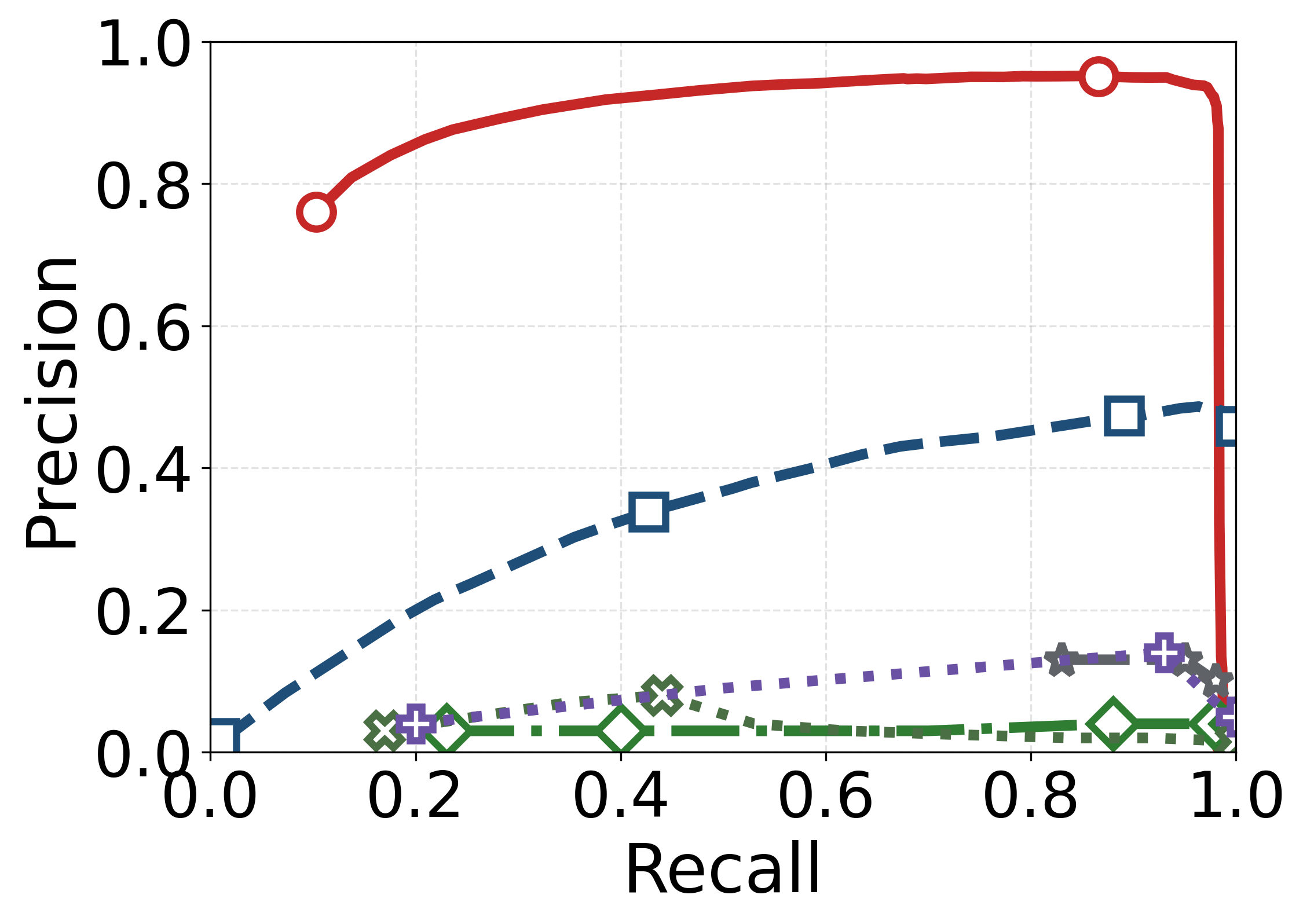}
        \caption{Dirty Real-ST: 2-column STs}
        \label{fig:auc-default-2-column-st-dirty}
    \end{subfigure}
    
    \begin{subfigure}[b]{0.23\linewidth}
        \centering
        \includegraphics[width=\linewidth]{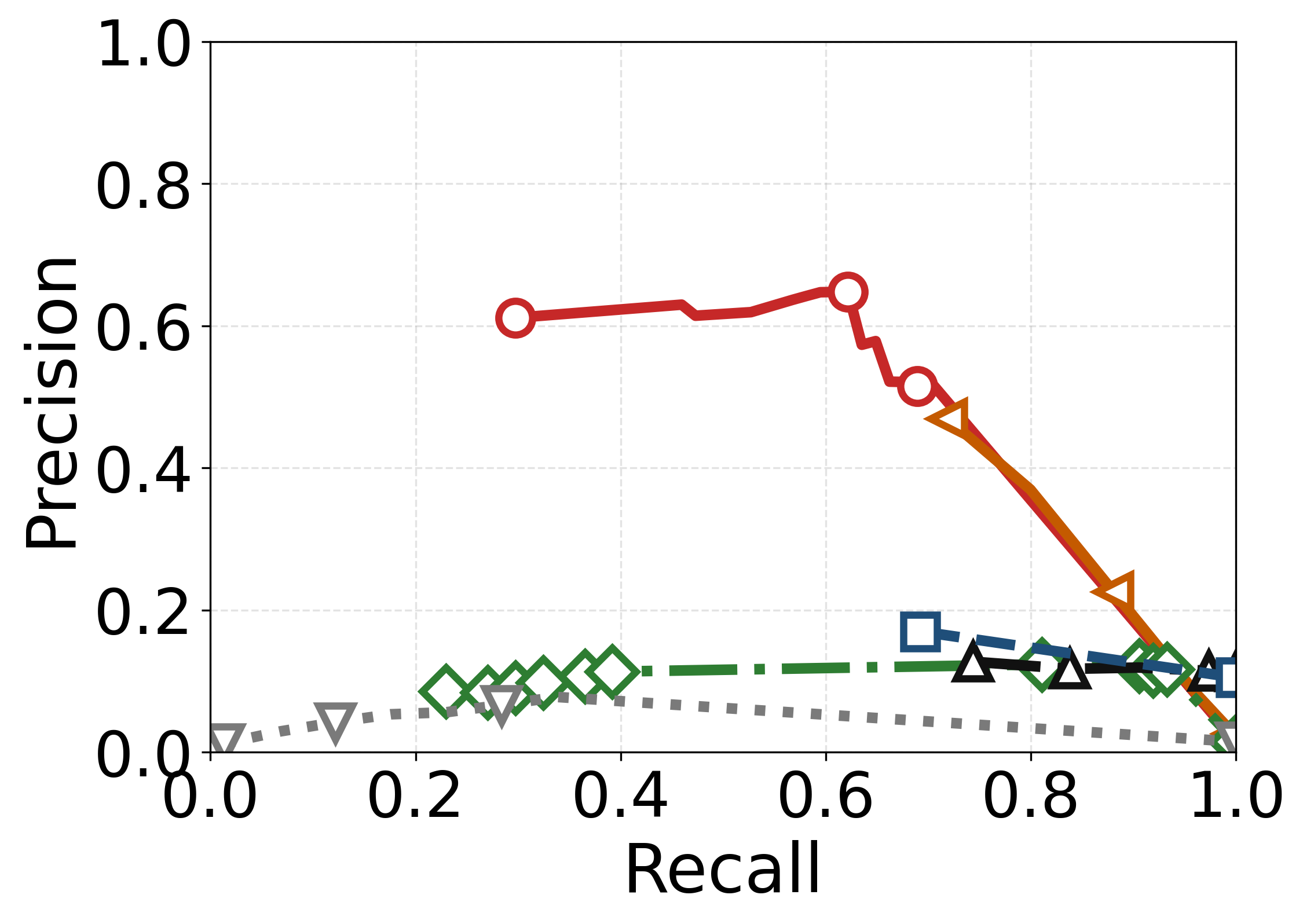}
        \caption{Clean Real-AR: 4-column ARs}
        \label{fig:auc-default-4-column-ar-clean}
    \end{subfigure}
    \hfill
    \begin{subfigure}[b]{0.23\linewidth}
        \centering
        \includegraphics[width=\linewidth]{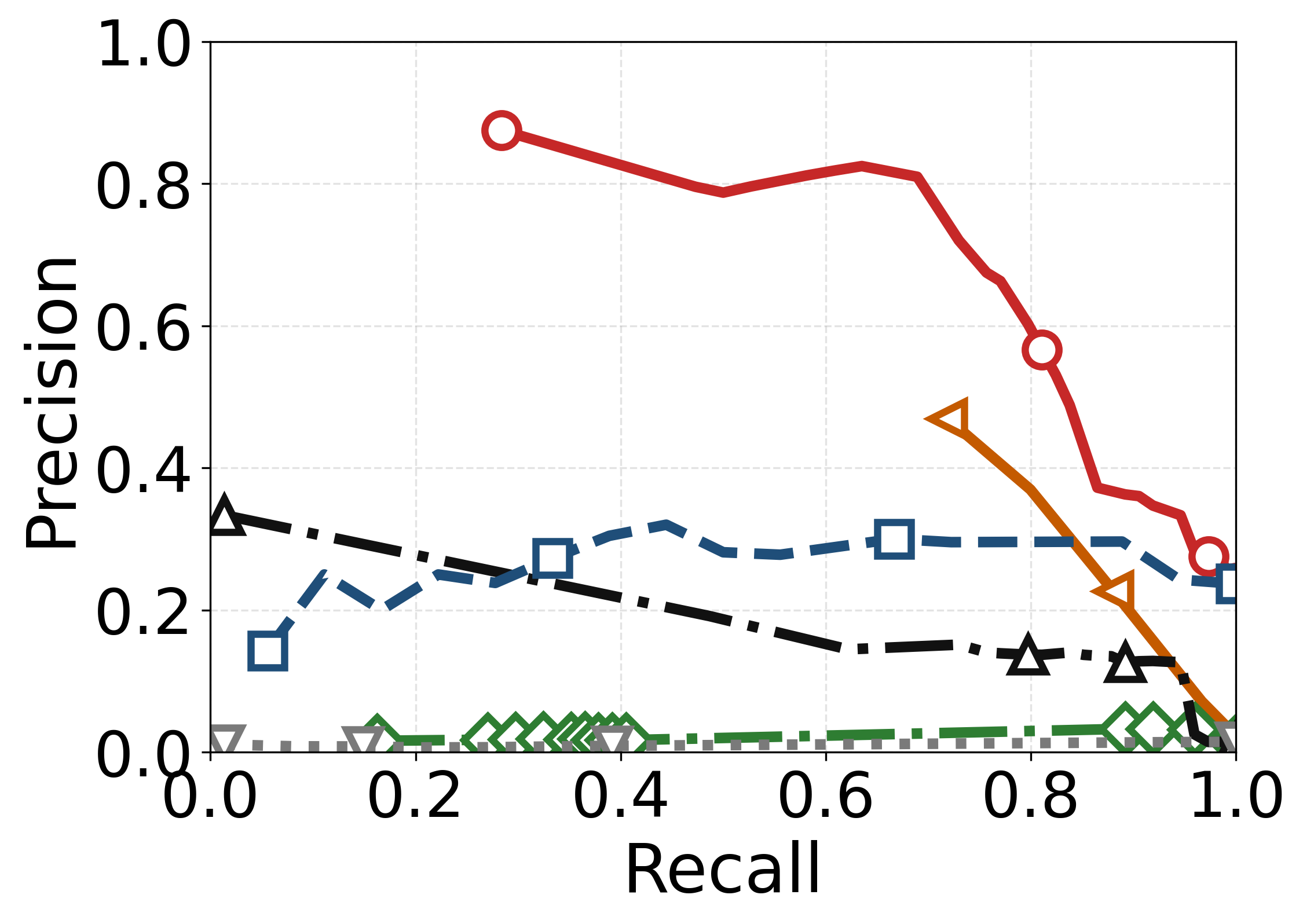}
        \caption{Dirty Real-AR: 4-column ARs}
        \label{fig:auc-default-4-column-ar-dirty}
    \end{subfigure}
    \hfill
    \begin{subfigure}[b]{0.23\linewidth}
        \centering
        \includegraphics[width=\linewidth]{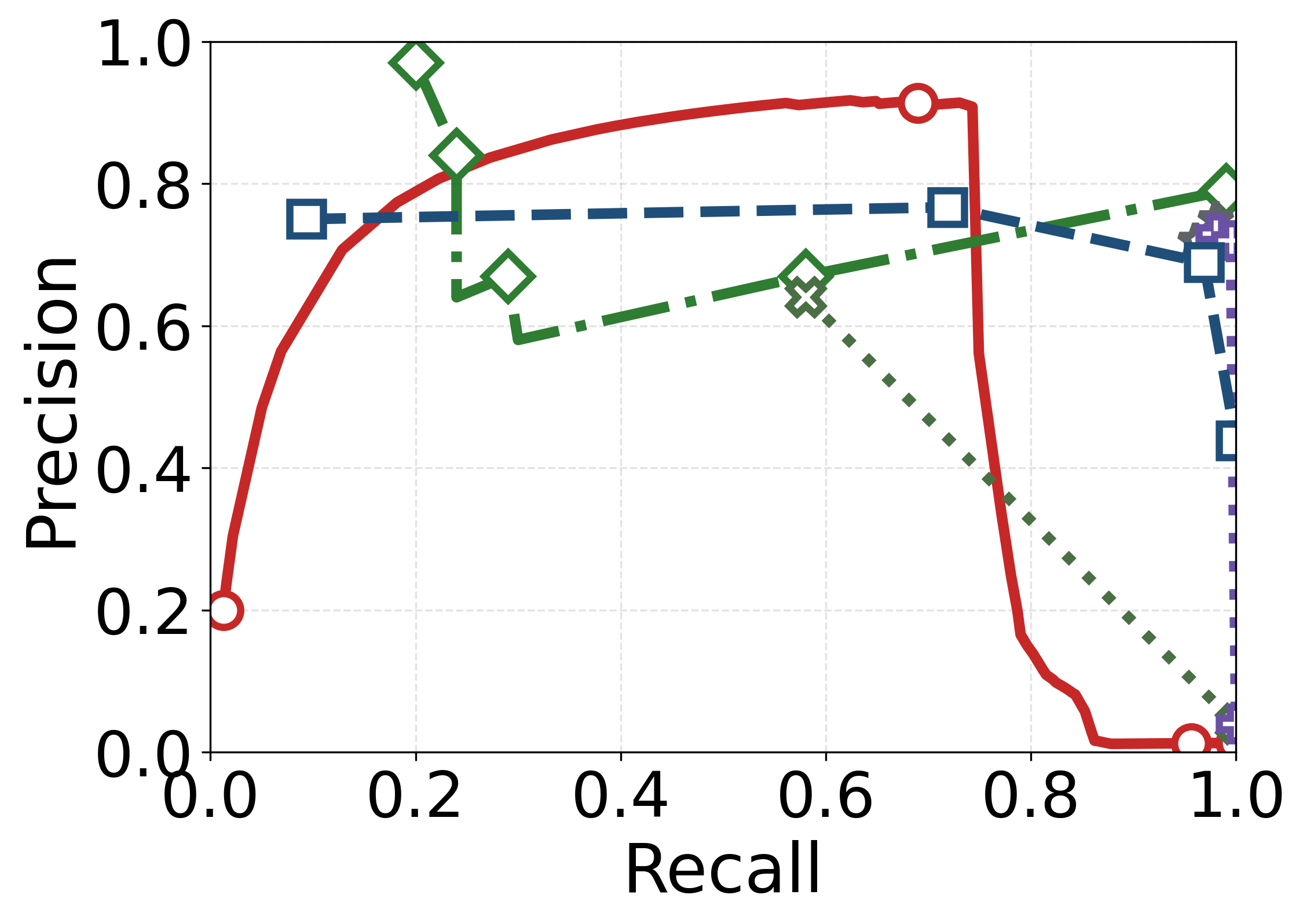}
        \caption{Clean Real-ST: 3-column STs}
        \label{fig:auc-default-3-column-st-clean}
    \end{subfigure}
    \hfill
    \begin{subfigure}[b]{0.23\linewidth}
        \centering
        \includegraphics[width=\linewidth]{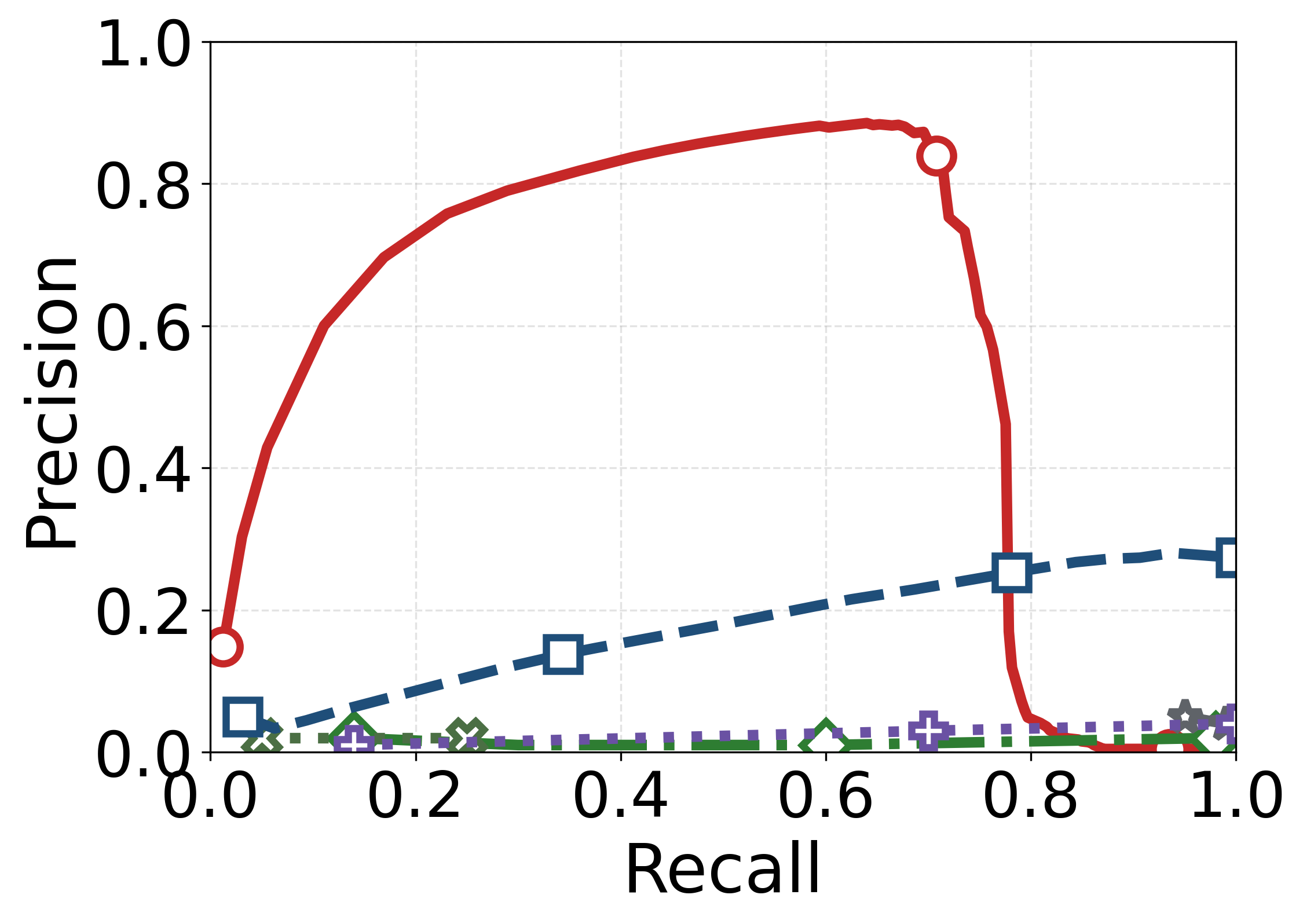}
        \caption{Dirty Real-ST: 3-column STs}
        \label{fig:auc-default-3-column-st-dirty}
    \end{subfigure}

    \vspace{-2ex}
    \caption{Sensitivity to the number of participative columns on \realAR and \realST.}
    \label{fig:sensitivityTest_AR_ST}
    \vspace{-2ex}
\end{figure*}

\eat{
\begin{figure}[t]
\centering
\includegraphics[width=\columnwidth]{figures/exp/sensitivity_columns_AR.pdf}
\caption{Comparison of methods on 3/4 columns functional relationships on the \kw{Real-AR} dataset.}
\label{fig:sensitivityAR}
\end{figure}
}

\eat{
\begin{figure}[t]
\centering
\includegraphics[width=\columnwidth]{figures/exp/sensitivity_columns_ST.pdf}
\caption{Comparison of methods on 2/3 columns functional relationships on the \kw{Real-ST} dataset.}
\label{fig:sensitivityST}
\end{figure}
}

\begin{figure}[!t]
    \centering
    \begin{subfigure}[b]{0.48\linewidth}
        \centering
        \includegraphics[width=\linewidth]{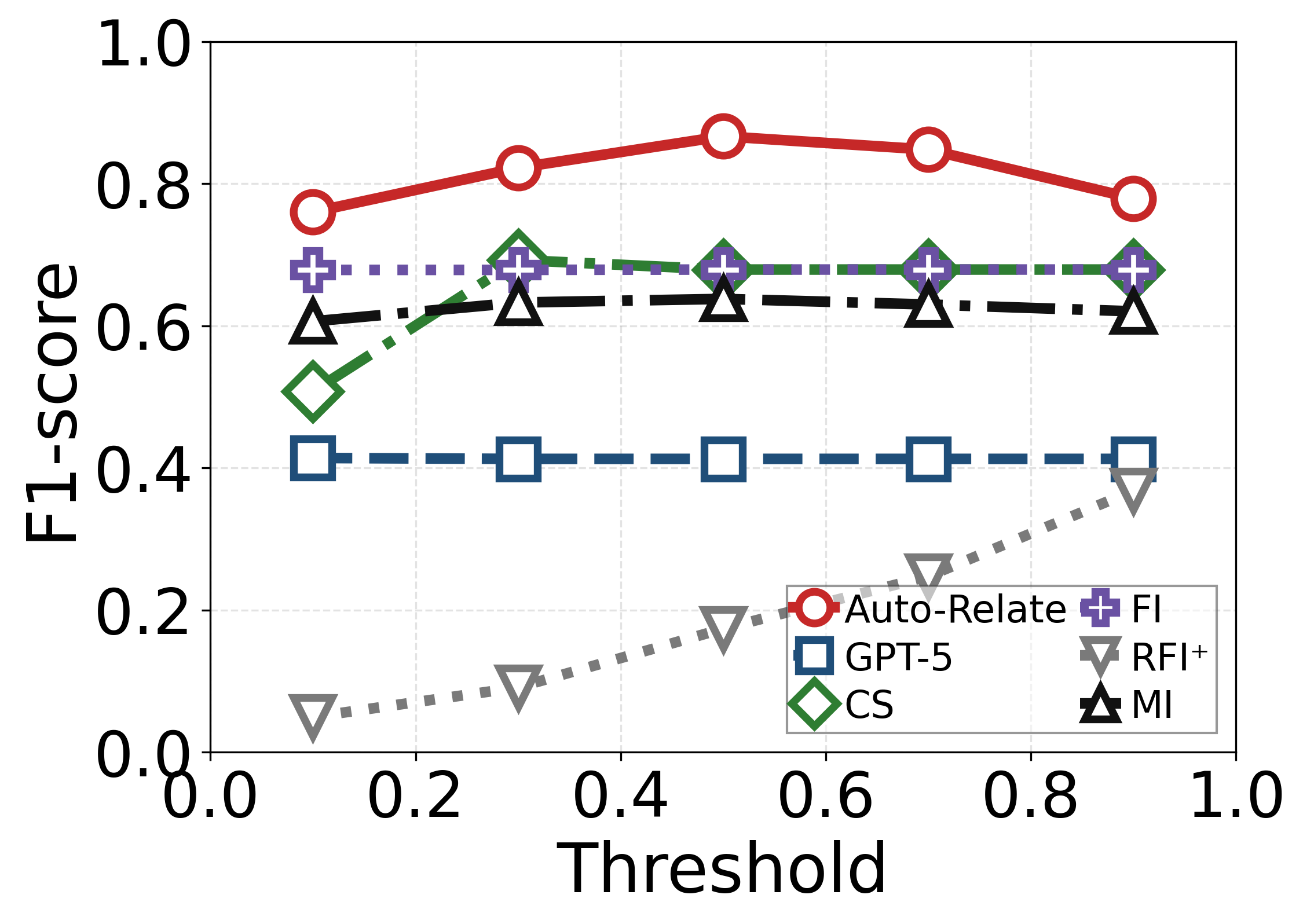}
        \caption{Varying 
        threshold $\eta$}
        \label{fig:sensitivity-real-ar-vary-perturbation-threshold}
    \end{subfigure}
    \hfill
    \begin{subfigure}[b]{0.48\linewidth}
        \centering
        \includegraphics[width=\linewidth]{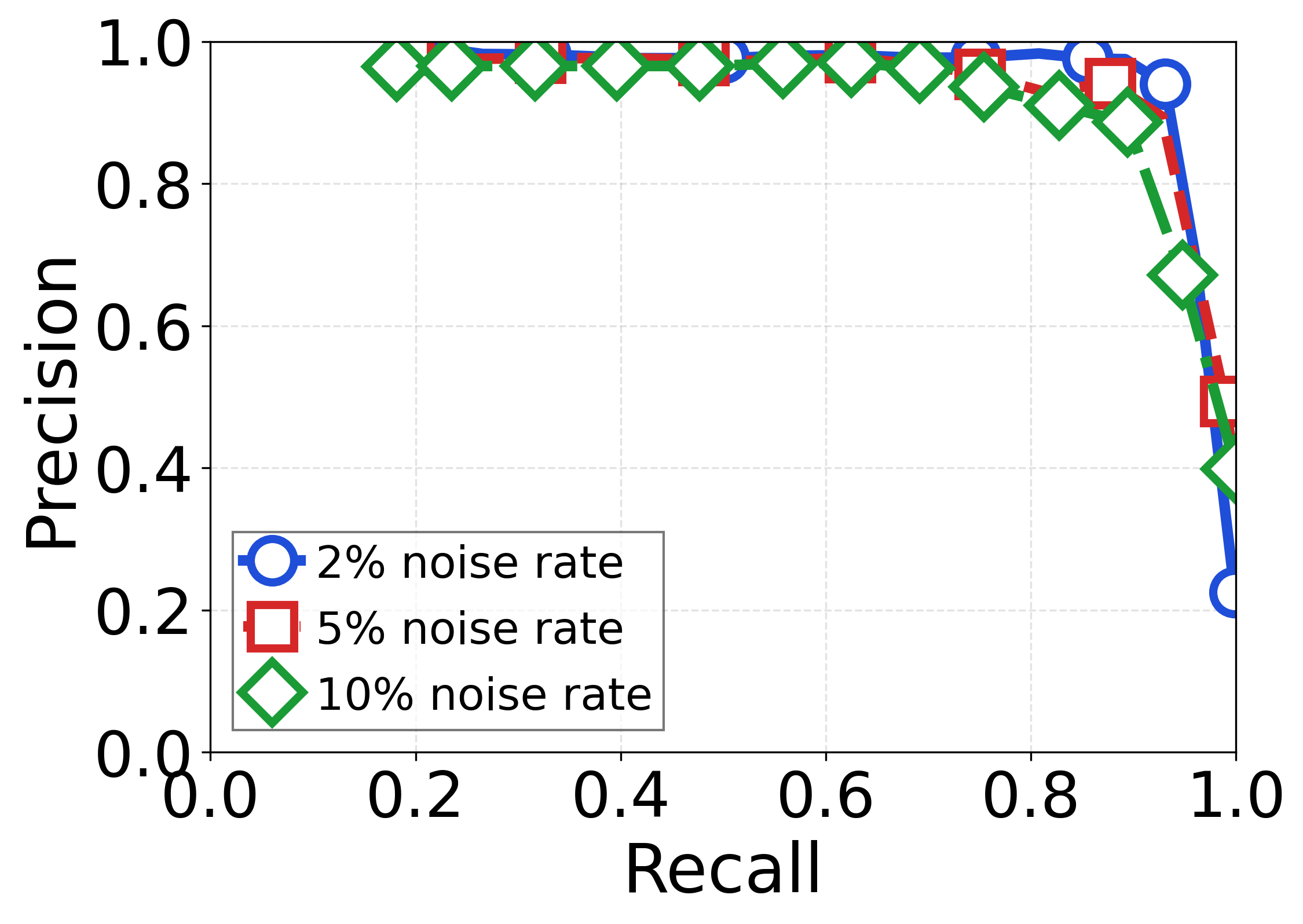}
        \caption{Varying noise rate}
        \label{fig:sensitivity-real-ar-vary-noise-rate}
    \end{subfigure}

    \vspace{-2ex}
    \caption{Sensitivity to threshold $\eta$ and noise rate on \realAR.}
    \label{fig:SensitivityAnalysis}
    \vspace{-2ex}
\end{figure}

\eat{
\begin{figure}[t]
\centering
\includegraphics[width=\columnwidth]{figures/exp/sensitivity_total.pdf}
\caption{Sensitivity analysis on the Real-AR benchmark.}
\label{fig:SensitivityThreshold}
\end{figure}
}

\noindent\underline{\emph{Varying the perturbation threshold.}}
We first varied the perturbation threshold $\eta$ from 0.1 to 0.9 on \realAR and report the resulting F1-scores.
As shown in Figure~\ref{fig:sensitivity-real-ar-vary-perturbation-threshold}, \autorelate consistently achieves the best F1-score across a broad range of threshold settings, 
peaking around the default choice $\eta = 0.5$.
In contrast, most baselines, with the exception of \kw{RFI$^+$}, exhibit relatively stable but consistently lower F1-score curves across different thresholds.
These demonstrate that \autorelate is robust to the perturbation threshold.
\looseness=-1

\noindent\underline{\emph{Varying noise rates.}}
We varied the noise rate in the dirty-data setting from 2\% to 10\% and plot the resulting precision--recall curves of \autorelate on \realAR.
As shown in Figure~\ref{fig:sensitivity-real-ar-vary-noise-rate}, \autorelate exhibits limited sensitivity to the noise rate.
While higher noise rates lead to a slight precision drop in the high-recall region, the overall precision--recall trade-off remains stable, verifying the robustness of \autorelate across varying noise rates.

\eat{
\noindent\underline{\emph{Varying the number of distinct values.}}
Figure~\ref{fig:DistinctValues} reports the PR-AUC and F1-score of \autorelate under different numbers of distinct values in the participative columns, across AR, ST, and FD discovery.
For ARs, both PR-AUC and F1-score remain consistently high across all settings, indicating strong robustness to changes in domain diversity.
For STs, the performance is also stable overall, although some degradation is observed in the largest-cardinality groups, especially in terms of PR-AUC.
For FDs, although the absolute scores are lower than those for ARs and STs, the variation across groups remains limited.
These verify that the effectiveness of \autorelate is not overly sensitive to the diversity of the value domain.
\looseness=-1

\noindent\underline{\emph{Varying the number of rows.}}
As shown in Figure~\ref{fig:DistinctRows}, we further evaluated the performance under varying numbers of distinct rows.
For ARs, \autorelate maintains high PR-AUC and F1-score across all row-count groups, with only a slight drop on the smallest tables, where limited data diversity makes discrimination inherently harder.
For STs, the performance remains stable overall and becomes particularly strong once the table contains a moderate number of distinct rows.
For FDs, both metrics vary only mildly across groups, showing that \autorelate remains effective over a wide range of table sizes.
\looseness=-1

\begin{figure*}[t]
\centering
\includegraphics[width=0.8\textwidth]{figures/exp/distinct_num_values.pdf}
\caption{Comparison of methods on distinct values functional relationships.}
\label{fig:DistinctValues}
\end{figure*}

\begin{figure*}[t]
\centering
\includegraphics[width=0.8\textwidth]{figures/exp/distinct_num_rows.pdf}
\caption{Comparison of methods on the number of distinct rows.}
\label{fig:DistinctRows}
\end{figure*}
}

\noindent\underline{\emph{Varying the number of participative columns.}}
We evaluated how the number of participative columns $|C_\Psi|$ affects the \FR discovery performance.
Figure~\ref{fig:sensitivityTest_AR_ST} reports the precision--recall curves on \realAR and \realST under both clean and dirty settings.
For \AR discovery, we compare the 3-column and 4-column settings; for \ST discovery, we compare the 2-column and 3-column settings.
As expected, with $|C_\Psi|$ increasing, the task becomes more challenging and the performance of all methods generally decreases, especially on dirty data.
Nevertheless, \autorelate consistently achieves the best \prauc and maintains a clear precision--recall advantage across both datasets and settings.
These show that \autorelate remains effective when \FRs involve more participative columns.\looseness=-1

\eat{
For AR discovery, Figures~\ref{fig:auc-default-3-column-ar-clean}--\ref{fig:auc-default-3-column-ar-dirty} and \ref{fig:auc-default-4-column-ar-clean}--\ref{fig:auc-default-4-column-ar-dirty} report the precision--recall curves under 3-column and 4-column settings on both clean and dirty data.
As $|C_\Psi|$ increases from 3 to 4, tasks become more challenging for all methods, and performance generally degrades, as expected.
Nevertheless, \autorelate consistently achieves the best \kw{PR-AUC}, with a particularly clear advantage in the dirty-data setting.
This indicates that \autorelate is robust to the structural complexity of higher-arity arithmetic relationships.
%
For ST discovery, Figure~\ref{fig:auc-default-2-column-st-clean}--\ref{fig:auc-default-2-column-st-dirty} and \ref{fig:auc-default-3-column-st-clean}--\ref{fig:auc-default-3-column-st-dirty} presents a similar comparison under 2-column and 3-column settings.
A consistent trend is observed; as $|C_\Psi|$ increased from 2 to 3, performance degraded across all methods, with the drop being more pronounced on dirty data.
Nevertheless, \autorelate retained the best precision--recall trade-off in all settings, demonstrating that its reliability-driven design generalizes to higher-arity string transformations.
}

\begin{figure}[t]
    \centering
    \begin{subfigure}[b]{0.48\linewidth}
        \centering
        \includegraphics[width=\linewidth]{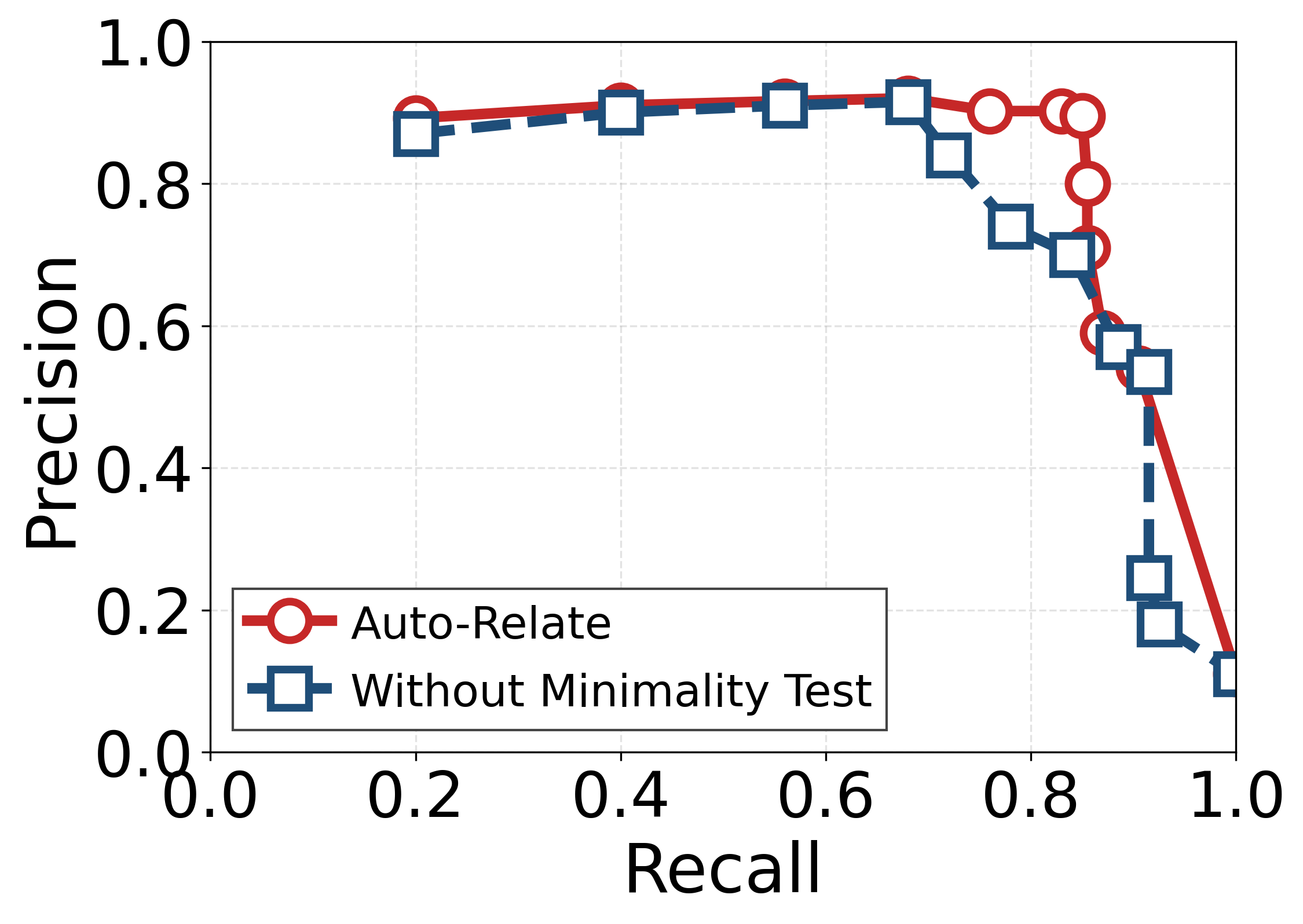}
        \caption{No Minimality Test}
        \label{fig:ablation-no-minimality-test}
    \end{subfigure}
    \hfill
    \begin{subfigure}[b]{0.48\linewidth}
        \centering
        \includegraphics[width=\linewidth]{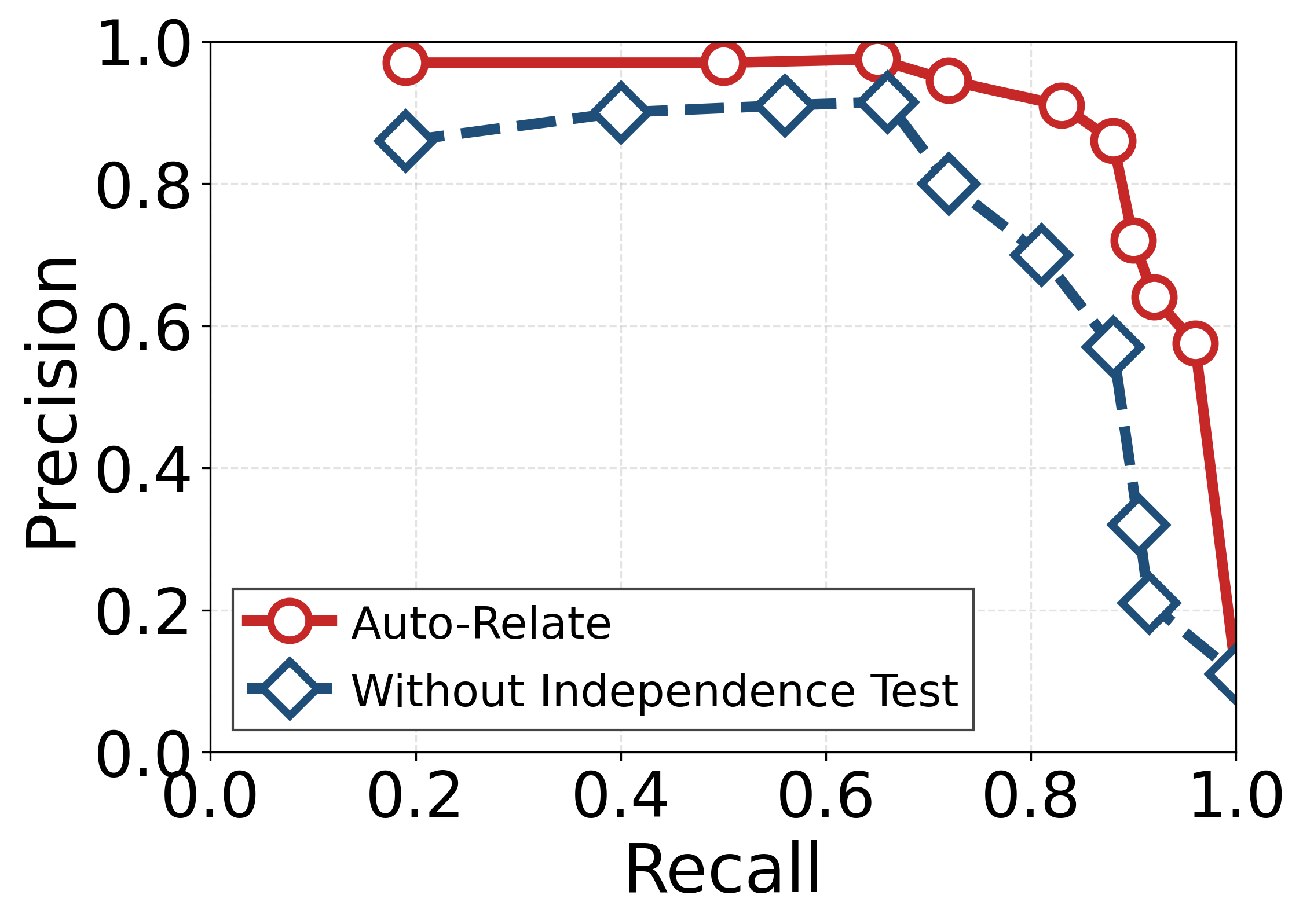}
        \caption{No Independence Test}
        \label{fig:ablation-no-independence-test}
    \end{subfigure}

    \vspace{-2ex}
    \caption{Ablation of reliability tests on \realAR.}
    \label{fig:AblationTest}
    \vspace{-2ex}
\end{figure}

\eat{ 
\begin{figure}[t]
\centering
\includegraphics[width=\columnwidth]{figures/exp/Ablation_total.pdf}
\caption{Ablation study using the Real-AR benchmark.}
\label{fig:AblationTest}
\end{figure}
}

\smallskip
\noindent\textbf{Exp-4: Ablation Study.}
We 
evaluated the impact of reliability tests and optimization strategies in \autorelate.

\noindent\underline{\emph{No Minimality Test.}}
To evaluate the effectiveness of the Minimality Test, we compared \autorelate and its variant without the minimality test, as shown in Figure~\ref{fig:ablation-no-minimality-test}.
The two curves remain relatively close in the low-to-mid recall range, but diverge noticeably in the high-recall region, where the variant without the Minimality Test suffers a clearer precision drop.
This indicates that atomicity-based filtering is effective in suppressing redundant \FRs and improving the overall quality of the returned candidates.

\noindent\underline{\emph{No Independence Test.}}
To evaluate the effectiveness of the Independence Test, we compared \autorelate with a variant that omits the Independence Test, as shown in Figure~\ref{fig:ablation-no-independence-test}.
Removing the Independence Test leads to a clear precision loss, 
with the gap being especially evident between recall 0.6 and 0.9.
This confirms that the Independence Test plays a critical role in filtering incomplete candidates, 
and is therefore particularly important for maintaining the reliability of \autorelate in the dirty-data setting.
\looseness=-1


\eat{
\begin{table*}[t]
\centering
\caption{Ablation test of \autorelate on Real benchmarks}
\label{tab:figure-runtime-real}
\resizebox{\linewidth}{!}{
\begin{tabular}{l|rr|rr|rr}
\toprule
 & \multicolumn{2}{c|}{Real-AR} 
 & \multicolumn{2}{c|}{Real-ST} 
 & \multicolumn{2}{c}{Real-FD} \\
\cmidrule(lr){2-3} \cmidrule(lr){4-5} \cmidrule(lr){6-7}
Method 
& Time (ms) & PR-AUC 
& Time (ms) & PR-AUC 
& Time (ms) & PR-AUC \\
\midrule
\autorelate      &      &      &      &      &      &      \\
No Group-by      & 19.9 &      &      &      &      &      \\
No Closed-form   & 31.3 &      &      &      &      &      \\
No Binomial      & 25.4 &      &      &      &      &      \\
No Optimization  & 35.9 &      &      &      &      &      \\
\bottomrule
\end{tabular}
}
\end{table*}
}

\noindent\underline{\emph{No Optimization Strategies.}} 
We examined whether the proposed optimization strategies affect the quality of \FR discovery.
As shown in Table~\ref{tab:figure-runtime-auc-ablation}, removing individual
or all optimization strategies 
has little effect on \prauc across all settings.
Together with the runtime analysis in Exp-2, this confirms that the proposed optimizations mainly affect efficiency, enabling \autorelate to reduce verification cost while preserving its discovery effectiveness.

\section{Related Work}
\label{sec:related}

\noindent\textbf{Dependency and Relationship Discovery.}
Functional dependencies (\FDs) and approximate functional dependencies (\AFDs) have been extensively studied in data profiling and data management.
Representative discovery algorithms (e.g., \cite{huhtala1999tane,novelli2001fun,wyss2001fastfds,abedjan2014dfd}) 
focus on efficiently enumerating minimal \FDs from relational data, 
with a comprehensive experimental comparison provided by~\cite{papenbrock2015functional}.
More recent work continues to improve \FD discovery, e.g., via hitting-set-based enumeration~\cite{bleifuss2024discovering}, mixed-type extensions~\cite{mandros2020discovering},
and meaningfulness-aware discovery~\cite{wei2023towards},
while relaxed and approximate variants have been surveyed in~\cite{caruccio2015relaxed} and recently advanced in~\cite{caruccio2021discovering,li2025efficient}.
Beyond \FDs, related dependency formalisms have been studied as well, including conditional functional dependencies (\CFDs)~\cite{fan2008conditional,bohannon2007conditional}, denial constraints (\DCs)~\cite{chu2013discovering,dcfinder}, matching dependencies (\MDs)~\cite{schirmer2020efficient,mdedup}, and entity enhancing rules (\REEs)~\cite{REE-discovery-sampling,REE-discovery-topk,REE-discovery-topk-diversified,REE-discovery-topk-diversified-fast}.
These methods are complementary to \autorelate,  
as they mainly focus on enumerating or mining dependency candidates, 
whereas our focus is on reliability verification of discovered relationships.

A related line of work studies \emph{measures} for ranking approximate \kw{FD} candidates, 
including violation-based measures~\cite{giannella2004approximation,kivinen1995approximate,parciak2024measuring}, co-occurrence-based measures~\cite{ilyas2004cords}, information-theoretic measures such as Mutual Information (\kw{MI})~\cite{cheng1997learning} and Fraction of Information (\kw{FI})~\cite{cavallo1987theory,giannella2004approximation}, and probabilistic measures~\cite{piatetsky1993measuring,goodman1979measures}.
A recent comparative study~\cite{parciak2024measuring} systematically evaluates these measures.
While useful for scoring and ranking 
candidates, these methods primarily assess candidates through observed-table signals such as violations, association strength, or uncertainty reduction, rather than explicitly verifying structural reliability.
\eat{
Violation-based measures such as $g_1$, $g_2$, and $g_3$~\cite{giannella2004approximation,kivinen1995approximate,parciak2024measuring} score candidates based on the extent of violations, e.g., the fraction of violating tuple pairs or the amount of row removal needed to restore consistency.
Co-occurrence-based measures, such as $\rho$ in \kw{CORDS}~\cite{ilyas2004cords}, rely on ratios of distinct value combinations.
Information-theoretic measures, including Mutual Information (\kw{MI})~\cite{cheng1997learning} and Fraction of Information (\kw{FI})~\cite{cavallo1987theory,giannella2004approximation}, 
quantify the reduction in uncertainty of the RHS given the LHS.
Probabilistic measures such as $p_{dep}$, $\mu^+$, and $\tau$~\cite{piatetsky1993measuring,goodman1979measures} instead rely on conditional agreement probabilities. 
A recent comparative study~\cite{parciak2024measuring} provides a systematic evaluation of these measures.
}
\eat{Beyond \kw{FD}-specific measures, recent work has studied more general rule discovery under broader evaluation criteria, including objective evaluation metrics~\cite{REE-discovery-sampling}, subjective metrics~\cite{REE-discovery-topk}, diversity-aware ranking~\cite{REE-discovery-topk-diversified}, and user-guided preference modeling~\cite{REE-discovery-topk-diversified-fast}. 
These studies focus on the efficient discovery and ranking of general rules, whereas \autorelate verifies the structural reliability of functional relationships.}

Most closely related to our work are methods for \emph{reliable dependency discovery}. 
\kw{RFI}~\cite{mandros2017discovering,mandros2018discovering,mandros_discovering_2020} and \kw{SMI}~\cite{pennerath2020discovering} improve the robustness of information-theoretic scores under finite or sparse data, while Zhang et al.~\cite{zhang2020statistical} and \kw{DAFDiscover}~\cite{ding2024dafdiscover} target noisy or dirty settings, and the anytime framework~\cite{rana2025anytime} targets resource-constrained settings.
Efficiency-oriented approaches such as~\cite{kruse2018efficient} further improve the scalability of approximate dependency discovery.
\autorelate differs from this line of work in two aspects.
First, prior work ties reliability to score estimation, noise-robust modeling, or efficiency, whereas we characterize reliability through four properties, i.e., accuracy, atomicity, stability, and integrity, and verify them directly on the observed table through dedicated tests.
Second, their scope is primarily \FD/\AFD-style dependency discovery, whereas \autorelate 
unifies \ARs, \STs, and \FDs under a single framework. 
Berti-\'Equille et al.~\cite{berti2018discovery} study genuine \FDs under missing values; by contrast, our Independence Test targets 
incomplete candidates whose violations 
correlate with non-participative attributes.

\smallskip
\noindent\textbf{Table Transformation Discovery.}
Table transformation has been extensively studied in program synthesis, data wrangling, and spreadsheet analysis, largely under the programming-by-example (PBE) paradigm.
Systems such as \kw{FlashFill}~\cite{gulwani2011automating} and its table-level extension~\cite{harris2011spreadsheet} synthesize string transformations from input-output examples,
while \kw{Foofah}~\cite{jin2017foofah} and \kw{CLX}~\cite{jin2018clx} extend this idea to table-level operators and verifiable pattern-based transformations, respectively. \eat{\kw{Foofah}~\cite{jin2017foofah} lifts this idea to table-level operators 
and applies heuristic search over a predefined operator space, while \kw{CLX}~\cite{jin2018clx} emphasizes verifiable PBE by clustering input and output data into patterns and generating regular-expression replacement programs.
Rather than searching only a fixed operator set, \kw{DataXFormer}~\cite{abedjan2016dataxformer} leverages external sources such as web tables and web forms to discover
transformations from user-provided examples, while \kw{TDE}~\cite{he2018transform} builds an extensible search engine that reuses transformation logic from existing sources and ranks candidate transformations against user-provided examples.
\kw{Auto-Pipeline}~\cite{yang2021auto} further generalizes PBE to a by-target setting for synthesizing multi-step pipelines, while \kw{AutoPandas}~\cite{bavishi2019autopandas} synthesizes Pandas programs using neural-backed generators.}
Beyond a fixed operator set, \kw{DataXFormer}~\cite{abedjan2016dataxformer} and \kw{TDE}~\cite{he2018transform} mine transformations from external sources such as web tables and existing transformation logic, \kw{Auto-Pipeline}~\cite{yang2021auto} generalizes PBE to a by-target setting for multi-step pipelines, and \kw{AutoPandas}~\cite{bavishi2019autopandas} synthesizes Pandas programs with neural-backed generators.
From a complementary perspective, \kw{Explain-Da-V}~\cite{shraga2023explaining} explains semantic changes between dataset versions as transformation programs, and we include its conciseness and concentration metrics in our evaluation.
These methods differ from \autorelate in both interaction mode and technical objective. 
PBE-style settings rely on user-provided examples and keep users in the loop to inspect or select synthesized programs, which makes manual validation feasible.
By contrast, \autorelate operates automatically on a single observed table and aims to discover latent functional relationships at scale, 
where distinguishing genuine relationships from coincidental, redundant, or incomplete ones becomes the core challenge.
In particular, while \kw{Explain-Da-V} uses ranking heuristics such as conciseness to prioritize candidate transformations,
\autorelate explicitly verifies candidate reliability 
through dedicated tests of accuracy, atomicity, stability, and
integrity.
\looseness=-1

\smallskip
\noindent\textbf{Spreadsheet Analysis, Table Understanding, and Data Cleaning.} 
Spreadsheets are a common environment in which functional relationships arise, 
and prior work in spreadsheet analysis has studied the prevalence of spreadsheet errors~\cite{panko1998we}, formula error detection~\cite{barowy2018excelint}, cell clustering and smell detection~\cite{cheung2016custodes}, and fault prediction via product metrics~\cite{koch2019metric}.
A parallel line of work on table understanding and data profiling aims to recover the semantics and latent structure of tables, ranging from classical column annotation~\cite{limaye2010annotating} and table profiling~\cite{naumann2014data} to learned column-type inference such as \kw{Sherlock}~\cite{hulsebos2019sherlock} and \kw{Sato}~\cite{zhang2020sato}, and more recent table pre-training and language-model-based approaches~\cite{deng2020turl,suhara2022doduo}.
In data cleaning~\cite{abedjan2016detecting}, dependencies and derived-value relationships have been used as constraints for error detection and repair, e.g., in \kw{HoloClean}~\cite{rekatsinas2017holoclean}, \kw{Raha}~\cite{mahdavi2019raha}, and \kw{Baran}~\cite{mahdavi2020baran}, while related work on data quality~\cite{data-cleaning} and provenance~\cite{provenance-1,provenance-2} relies on similar inter-column relationships for consistency maintenance and lineage tracing.
By contrast, \autorelate treats reliable \FR discovery as the primary problem, providing a unified framework for \ARs, \STs, and \FDs from raw tables, with explicit reliability verification to filter spurious candidates.
\looseness=-1

\eat{In parallel, work on table understanding and data profiling aims to recover the semantics and latent structure of tables, for example by annotating columns with entities, types, and relationships~\cite{limaye2010annotating} or by profiling column- and table-level characteristics for downstream analysis~\cite{naumann2014data}. 
Related motivations also arise in
data quality, consistency maintenance, and provenance research, where dependencies and derived-value relationships can support error detection, repair, and lineage tracing~\cite{data-cleaning,fan2008conditional,provenance-1,provenance-2}.
These studies highlight the importance of inter-column relationships for understanding, validating, and tracing tabular data, but they do not directly study the problem of reliable functional relationship discovery.
Most of them either analyze formulas or table structure that is already present, or use relationships as signals for downstream tasks such as table interpretation, error detection, repair, or provenance analysis.
By contrast, \autorelate treats the reliable discovery of functional relationships itself as the primary problem, and provides a unified framework for discovering arithmetic relationships, string transformations, and functional dependencies directly from raw tables while explicitly filtering spurious candidates through reliability verification.
}
\section{Conclusion}
\label{sec:conclusion}

We proposed \autorelate, a unified framework for discovering reliable functional relationships from tables.
Its main contributions include: 
(1) a unified notion of functional relationships covering arithmetic relationships, string transformations, and functional dependencies;
(2) four reliability criteria, including accuracy, atomicity, stability, and integrity, for distinguishing genuine relationships from spurious ones;
(3) a mine-then-verify framework that first generates candidate relationships and then verifies their reliability through a Minimality Test, a Perturbation Test, and an Independence Test; 
(4) three optimization strategies for efficient reliability verification; and 
(5) a large-scale benchmark suite constructed from 58,679 real-world spreadsheets and relational tables and containing 6,414 ground-truth instances across all three types.
Extensive experiments against 18 baselines 
demonstrate the effectiveness and efficiency of \autorelate, which achieves an average \prauc of 0.87, 
59\% higher than the best competing baseline across all settings.
\looseness=-1



\clearpage

\bibliographystyle{ACM-Reference-Format}
\bibliography{reference}


\end{document}